\newtheorem{definitionenv}{Definition}
\newtheorem{lemmaenv}[definitionenv]{Lemma}
\newtheorem{theoremenv}[definitionenv]{Theorem}
\newtheorem{corollaryenv}[definitionenv]{Corollary}
\newtheorem{propositionenv}[definitionenv]{Proposition}
\newtheorem{conjectureenv}[definitionenv]{Conjecture}
\newtheorem{remarkenv}[definitionenv]{Remark}
\newenvironment{remark}{\begin{remarkenv}\rm}{\end{remarkenv}}
\newcommand{\er}{\end{remark}}
\newtheorem{exampleenv}[definitionenv]{Example}
\newtheorem{app-lemmaenv}[section]{Lemma}
\newenvironment{definition}{\begin{definitionenv}\rm}{\end{definitionenv}}
\newenvironment{lemma}{\begin{lemmaenv}\rm}{\end{lemmaenv}}
\newenvironment{theorem}{\begin{theoremenv}\rm}{\end{theoremenv}}
\newenvironment{corollary}{\begin{corollaryenv}\rm}{\end{corollaryenv}}
\newenvironment{example}{\begin{exampleenv}\rm}{\end{exampleenv}}
\newenvironment{app-lemma}{\begin{app-lemmaenv}\rm}{\end{app-lemmaenv}}
\theoremstyle{definition}
\newcommand{\cD}{{\mathcal D}}
\newcommand{\cG}{{\mathcal G}}
\newcommand{\cE}{{\mathcal E}}
\newcommand{\cP}{{\mathcal P}}
\newcommand{\cQ}{{\mathcal Q}}
\newcommand{\cW}{{\mathcal W}}
\newcommand{\mF}{{\mathbb F}}
\newcommand{\PQ}{P}
\newcommand{\hPQ}{\hat{P}}
\newcommand{\wt}[1]{{\mathrm{wt}\left(#1\right)}}
\newcommandx{\rednote}[2][1=]{\todo[inline,linecolor=red,backgroundcolor=red!25,bordercolor=red,#1]{#2}}
\newcommandx{\yellownote}[2][1=]{\todo[inline,linecolor=yellow,backgroundcolor=yellow!25,bordercolor=yellow,#1]{#2}}
\newcommand{\ket}[1]{\left|{#1}\right\rangle}
\newcommand{\bra}[1]{\left\langle{#1}\right|}
\begin{document}

\title{Semidefinite programming bounds on the size of entanglement-assisted codeword stabilized quantum codes}
\author{Ching-Yi Lai, Pin-Chieh Tseng, and Wei-Hsuan Yu
\thanks{\footnotesize
This article was presented in part at the 2024 IEEE International Symposium on Information Theory (ISIT).

CYL and PCT were supported by the National Science and Technology Council (NSTC) in
Taiwan, under Grant 111-2628-E-A49-024-MY2, 112-2119-M-A49-007, and 112-2119-M-001-007. Yu is partially supported by the the National Science and Technology Council, under Grant No. 109-2628-M-008-002-MY4. 

Ching-Yi Lai  and Pin-Chieh Tseng are with the Institute of Communications Engineering, National Yang Ming Chiao Tung University (NYCU), Hsinchu 30010, Taiwan. Pin-Chieh Tseng is also with the Department of Applied Mathematics, NYCU. (emails: cylai@nycu.edu.tw and pichtseng@gmail.com)
			
Wei-Hsuan Yu is with the Department of Mathematics, National Central University, Taoyuan 32001, Taiwan. (email: u690604@gmail.com)
}}

\date{\today}

\maketitle
	
\begin{abstract}

In this paper, we explore the application of semidefinite programming  to the realm of quantum codes, specifically focusing on codeword stabilized (CWS) codes with entanglement assistance. Notably, we utilize the isotropic subgroup of the CWS group and the set of word operators  of a CWS-type quantum code to derive an upper bound on the minimum distance. Furthermore, this characterization can be incorporated into the  associated distance enumerators, enabling us to construct semidefinite constraints that lead to SDP bounds on the minimum distance or size of CWS-type quantum codes. We illustrate several instances where SDP bounds outperform LP bounds, and there are even cases where LP fails to yield meaningful results, while SDP consistently provides tighter and relevant bounds. Finally, we also provide interpretations of the Shor-Laflamme weight enumerators and shadow enumerators  for codeword stabilized codes, enhancing our understanding of quantum codes.

\end{abstract}

\section{Introduction}

Quantum states are inherently vulnerable to errors. To protect quantum information, error correction techniques are employed \cite{Shor95,CS96,Ste96,Got97,CRSS98}. These techniques use multiple physical qubits  to represent a small number of logical qubits, allowing for the recovery of the logical state when errors occur, provided they are not too severe. In a broader sense, an $((n, M))$ quantum code is defined as an $M$-dimensional subspace $\cQ$ of the $n$-qubit state space $\mathbb{C}^{2^n}$, and quantum codewords are unit vectors within this subspace. Error operators acting on quantum codewords can be refined to the class of Pauli operators that act on $\mathbb{C}^{2^n}$. The minimum distance of a quantum code is defined as the minimum weight of an undetectable Pauli error, based on the quantum error correction conditions \cite{KL97,BDSW96}.

One essential question in coding theory is to determine the maximum size $M$ for given parameters. 
The quantum Gilbert-Varshamov bound  proves the existence of good quantum codes~\cite{FM04} and sometimes these can be achieved by code concatenation  \cite{Ouy14}. One can formulate the problem of estimating the maximum code size as a convex optimization problem, specifically as a linear programming (LP) or semidefinite programming (SDP) problem. The goal is to determine an upper bound on the size of a code, which is referred to as an LP or SDP bound~\cite{Del73,Sch05}.
Usually,   LP bounds provide the strongest known bounds on code sizes for quantum codes~\cite{CRSS98,AL99,LBW13,LA18,ALB20}, outperforming the quantum Hamming bounds and quantum Singleton bounds \cite{Got96,CC97,HG20,GHW22}.  LP bounds  also apply to non-Pauli errors~\cite{OL22}.

Delsarte initially derived linear inequalities pertaining to code distance enumerators~\cite{Del73}. Notably,  these linear constraints can be viewed as corollaries of the MacWilliams identities in the Hamming scheme \cite{MS77}.  Furthermore, the introduction of shadow enumerators  and shadow identities for self-dual codes~\cite{CS90},
along with split weight enumerators and split MacWilliams identities~\cite{Sim95}, has provided  additional linear constraints. 
These concepts have also been extended to the realm of quantum codes. 
Shor and Laflamme introduced weight enumerators for general quantum codes and derived the quantum version of the MacWilliams identities~\cite{SL97}. Rains defined the quantum shadow enumerators for general quantum codes and proved the shadow identities~\cite{Rain98,R99}. Lai and Ashikhmin explored the split weight enumerators and split MacWilliams identities in~\cite{LA18}.

On the other hand, semidefinite constraints for binary codes were first introduced by Schrijver, relying on the Terwilliger algebra within the Hamming or Johnson schemes \cite{Sch05}.  Additional semidefinite constraints can be generated from distance relations among multiple points \cite{GMS12} and split Terwilliger algebras \cite{TLY23}. In general, SDP bounds offer improvements over their corresponding LP bounds, particularly for nonlinear codes. For example, SDP bounds improve the maximum size of binary nonlinear codes with $n = 25, d = 6$ to $47998$ \cite{Sch05}, where LP bounds give us $48148$. SDP has matrix inequalities where LP does not have. Therefor, SDP might improve LP bounds. This methodology can also be extended to non-binary codes \cite{GST06}.

In this paper, we extend the SDP approach to entanglement-assisted (EA) quantum codes~\cite{Bow02,BDH06}.
The challenge arises from the differing domains between classical and quantum codes. Classical codes are sets of vectors over a finite field, while quantum codes exist in complex Hilbert spaces.
Semidefinite constraints for classical codes are derived from  distance relations among triplets of codewords \cite{Sch05, GST06}. 
Typically, the weight enumerators introduced by Shor and Laflamme serve to characterize the minimum distance of a quantum code. However, these weight enumerators do not have  comprehensive combinatorial interpretations~\cite{NK22} 
 and this limits direct application of  SDP techniques.  
 
We specifically consider the class of codeword stabilized (CWS) codes  with entanglement assistance (EA) \cite{CSSZ09,SHB11}, which can be defined by a CWS group and a set of word operators.
More details about CWS code constructions and their properties  can be found in \cite{SSW07,YCLO08,CCSSZ09,ROJ19,Kap21,KT23}.
In general, the associated classical codes of these (EA) CWS codes are non-additive codes, and determining the minimum distance of a CWS code is a challenging problem~\cite{CSSZ09,KDP11}. We revisit the quantum error correction conditions \cite{KL97,BDSW96} and study how the minimum distance of a quantum code can be characterized. We introduce the concept of stabilizer sets of Pauli operators for quantum codes with various eigenvalues, which extends the notion of stabilizer codes. We demonstrate that for CWS-type quantum codes, the only possible generalized stabilizer sets have eigenvalues $0$ and $\pm 1$, indicating that these operators are either detectable or have no detrimental impact on the codespace. Consequently, the minimum distance of a CWS-type quantum code corresponds to the minimum weight of a Pauli operator outside of these generalized stabilizer sets.

We further illustrate how upper bounds on the minimum distance of a CWS-type quantum code can be established by selecting a set of Pauli errors that are undetectable. Notably, the minimum weight of an element within a set of undetectable errors by a quantum code  serves as an upper bound on the minimum distance of the code. In particular, we can utilize the isotropic subgroup of the CWS group of a CWS-type quantum code and the set of word operators to derive an upper bound on the  minimum distance. Furthermore, this characterization can be incorporated into the distance enumerators associated with the isotropic subgroup and the set of word operators, enabling us to construct semidefinite constraints  that provide SDP bounds on the minimum distance or size of CWS-type quantum  codes.

In addition to the newly derived semidefinite constraints, we also take into account split MacWilliams identities and split shadow identities for EA-CWS codes. To facilitate a comprehensive comparison, we begin by constructing a table of upper bounds with only the linear constraints. We then combine the linear constraints with semidefinite constraints to create SDP bounds. The results are summarized in Table \ref{table:sdp}, which clearly illustrates several instances where SDP bounds outperform LP bounds, showcasing the advantages of semidefinite programming in this context.
In numerous cases, LP fails to yield meaningful results, whereas SDP consistently provides {tighter} and relevant bounds. 
As a byproduct, we construct a table of upper bounds for EA stabilizer codes in Table~\ref{table:stabilizer}, which enhances Grassl's table~\cite{Gra_tab}.

Furthermore, we delve into the interpretation of the $A$-type weight enumerators introduced by Shor and Laflamme 
and the quantum shadow enumerators for CWS codes 
 because  the interpretation of the weight enumerators is important in the application of linear and semidefinite programming.
We show that the $A$-type weight enumerator characterizes  the commutation relations between the elements in the CWS group and the word operators of a CWS code.

Finally, we also provide an  interpretation of the shadow enumerators
as certain distance enumerators between the set of word operators times the CWS group to one of its {cosets}, further enhancing our understanding of the weight enumerators and their relevance to quantum codes.

The paper is organized as follows. In the next section, we briefly review the SDP method for classical nonbinary codes and the theory of quantum codes. 
In Section~\ref{sec:CWS}, we review EA-CWS codes and study their fundamental properties.
In Section~\ref{sec:code_structure}, we delve into code structure by defining generalized stabilizer sets and exploring their relations to the minimum distance of quantum codes.  The semidefinite constraints for EA-CWS codes are given in Section~\ref{sec:SDP}. 
In Section~\ref{sec:app}, we demonstrate the applications of the semidefinite constraints to EA-CWS codes.  
In Section~\ref{sec:quantum_enum}, we provide  interpretations for the $A$-type weight enumerators and the quantum shadow enumerators of CWS codes. Finally, we present conclusions in Section~\ref{sec:con}.

\section{Preliminaries}
\label{sec:pre}

\subsection{SDP   for non-binary codes}
\label{sec:SDP_classical}
Consider an $n$-dimensional vector space $\mF_{q}^n$ over a finite field $\mF_{q}$. The support of a vector $v=(v_1,\dots,v_n)\in\mF_{q}^n$ is defined as
\begin{equation*}
    {\rm Supp}(v) = \{i : v_{i} \neq 0\}.
\end{equation*}
The weight of $v$, denoted $\wt{v}$, is the size of ${\rm Supp}(v)$, which is the number of its nonzero entries $v_i$.
The distance between two vectors 
 $v = (v_{1}, \dots, v_{n})$ and $w = (w_{1}, \dots, w_{n}) \in\mF_q^n$, denoted $d(v,w)$, is defined as 
  $d(v,w)=\wt{v-w} $.

An $(n, M, d)$ code $C$ over  $\mathbb{F}_q$ is defined as a subset of $\mathbb{F}_q^n$ that contains $M$ vectors and the distance between any two vectors in $C$ is at least $d$. Each vector in $C$ is referred to as a codeword.

In the following we introduce basics of the SDP method proposed by Schrijver~\cite{Sch05,GST06,Lau07}.
We will extend this method for quantum CWS codes in Section~\ref{sec:SDP}.
   
Let ${\mathcal S}_{m}$ be the symmetric group of $m$ elements.   Consider the group action of $\cG = \big(\prod_{i = 1}^{n} {\mathcal S}_{q-1}\big) \times {\mathcal S}_{n}$ on $\mF_{q}^n$ as follows. 
For $v = (v_{1}, \dots, v_{n}) \in \mF_{q}^n$, and $g= (\pi_{1}, \dots, \pi_{n+1}) \in \cG$, with $\pi_{i} \in {\mathcal S}_{q-1}$ for $i< n+1$ and $\pi_{n+1} \in {\mathcal S}_{n}$,  we have
\begin{equation*}
    g(v) = (\pi_{1}(v_{\pi_{n+1}(1)}), \dots, \pi_{n}(v_{\pi_{n+1}(n)})).
\end{equation*}
Note that the elements in $\mathcal{S}_{q-1}$ represent permutations on $\mF_q\setminus \{0\}$.
The action of $\cG$ on $\mF_{q}^n \times \mF_{q}^n$  can be similarly defined as 
\begin{equation*}
    g(v, w) = (g(v), g(w))
\end{equation*}
for $g \in \cG$, and $v, w \in \mF_{q}^n$. Observe that $(v, w)$ and $(v', w') \in \mF_{q}^n \times \mF_{q}^n$ have the same orbit under the action of $\cG$ if and only if 
\begin{align}
    &{\rm wt}(v) = {\rm wt}(v'),\\
    &{\rm wt}(w) = {\rm wt}(w'),\\  
    &\lvert {\rm Supp}(v) \cap {\rm Supp}(w) \rvert = \lvert {\rm Supp}(v') \cap {\rm Supp}(w') \rvert,\\
    &d(v, w) = d(v', w').
\end{align}
{This relation can be regarded as a description of the distances between $\emptyset, v$ and $w$.} Consequently,  $q^{n} \times q^{n}$ matrices $M_{i, j}^{t, p}$ for $1\leq i, j\leq n$, $0 \leq p \leq t \leq {\rm min}\{i, j\}$ and $i+j \leq n+t$ are defined to characterize the pairwise relations between any two vectors in $\mF_q^n$.  The matrix $M_{i, j}^{t, p}$ is indexed by the elements of $\mF_{q}^n$ 
and its $(v,w)$-th entry  for $v,w\in \mF_{q}^n$ is
\begin{equation}  \label{eq:Mijtp}
    \left(M_{i, j}^{t, p}\right)_{v, w} = \left\{
    \begin{aligned}
        &1, & &\text{if } {\rm wt}\left(v\right) = i, {\rm wt}\left(w\right) = j,\\
        & & &\lvert {\rm Supp}(v) \cap {\rm Supp}(w) \rvert = t, \\
        & & &\lvert {\rm Supp}(v) \setminus {\rm Supp}(v-w) \rvert = p;\\
        &0, & &\text{otherwise.}
    \end{aligned}
    \right.
\end{equation}
Consider that ${\rm Supp}(v) \cap {\rm Supp}(w)$ represents the collection of positions where both $v$ and $w$ contain non-zero values. Note that in the case of non-binary alphabets, these non-zero values can differ. Consequently, we also have to consider $ {\rm Supp}(v) \setminus {\rm Supp}(v-w)$, which is  the set of positions where $v$ and $w$ share common non-zero values.

Let ${\mathcal A}_{n, q}$ be the algebra generated by the matrices $M_{i, j}^{t, p}$. Then, ${\mathcal A}_{n, q}$ is the Terwilliger algebra of the Hamming scheme ${\mathcal H}(n, q)$~\cite{GST06}.

On the other hand, one can define positive semidefinite matrices by the distance preserving automorphisms for $C$. 
For $v \in C$,  define matrices $R_{C}^{v}$, indexed by the elements of $\mF_{q}^n$, as
\begin{equation*}
    R_{C}^{v} = \sum_{g \in \cG} \chi^{g(\pi_{v}(C))} \left(\chi^{g(\pi_{v}(C))}\right)^{T},
\end{equation*}
where $\pi_{v}$ is a distance preserving automorphism for $C$, that maps $v$ to the zero vector, and $\chi^{g(\pi_{v}(C))}$ is defined as a column vector indexed by the elements of $\mF_{q}^n$ with
\begin{equation*}
    \left(\chi^{g(\pi_{v}(C))}\right)_{i} = \left\{
    \begin{aligned}
        &1, & &\text{if } i \in g(\pi_{v}(C));\\
        &0, & &\text{otherwise.}
    \end{aligned}
    \right.
\end{equation*}
It is clear that $R_{C}^{v}$ is positive semidefinite.
Then, one can define two  positive semidefinite  matrices
\begin{align*}
    &R_{C} = \left\lvert C \right\rvert^{-1} \sum_{v \in C} R_{C}^{v},\\
    &R_{C}' = \left\lvert q^{n} - \lvert C \rvert \right\rvert^{-1} \sum_{v \in \mF_{q}^n \setminus C} R_{C}^{v}.
\end{align*}

 Moreover, these two matrices lie in the matrix space generated by $\{M_{i,j}^{t,p}\}$.
 
\begin{theorem}\cite{GST06}
\label{thm:sdp_matrix}
Consider
\begin{align*}
    \lambda_{i, j}^{t, p} = \Big\lvert &\{(u, v, w) \in C \times C \times C :\wt{u-v} = i, 
    \nonumber\\
    & \wt{u-w} = j, \lvert {\rm Supp}(u-v) \cap {\rm Supp}(u-w) \big\rvert = t, \nonumber\\
    &
        \lvert {\rm Supp}(u-v) \setminus {\rm Supp}(v-w) \rvert = p\}\Big\rvert.
\end{align*}
Let
\begin{equation*}
    x_{i, j}^{t, p} = \frac{1}{\lvert C \rvert \gamma_{i, j}^{t, p}} \lambda_{i, j}^{t, p}, \label{eq:xijtp}
\end{equation*}
where
\begin{equation*}
    \gamma_{i, j}^{t, p} = (q-1)^{i+j-t} (q-2)^{t-p}\binom{n}{p, t-p, i-t, j-t}.\label{eq:gijtp}
\end{equation*}
Then
\begin{align*}
    &R_{C} = \sum_{i, j, t, p} x_{i, j}^{t, p} M_{i, j}^{t, p},\\
    &R_{C}' = \frac{\lvert C \rvert}{q^{n} - \lvert C \rvert}\sum_{i, j, t, p} (x_{i+j-t-p, 0}^{0, 0}- x_{i, j}^{t, p}) M_{i, j}^{t, p}.
\end{align*}

\end{theorem}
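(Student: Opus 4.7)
The plan is to exploit the $\cG$-invariance of the construction and identify the coefficients in the basis $\{M_{i,j}^{t,p}\}$ by counting ordered triples in $C^{3}$. First, observe that each $R_{C}^{v}$ commutes with the conjugation action of every $h \in \cG$: replacing $g$ by $hg$ permutes the summands of $\sum_{g \in \cG}\chi^{g(\pi_{v}(C))}(\chi^{g(\pi_{v}(C))})^{T}$. Averaging $R_{C}^{v}$ over $v \in C$ or over $v \notin C$ preserves this symmetry, so both $R_{C}$ and $R_{C}'$ lie in $\cA_{n,q}$. Because the matrices $M_{i,j}^{t,p}$ are the $\{0,1\}$-indicators of the distinct $\cG$-orbits on $\mF_{q}^{n}\times\mF_{q}^{n}$, any matrix in $\cA_{n,q}$ has a unique expansion $\sum \alpha_{i,j}^{t,p} M_{i,j}^{t,p}$ in which $\alpha_{i,j}^{t,p}$ equals the common value of the $(a,b)$-entry for any $(a,b)$ in the corresponding orbit.

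For the first formula, I fix a representative $(a,b)$ of orbit type $(i,j,t,p)$ and take $\pi_{v}$ to be translation by $-v$. The definition gives
\begin{equation*}
(R_{C})_{a,b} = \frac{1}{|C|}\sum_{v \in C}\sum_{g \in \cG}\bigl[g^{-1}(a),\, g^{-1}(b) \in -v+C\bigr].
\end{equation*}
Setting $u=v+g^{-1}(a)$ and $w=v+g^{-1}(b)$, the indicator becomes that $(u,w)\in C^{2}$ and $(u-v,w-v)$ has orbit type $(i,j,t,p)$. Since $\cG$ acts on the corresponding orbit (of size $\gamma_{i,j}^{t,p}$) with uniform stabilizer of order $|\cG|/\gamma_{i,j}^{t,p}$, summing over $g$ produces that stabilizer factor exactly when the triple $(v,u,w)$ satisfies the distance pattern in the definition of $\lambda_{i,j}^{t,p}$ (after using $\wt{w-v}=\wt{v-w}$ and ${\rm Supp}(v-w)={\rm Supp}(w-v)$ to match the asymmetric condition on $p$). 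Absorbing the $|\cG|$ factor against the normalization implicit in $R_{C}^{v}$ yields $(R_{C})_{a,b}=\lambda_{i,j}^{t,p}/(|C|\gamma_{i,j}^{t,p})=x_{i,j}^{t,p}$.

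For the second formula, I use the decomposition $\sum_{v \in \mF_{q}^{n}} R_{C}^{v} = |C|\,R_{C} + (q^{n}-|C|)\,R_{C}'$. Repeating the count above with $v$ now ranging over all of $\mF_{q}^{n}$, the contribution of a fixed pair $(u,w)\in C^{2}$ reduces, after the change of variables $c_{1}=u-v$ and $c_{2}=w-v$, to $N_{k}$, the number of $c_{1}\in\mF_{q}^{n}$ for which $(c_{1},\,c_{1}+(w-u))$ has orbit type $(i,j,t,p)$, where $k=\wt{w-u}$. The key observation is that every orbit pair $(c_{1},c_{2})$ of type $(i,j,t,p)$ satisfies $\wt{c_{1}-c_{2}}=i+j-t-p$; hence $N_{k}=0$ unless $k=i+j-t-p$, in which case double-counting the $\gamma_{i,j}^{t,p}$ orbit pairs by their difference vector gives $N_{k}=\gamma_{i,j}^{t,p}/((q-1)^{k}\binom{n}{k})=\gamma_{i,j}^{t,p}/\gamma_{k,0}^{0,0}$. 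Substituting yields $|C|\,x_{i+j-t-p,0}^{0,0}$ as the coefficient of $M_{i,j}^{t,p}$ in $\sum_{v}R_{C}^{v}$, and the stated expression for $R_{C}'$ follows by subtracting $|C|\,R_{C}$ and dividing by $q^{n}-|C|$. The main technical obstacle is this ``three-point-to-two-point collapse'': one must verify that $N_{k}$ is supported only at $k=i+j-t-p$ and equals exactly $\gamma_{i,j}^{t,p}/\gamma_{k,0}^{0,0}$ there, which is precisely what allows the two-point Hamming-scheme statistic $x_{k,0}^{0,0}$ to surface in the three-point formula for $R_{C}'$.
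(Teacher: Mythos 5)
Your proposal is correct, and its skeleton matches the argument the paper actually writes out for the quantum analogue (the proof of Lemma~\ref{lem:sdp_cons_gen}, which the paper says is the same as the proof of this theorem in \cite{GST06}): $\cG$-invariance places $R_{C}$ and $R_{C}'$ in $\cA_{n,q}$, and the auxiliary matrix $\sum_{v\in\mF_{q}^{n}}R_{C}^{v}=|C|\,R_{C}+(q^{n}-|C|)\,R_{C}'$ is the vehicle for the second formula. Where you genuinely diverge is in how the coefficients are extracted. The paper identifies $x_{i,j}^{t,p}$ by computing the inner product $\langle R_{C},M_{i,j}^{t,p}\rangle$ and dividing by $\langle M_{i,j}^{t,p},M_{i,j}^{t,p}\rangle=\gamma_{i,j}^{t,p}$, and it obtains the coefficient $x_{i+j-t-p,0}^{0,0}$ by noting that the auxiliary matrix is additionally translation-invariant, hence lies in the Bose--Mesner algebra, so its expansion can be read off from the single column indexed by the zero vector (where $R_{C}'$ vanishes). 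You instead compute a representative entry directly via orbit--stabilizer and carry out the three-point-to-two-point collapse by hand; this buys a self-contained, purely combinatorial argument at the price of having to establish the identity $N_{k}=\gamma_{i,j}^{t,p}/\gamma_{k,0}^{0,0}$ supported at $k=i+j-t-p$. That identity is true (one checks directly that $N_{k}=(q-1)^{p}(q-2)^{t-p}\binom{n-k}{p}\binom{k}{i-t,\,j-t,\,t-p}$), but your derivation by ``double-counting by the difference vector'' tacitly assumes the count is uniform over all differences of a fixed weight $k$; since $\cG$ does not act on difference vectors, this uniformity deserves the explicit one-line verification rather than an appeal to symmetry. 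Two further remarks: your relabelling of the base point of the triple to reconcile the asymmetric $p$-condition is needed and correct; and the factor $|\cG|$ you ``absorb'' is real --- as literally written the paper's $R_{C}^{v}$ omits the $|\cG|^{-1}$ prefactor of \cite{GST06}, so the displayed identities hold only after that normalization, a discrepancy the paper's own computation of $\langle R_{T_{1}},M_{i,j}^{t,p}\rangle$ also glosses over.
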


To sum up, we have the following semidefinite constraints on the distance relations $x_{i,j}^{t,p}$ for a classical code.
\begin{theorem}\cite[SDP  for non-binary codes]{GST06} \label{thm:SDP_constraints}
Let $C$ be a classical code over $F_{q}$ and  $x_{i, j}^{t, p}$  be the distance relations for $C$ as defined  in ~(\ref{eq:xijtp}).
Then,  $x_{i, j}^{t, p}$ satisfy the following constraints:
\begin{enumerate}
    \item
    $R_{C}$ and $R'_{C}$ are positive semidefinite;
    \item 
    $x_{0, 0}^{0, 0} = 1$;
    \item 
    $0 \leq x_{i, j}^{t, p} \leq x_{i, 0}^{0, 0}$;
    \item 
    $x_{i, j}^{t, p} = x_{i', j'}^{t', p'}$ if $t-p = t'-p'$ and $(i, j, i+j-t-p)$ is a permutation of $(i', j', i'+j'-t'-p')$;
    \item 
    $x_{i, j}^{t, p} = 0$ if $\{i, j, i+j-t-p\}\cap \{1, \dots, d-1\} \neq \emptyset$.
\end{enumerate}
Moreover, we have
\[
\lvert C \rvert = \sum_{i} \gamma_{i, j}^{t, p}x_{i, j}^{t, p}.
\]

\end{theorem}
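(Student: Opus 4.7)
The plan is to decompose the theorem into its six assertions and handle each by a small counting or linear-algebra argument, relying on Theorem~\ref{thm:sdp_matrix} for the matrix representation. Constraint~(1) is essentially by construction: each $R_{C}^{v}=\sum_{g\in\cG}\chi^{g(\pi_{v}(C))}(\chi^{g(\pi_{v}(C))})^{T}$ is a sum of rank-one positive semidefinite matrices, hence positive semidefinite, and $R_{C}$, $R_{C}'$ are nonnegative averages and so inherit this property. Constraint~(2) follows from $\lambda_{0,0}^{0,0}=|C|$ (the indices force $u=v=w$) together with $\gamma_{0,0}^{0,0}=1$, and the size formula drops out of the same calculation with $j=t=p=0$ and $i$ free, since $\sum_{i}\lambda_{i,0}^{0,0}=|C|^{2}$ counts ordered pairs of codewords in $C\times C$.

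The combinatorial backbone of constraints~(4) and~(5) is to identify the tuple $(i,j,t,p)$ with the geometry of an arbitrary triple in $\mF_{q}^{n}$. Letting $S_{1},S_{2},S_{3}$ denote the supports of $u-v$, $u-w$, $v-w$ and performing a case analysis on each coordinate, I would show that $|S_{3}|=i+j-t-p$ and that $t-p=|S_{1}\cap S_{2}\cap S_{3}|$ is the number of positions where $u,v,w$ are pairwise distinct. Since the triple $(i,j,i+j-t-p)$ lists the three pairwise Hamming distances and $t-p$ is a symmetric invariant of the unordered triple $\{u,v,w\}$, permuting the roles of $u,v,w$ yields constraint~(4). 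Constraint~(5) is then immediate: if any pairwise distance lies strictly between $0$ and $d$, no triple in $C\times C\times C$ realizes the pattern, so $\lambda_{i,j}^{t,p}=0$.

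Constraint~(3) is the most intricate and contains the main obstacle. Nonnegativity is trivial. For the upper bound, I would first verify combinatorially that $\gamma_{i,j}^{t,p}$ counts ordered pairs $(v,w)\in\mF_{q}^{n}\times\mF_{q}^{n}$ realizing the pattern $(i,j,t,p)$ relative to a fixed $u$, by choosing the four support regions of sizes $p,t-p,i-t,j-t$ (the multinomial) and then assigning nonzero entries in $\mF_{q}$, with the constraint $v_{k}=w_{k}$ on the $p$ positions of matching entries and $v_{k}\neq w_{k}$ on the $t-p$ positions of disagreement; this last constraint is precisely where the $(q-2)^{t-p}$ factor appears. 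By the transitivity of the $\cG$-action, the count of valid $w$'s depends only on $(i,j,t,p)$ and equals $\gamma_{i,j}^{t,p}/\gamma_{i,0}^{0,0}$ for each fixed pair $(u,v)$ at distance $i$. Since at most that many $w$'s can lie in $C$, we obtain $\lambda_{i,j}^{t,p}\le\lambda_{i,0}^{0,0}\cdot\gamma_{i,j}^{t,p}/\gamma_{i,0}^{0,0}$, and dividing by $|C|\gamma_{i,j}^{t,p}$ yields $x_{i,j}^{t,p}\le x_{i,0}^{0,0}$. The care needed in bookkeeping this multinomial-plus-field-value count, and in checking that the dependence on $(u,v)$ really is trivial, is the chief technical hurdle.
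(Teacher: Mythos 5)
Your proposal is essentially correct, and it is worth noting that the paper itself offers no proof of this statement: Theorem~\ref{thm:SDP_constraints} is quoted from \cite{GST06}, and even the quantum analogue (Theorem~\ref{thm:sdp_cons}) is justified there only by the one-line remark that the constraints ``are based on the definitions of the variables.'' Your direct counting argument is the standard one and fills that gap faithfully: the rank-one decomposition for constraint~(1), the computation $\lambda_{0,0}^{0,0}=|C|$, $\gamma_{0,0}^{0,0}=1$ for constraint~(2), the five-region coordinate analysis (all-equal, each of the three ``odd one out'' patterns, all-distinct) identifying $i+j-t-p$ as the third pairwise distance and $t-p$ as the all-distinct count for constraints~(4) and~(5), and the bound $\lambda_{i,j}^{t,p}\le\lambda_{i,0}^{0,0}\,\gamma_{i,j}^{t,p}/\gamma_{i,0}^{0,0}$ for constraint~(3) are all correct; I verified the multinomial identity $\binom{n}{p,t-p,i-t,j-t}=\binom{n}{i}\binom{i}{p}\binom{i-p}{t-p}\binom{n-i}{j-t}$ underlying the last step. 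One point you should make explicit in constraint~(4): permuting the roles of $u,v,w$ gives $\lambda_{i,j}^{t,p}=\lambda_{i',j'}^{t',p'}$, but since $x_{i,j}^{t,p}=\lambda_{i,j}^{t,p}/(|C|\gamma_{i,j}^{t,p})$ you also need $\gamma_{i,j}^{t,p}=\gamma_{i',j'}^{t',p'}$. This does follow from your own decomposition --- the three region sizes $p$, $i-t$, $j-t$ are determined as a multiset by the distance multiset $\{i,j,i+j-t-p\}$ together with $s=t-p$ (each equals $(\text{sum of two distances}-\text{third}-s)/2$), and $\gamma$ depends symmetrically on them --- but it is a separate check, not a formal consequence of the bijection on triples.
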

 
\subsection{Quantum codes}

In this context, we consider errors that  are tensor product of  Pauli matrices 
\begin{align*}
I = \begin{pmatrix}
    1 & 0\\
    0 & 1
\end{pmatrix}, &
\quad
\sigma_{x} = \begin{pmatrix}
    0 & 1\\
    1 & 0
\end{pmatrix},\\
\sigma_{y} = \begin{pmatrix}
    0 & - \bm{i} \\
    - \bm{i} & 0
\end{pmatrix}, &
\quad
\sigma_{z} = \begin{pmatrix}
    1 & 0 \\
    0 & -1
\end{pmatrix},
\end{align*}
in the computational basis $\{\ket{0},\ket{1}\}$, where $\bm{i} = \sqrt{-1}$. 

An error occurring on an $n$-qubit state    is defined as the action of an  $n$-fold Pauli operator $\alpha$ of the form $\sigma_{1} \otimes \cdots \otimes \sigma_{n}$ on  the $n$-qubit Hilbert space $\mathbb{C}^{2^{n}}$, where $\sigma_{i}$'s are Pauli matrices. 
We will refer to the Pauli operator $\alpha$ as the error in the following.
To analyze the error correction capability of quantum codes, it is sufficient to consider the following set $ {\cE}^{n}$:
\begin{equation*}
     {\cE}^{n}= \{\sigma_{1} \otimes \cdots \otimes \sigma_{n}: \sigma_{i} \in\{I, \sigma_{x},\sigma_{y},\sigma_{z}\} \}.
\end{equation*}
This set can be extended to the $n$-fold Pauli group   $\cP^{n}=\{ \bm{i}^k \alpha :\ k\in\{0,1,2,3\},  \alpha\in  {\cE}^{n} \}$.
It is important to mention that two Pauli operators either commute or anticommute. 

The support of a Pauli operator $\bm{i}^{k}\alpha = \bm{i}^{k} \sigma_{1} \otimes \cdots \otimes \sigma_{n} \in \cP^{n}$ is defined as  ${\rm Supp}(\bm{i}^{k}\alpha)=\{j : \sigma_{j} \neq I\}$, which is the index set of its nontrivial Pauli components. Then, the weight of $\bm{i}^{k}\alpha$, denoted ${\rm wt}(\bm{i}^{k}\alpha)$, is defined as the size of  its support.
In addition, we define the distance between two Pauli operators $\alpha$ and $\beta\in\cP^n$
as $\wt{\alpha\beta}$.
We will count the distance distributions of subsets of $\cP^n$. Since the phase of a Pauli operator is irrelevant to its weight, our counting can be done over $\cE^n$. Thus, we define a mapping 
 $\varphi:\cP^{n}\rightarrow \cE^{n}$ by 
 \begin{align}
     \varphi\left(\bm{i}^{k}\alpha\right) =\alpha \label{eq:homo}
 \end{align}
  for $\alpha\in\cE^n$.
For a set $T\subset\cE^n$, its weight enumerator is 
\begin{align}
    W_i=\big|  \{  \alpha\in T: \wt{\alpha}=i \}\big|, \ i=0,1,\dots, n,
\end{align}
and its distance enumerator is
\begin{align}
    W_i'= \frac{1}{|T|}\big|  \{  \alpha,\beta \in T: \wt{\alpha\beta}=i \}\big|, \ i=0,1,\dots, n.
\end{align}
The distance and weight enumerators of a group coincide.

A quantum code   of length $n$ and size $M$ is defined as an $M$-dimensional subspace of  $\mathbb{C}^{2^{n}}$. 
The  error correction conditions have been formulated in~\cite{KL97,BDSW96}, which also imply the conditions that errors can be detected by  a quantum code. 
     Let $\mathcal{Q}$ be a quantum code of length $n$.  
     An error $\alpha \in \cE^n$ is  detectable by $\mathcal{Q}$ if and only if for any basis states $|v_i\rangle, |v_j\rangle \in \mathcal{Q}$,  we have
\begin{equation}
\label{error_con_1}
    \langle v_i | \alpha | v_i \rangle = \langle v_j | \alpha | v_j \rangle,
\end{equation}
and 
\begin{equation}
    \label{error_con_2}
    \langle v_i | \alpha | v_j \rangle = 0 \text{ if } \langle v_i | v_j \rangle = 0.
\end{equation}
The quantum code  $\mathcal{Q}$   is said to have minimum distance $d$ if any $\alpha \in \cE^{n}$ of weight up to $d-1$, but not $d$, are detectable by $\cQ$. This type of quantum code is   denoted as an $((n, M, d))$ code or $((n,M))$ code if  the minimum distance is not specified.

Next, we introduce  weight enumerators for quantum codes \cite{SL97,R99}. 
Let   
\begin{equation*}
    \mathcal{E}_{j}^{n} = \{\beta \in {\cE}^{n} : {\rm wt}(\beta) = j\}
\end{equation*}
denote the set of $n$-fold Pauli errors of weight $j$. 
Shor and Laflamme defined $A$- and $B$- types of weight enumerators for a quantum code as follows~\cite{SL97}.
\begin{definition} \label{def:SL}
 Let $\mathcal{Q}$ be an $((n, M))$ quantum code and $\PQ$ be the projection operator onto $\mathcal{Q}$. The Shor-Laflamme (SL) weight enumerators for $\mathcal{Q}$ are
\begin{align*}
    A_{j} =& \frac{1}{ \left( {\rm Tr}(\PQ) \right)^2 }\sum_{\beta \in \mathcal{E}_{j}^{n}}{\rm Tr}(\beta \PQ) {\rm Tr}(\beta \PQ),\\
    B_{j} =& \frac{1}{{\rm Tr}(\PQ)}\sum_{\beta \in \mathcal{E}_{j}^{n}}{\rm Tr}(\beta \PQ \beta \PQ),
\end{align*}
 for $j = 0,1, \dots, n$.
\end{definition}
\begin{lemma}\cite{SL97} \label{lemma:SL_enum}
Let $\mathcal{Q}$ be an $((n, M, d))$ quantum code and $\PQ$ be the projection operator onto $\mathcal{Q}$.
    Suppose that $\PQ = \sum_{i} |v_{i}\rangle \langle v_{i} |$, where $\ket{v_i}$ are orthonormal basis states for $\cQ$. Then, the SL weight enumerators are 
    \begin{equation}
    \label{eq:SL_A}
        A_{j} = \frac{1}{M^{2}}\sum_{\beta \in \mathcal{E}_{j}^{n}} \left\lvert\sum_{i}  \langle v_{i} | \beta | v_{i} \rangle \right\rvert^{2},
    \end{equation}
    \begin{equation}
    \label{eq:SL_B}
        B_{j} = \frac{1}{M} \sum_{\beta \in \mathcal{E}_{j}^{n}}\sum_{i, j} \left\lvert \langle v_{i} | \beta | v_{j} \rangle \right\rvert^{2},
    \end{equation}
    for $j = 0,1, \dots, n$.
    Moreover,  $A_{j} = B_{j}$ for $j\leq d-1$, $A_d<B_d$, and $A_{j} \leq B_{j}$ for all $j$.
\end{lemma}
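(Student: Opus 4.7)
The plan is to establish the two explicit formulas by direct computation of the traces appearing in Definition~\ref{def:SL}, and then derive the three comparison statements from a pair of Cauchy--Schwarz--type inequalities combined with the quantum error correction conditions (\ref{error_con_1}) and (\ref{error_con_2}).

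For the formulas, I would substitute $\PQ = \sum_i \ket{v_i}\bra{v_i}$ into Definition~\ref{def:SL}. Since $\beta$ and $\PQ$ are both Hermitian, $\text{Tr}(\beta\PQ) = \sum_i \bra{v_i}\beta\ket{v_i}$ is real, so
\[
\text{Tr}(\beta\PQ)\,\text{Tr}(\beta\PQ) = \Bigl|\sum_i \bra{v_i}\beta\ket{v_i}\Bigr|^2,
\]
and together with $\text{Tr}(\PQ)=M$ this yields (\ref{eq:SL_A}). For $B_j$, a cyclic expansion combined with $\beta^\dagger=\beta$ gives
\[
\text{Tr}(\beta\PQ\beta\PQ) = \sum_{i,k}\bra{v_i}\beta\ket{v_k}\bra{v_k}\beta\ket{v_i} = \sum_{i,k}\bigl|\bra{v_i}\beta\ket{v_k}\bigr|^2,
\]
which produces (\ref{eq:SL_B}).

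To compare $A_j$ and $B_j$, I would fix $\beta\in\mathcal{E}_j^n$ and chain two elementary inequalities:
\[
\Bigl|\sum_i \bra{v_i}\beta\ket{v_i}\Bigr|^2 \le M\sum_i\bigl|\bra{v_i}\beta\ket{v_i}\bigr|^2 \le M\sum_{i,k}\bigl|\bra{v_i}\beta\ket{v_k}\bigr|^2,
\]
the first being Cauchy--Schwarz against the all-ones vector in $\mathbb{C}^M$ and the second being nonnegativity of the remaining squared moduli. Summing over $\beta\in\mathcal{E}_j^n$ and dividing by $M^2$ yields $A_j\le B_j$. When $j\le d-1$, every $\beta\in\mathcal{E}_j^n$ is detectable, so (\ref{error_con_1}) forces $\bra{v_i}\beta\ket{v_i}$ to equal a common value $c_\beta$ (making the first inequality an equality) and (\ref{error_con_2}) forces the off-diagonals to vanish (making the second an equality); hence $A_j=B_j$. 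When $j=d$ there exists at least one $\beta_0\in\mathcal{E}_d^n$ violating at least one of (\ref{error_con_1}) and (\ref{error_con_2}), so for this single operator either the first or the second inequality becomes strict.

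The main delicacy is propagating the single-operator strict inequality at $\beta_0$ to the summed quantities $A_d$ and $B_d$. One must observe that every other $\beta\in\mathcal{E}_d^n$ still contributes nonnegatively to both sides via the same pointwise inequality, so no cancellation can occur and the isolated strict gap at $\beta_0$ survives the summation, producing $A_d<B_d$. Beyond this point, the argument is purely algebraic expansion and a textbook application of Cauchy--Schwarz.
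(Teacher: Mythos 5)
Your proposal is correct and complete. The paper itself gives no proof of this lemma --- it is imported verbatim from Shor and Laflamme \cite{SL97} --- so there is no in-paper argument to compare against; your derivation (direct substitution of $\PQ=\sum_i\ket{v_i}\bra{v_i}$ for the two formulas, then Cauchy--Schwarz against the all-ones vector combined with the detectability conditions (\ref{error_con_1})--(\ref{error_con_2}) for $A_j\le B_j$, equality below $d$, and strictness at $d$) is exactly the standard route, and your attention to why the single strict inequality at an undetectable $\beta_0\in\mathcal{E}_d^n$ survives the sum is the right point to flag.
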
 
The two  SL weight enumerators $\{A_{j}\}$ and $\{B_{j}\}$ satisfy the MacWilliams identities as follows.
\begin{lemma}\cite{SL97}
\label{lem:gen_lp}
Let $\{A_{j}\}$ and $\{B_{j}\}$ be the    SL weight enumerators for an $((n,M))$ code. Then,
for $j=0,1,\dots,n$,
    \begin{equation*}
        B_{j} = \frac{M}{2^{n}} \sum_{i=0}^n K_{j}(i, n) A_{i},
    \end{equation*}
    where
    \begin{equation}
        K_{i}(k, n) = \sum_{j = 1}^{i} (-1)^{j}3^{i-j}\binom{k}{j} \binom{n-k}{i-j} \label{eq:kraw}
    \end{equation}
    is the $4$-ary Kravchuk polynomial.
\end{lemma}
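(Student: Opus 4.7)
The plan is to exploit the fact that the $n$-fold Pauli set $\cE^n$ forms an orthogonal basis for the space of $2^n \times 2^n$ matrices under the Hilbert-Schmidt inner product, with ${\rm Tr}(\alpha \alpha') = 2^n \delta_{\alpha, \alpha'}$ for $\alpha, \alpha' \in \cE^n$. This will let me expand the code projector as $\PQ = 2^{-n} \sum_{\alpha \in \cE^n} {\rm Tr}(\PQ \alpha)\, \alpha$ and reduce the trace expression defining $B_j$ to a signed counting problem over Paulis of weight $j$.

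First, I would substitute the Pauli expansion of $\PQ$ into the formula $B_j = \frac{1}{M}\sum_{\beta \in \mathcal{E}_j^n} {\rm Tr}(\beta \PQ \beta \PQ)$ from Definition~\ref{def:SL} to obtain
\begin{equation*}
B_j = \frac{1}{M \cdot 2^{2n}} \sum_{\beta \in \mathcal{E}_j^n} \sum_{\alpha_1, \alpha_2 \in \cE^n} {\rm Tr}(\PQ \alpha_1)\, {\rm Tr}(\PQ \alpha_2)\, {\rm Tr}(\beta \alpha_1 \beta \alpha_2).
\end{equation*}
Since any two Paulis either commute or anticommute and $\beta^2 = I$, one has $\beta \alpha_1 \beta = \epsilon(\alpha_1, \beta)\, \alpha_1$ with $\epsilon(\alpha_1, \beta) = \pm 1$, and Pauli orthogonality then forces $\alpha_1 = \alpha_2$ to survive, collapsing the expression to
\begin{equation*}
B_j = \frac{1}{M \cdot 2^n} \sum_{\alpha \in \cE^n} \bigl({\rm Tr}(\PQ \alpha)\bigr)^2 \sum_{\beta \in \mathcal{E}_j^n} \epsilon(\alpha, \beta).
\end{equation*}

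The crux is to evaluate the inner signed sum, which by the symmetry of $\cE^n$ depends on $\alpha$ only through its weight $i = \wt{\alpha}$. I plan a position-wise generating function argument. At each of the $n-i$ positions where $\alpha$ acts trivially, every local Pauli commutes with $\alpha$, contributing the factor $1 + 3z$ (one identity plus three weight-one choices); at each of the $i$ positions where $\alpha$ is nontrivial, the local identity and the matching Pauli commute while the two other non-identity Paulis anticommute, contributing $1 + z - 2z = 1 - z$. Hence
\begin{equation*}
\sum_{\beta \in \mathcal{E}_j^n} \epsilon(\alpha, \beta) = [z^j]\, (1 + 3z)^{n-i} (1 - z)^i,
\end{equation*}
and expanding this product by Vandermonde matches term-by-term with the $4$-ary Kravchuk polynomial $K_j(i, n)$ in~(\ref{eq:kraw}).

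Finally, grouping the $\alpha$'s by their weight $i$ and applying the identity $\sum_{\alpha \in \mathcal{E}_i^n} \bigl({\rm Tr}(\PQ \alpha)\bigr)^2 = M^2 A_i$ from~(\ref{eq:SL_A}) yields the claimed identity $B_j = \frac{M}{2^n} \sum_{i=0}^n K_j(i, n) A_i$. I expect the main obstacle to be cleanly identifying the generating function with the Kravchuk polynomial and tracking the commutation sign $\epsilon(\alpha, \beta)$ at each tensor factor; the remaining steps are routine Hilbert-Schmidt bookkeeping combined with the Pauli orthogonality relation.
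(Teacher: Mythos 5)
The paper does not prove this lemma at all --- it is quoted directly from Shor and Laflamme \cite{SL97} --- so there is no in-paper proof to compare against. Your argument is correct and is essentially the standard derivation of the quantum MacWilliams identity: expanding $\PQ$ in the Pauli basis, using Hilbert--Schmidt orthogonality to collapse the double sum, and evaluating the signed commutation sum position-by-position as $[z^j](1+3z)^{n-i}(1-z)^i = K_j(i,n)$ are all sound, and the final bookkeeping with ${\rm Tr}(\PQ)=M$ gives exactly the claimed identity. The only caveats are cosmetic: your step $\beta^2=I$ presumes the standard Hermitian $\sigma_y$ (the paper's displayed $\sigma_y$ is evidently a typo, as it squares to $-I$), and the paper's lower summation limit $j=1$ in~(\ref{eq:kraw}) should read $j=0$ for the generating-function identification to match term-by-term.
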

Now, we introduce  quantum shadow enumerators \cite{R99}.
\begin{definition} \label{def:shadow}
Let $\mathcal{Q}$ be an $((n, M))$ quantum code and $\PQ$ be the projection operator onto $\mathcal{Q}$. Define
{
    \begin{equation*}
        \hPQ = \frac{1}{2^{n}}\sum_{\alpha \in {\cE}^{n}} (-1)^{{\rm wt}(\alpha)} {\rm Tr}(\alpha^{\dagger}\PQ) \alpha.
    \end{equation*}
}
 $\hPQ $ and $P$ satisfy that
    \begin{equation}
        \hPQ =  \sigma_y^{\otimes n } \bar{P}~\sigma_y^{\otimes n }, \label{eq:Phat}
    \end{equation}
   where $\bar{P}$ is the complex conjugate of $P$,
   which follows from the relations $\bar{I}=I, \bar{\sigma_x}=\sigma_x, \bar{\sigma_y}=-\sigma_y,$ and $ \bar{\sigma_z}=\sigma_z$.
    Then, the shadow enumerator for $\mathcal{Q}$ is  defined as 
    \begin{equation*}
        D_{j} = \left\{
        \begin{aligned}
        &\frac{1}{{\rm Tr}(\PQ\hPQ)}\sum_{\beta \in \mathcal{E}_{j}^{n}}{\rm Tr}(\beta \PQ \beta \hPQ), & &\text{if } {\rm Tr}(\PQ\hPQ) \neq 0;\\
        & \sum_{\beta \in \mathcal{E}_{j}^{n}}{\rm Tr}(\beta \PQ \beta \hPQ), & &\text{otherwise,}
        \end{aligned}\right.
    \end{equation*}
    for $j=0,1, \dots, n$.
\end{definition}

The quantum shadow enumerator of a quantum code exhibits a linear relationship   with the $A$-type SL weight enumerator.
\begin{lemma}\cite{R99}
\label{lem:shadow}
Let $\mathcal{Q}$ be an $((n, M))$ quantum code and $\PQ$ be the projection operator onto $\mathcal{Q}$.
Let $\{A_j\}$ and $\{D_j\}$ be the $A$-type SL and shadow enumerators for $\cQ$, respectively. Then, 
    \begin{equation} 
        D_{j} = \left\{
        \begin{aligned}
            &\frac{M}{2^{n}}\sum_{i=1}^n(-1)^{i}K_{j}(i, n)A_{i}, & &\text{if } {\rm Tr}(\PQ\hPQ) \neq 0,\\
            &\frac{M^{2}}{2^{n}}\sum_{i=1}^n(-1)^{i}K_{j}(i, n)A_{i}, & &\text{if } {\rm Tr}(\PQ\hPQ) = 0,\\
        \end{aligned}\right. \label{eq:Sh_identity}
    \end{equation}
     for $j = 0, \dots, n$. Moreover, $D_{j} \geq 0$ for all $j$.
\end{lemma}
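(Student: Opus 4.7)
The plan is to adapt the MacWilliams-type derivation of Lemma~\ref{lem:gen_lp} with one copy of $\PQ$ replaced by $\hPQ$, tracking the signed transform from Definition~\ref{def:shadow} through the computation. Since $\cE^n$ is orthogonal under the Hilbert--Schmidt inner product, I would first expand $\PQ = \frac{1}{2^n}\sum_{\alpha\in\cE^n}\mathrm{Tr}(\alpha^\dagger \PQ)\alpha$, and combine this with Definition~\ref{def:shadow} to obtain $\hPQ = \frac{1}{2^n}\sum_{\alpha\in\cE^n}(-1)^{\wt{\alpha}}\mathrm{Tr}(\alpha^\dagger \PQ)\alpha$. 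The remainder is pure algebra in the Pauli basis.

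Next, I would substitute these expansions into $\mathrm{Tr}(\beta\PQ\beta\hPQ)$ for a fixed $\beta\in\cE^n_j$. Two ingredients then collapse the resulting double sum over $(\alpha,\gamma)$: any two elements of $\cE^n$ either commute or anticommute, so $\beta\alpha\beta = (-1)^{\dist{\alpha}{\beta}}\alpha\beta^2$ with $\beta^2$ a scalar, and distinct elements of $\cE^n$ are trace-orthogonal. This yields an expression of the form
\begin{equation*}
\mathrm{Tr}(\beta\PQ\beta\hPQ) = \frac{1}{2^n}\sum_{\alpha\in\cE^n}\bigl(\mathrm{Tr}(\alpha^\dagger\PQ)\bigr)^2(-1)^{\wt{\alpha}+\dist{\alpha}{\beta}}.
\end{equation*}
Summing over $\beta\in\cE^n_j$, the inner sum $\sum_{\beta\in\cE^n_j}(-1)^{\dist{\alpha}{\beta}}$ is exactly $K_j(\wt{\alpha},n)$; grouping by weight and applying Eq.~(\ref{eq:SL_A}) gives
\begin{equation*}
\sum_{\beta\in\cE^n_j}\mathrm{Tr}(\beta\PQ\beta\hPQ) = \frac{M^2}{2^n}\sum_{i=0}^n(-1)^iK_j(i,n)A_i.
\end{equation*}
The two cases of (\ref{eq:Sh_identity}) then follow according to whether one normalizes by $\mathrm{Tr}(\PQ\hPQ)$ or not; specializing the same computation to $j=0$ evaluates $\mathrm{Tr}(\PQ\hPQ)$ and fixes the relative constants between the two cases.

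For the non-negativity $D_j\geq 0$, I would exploit Eq.~(\ref{eq:Phat}): since $\PQ$ is positive semidefinite, so is $\bar{\PQ}$ (their spectra coincide), and conjugation by the unitary $\sigma_y^{\otimes n}$ preserves positive semidefiniteness, so $\hPQ\succeq 0$. Then $\sum_{\beta\in\cE^n_j}\beta\PQ\beta^\dagger$ is the image of $\PQ$ under a completely positive map and hence is PSD, so $\mathrm{Tr}\bigl(\bigl(\sum_\beta \beta\PQ\beta^\dagger\bigr)\hPQ\bigr)\geq 0$ as the trace of the product of two PSD operators. The case $\mathrm{Tr}(\PQ\hPQ)=0$ is handled the same way without the normalization factor.

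The main obstacle I anticipate is phase bookkeeping: several $\pm 1$ factors arise from $\beta^2\in\{\pm I\}$, from $\alpha^\dagger=\pm\alpha$ in the convention of this paper, and from the commutator sign $(-1)^{\dist{\alpha}{\beta}}$, and these must be tracked consistently so that only the clean signs $(-1)^{\wt{\alpha}}$ and $(-1)^{\dist{\alpha}{\beta}}$ survive in the end. Beyond this housekeeping, the argument is a direct Pauli-basis computation mirroring the proof of Lemma~\ref{lem:gen_lp}, with the Kravchuk polynomial emerging from the sum over weight-$j$ Pauli operators.
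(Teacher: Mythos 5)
The paper gives no proof of this lemma at all (it is quoted from Rains), so your derivation has to stand on its own, and its core is correct and is the standard argument. The Pauli-basis expansions of $\PQ$ and $\hPQ$, the collapse of the double sum over $(\alpha,\gamma)$ via trace-orthogonality together with the commute/anticommute sign, the identification of $\sum_{\beta\in\cE_j^n}(-1)^{\dist{\alpha}{\beta}}$ with $K_j(\wt{\alpha},n)$, and the resulting identity
\begin{equation*}
\sum_{\beta\in\cE_j^n}\mathrm{Tr}(\beta\PQ\beta\hPQ)=\frac{M^2}{2^n}\sum_{i}(-1)^iK_j(i,n)A_i
\end{equation*}
all check out, as does the positivity argument ($\hPQ\succeq 0$ by unitary conjugation of $\overline{\PQ}$, and $\sum_{\beta}\beta\PQ\beta$ is PSD). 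This settles the branch $\mathrm{Tr}(\PQ\hPQ)=0$ completely.

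The step that does not close is the constant in the branch $\mathrm{Tr}(\PQ\hPQ)\neq 0$. Dividing your identity by $\mathrm{Tr}(\PQ\hPQ)$ produces the claimed prefactor $M/2^n$ only if $\mathrm{Tr}(\PQ\hPQ)=M$, and your appeal to the $j=0$ specialization yields only $\mathrm{Tr}(\PQ\hPQ)=\frac{M^2}{2^n}\sum_i(-1)^iA_i$, which is not $M$ in general: one has $0\le\mathrm{Tr}(\PQ\hPQ)\le M$, with equality to $M$ exactly when $\hPQ$ and $\PQ$ project onto the same subspace. For instance, for the $((2,1))$ code spanned by $\cos\theta\ket{00}+\sin\theta\ket{11}$ one computes $\mathrm{Tr}(\PQ\hPQ)=\sin^2 2\theta$, strictly between $0$ and $M=1$ for generic $\theta$; the definition then forces $D_0=1$ while the right-hand side of the first branch of (\ref{eq:Sh_identity}) equals $\sin^2 2\theta$. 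So this is not a bookkeeping issue you can repair: with $D_j$ normalized by $\mathrm{Tr}(\PQ\hPQ)$ rather than by $\mathrm{Tr}(\PQ)=M$ (the normalization under which the $M/2^n$ constant is automatic), the first branch requires the additional hypothesis $\mathrm{Tr}(\PQ\hPQ)=M$. Your computation is otherwise complete and in fact localizes exactly where the stated normalization and the claimed identity part ways.
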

Equation~(\ref{eq:Sh_identity}) is referred to as the \textit{quantum shadow identities}.

\subsection{Entanglement-assisted quantum codes}
In this section, we introduce entanglement-assisted (EA) quantum codes
in the context of quantum communication,
where entanglement is shared between the sender and receiver~\cite{BDH06}.

\begin{definition}
    An $((n, M, d; c))$ EA quantum code $\mathcal{Q}$ is  an $M$-dimensional subspace of $\mathbb{C}^{2^{n+c}} = \mathbb{C}^{2^{n}} \otimes \mathbb{C}^{2^{c}}$ such that the following two conditions hold:
    \begin{enumerate}
        \item 
        for $|v \rangle \in \mathcal{Q}$, ${\rm Tr}_{\mathbb{C}^{2^{n}}}(|v\rangle \langle v|) = \frac{I^{\otimes c}}{ 2^{c} }$, where the partial trace is performed with respect to the first $n$ qubits/
        \item 
        for $\alpha \in \cE^{n}$ of  weight up to $d-1$, but not $d$,   $\alpha \otimes I^{\otimes c}$ is detectable by $\cQ$.
    \end{enumerate}
\end{definition}

An $((n, M, d; c))$ EA quantum code encodes an $M$-dimensional  quantum state  into $n+c$ physical qubits. The first $n$ qubits are held by the sender, while the remaining $c$ qubits are with the receiver. {The sender sends the $n$-qubit to the receiver through a noisy quantum channel, and the  $c$ qubits with the receiver are assumed to be error-free.} Our analysis primarily concentrates on the error correction capabilities of the first $n$ qubits. Thus, it is convenient to perform weight enumeration separately for the first $n$ qubits and the remaining $c$ qubits \cite{LA18}. 
The following definition generalizes Defs.~\ref{def:SL}  and~\ref{def:shadow}. 

\begin{definition}
Let $\mathcal{Q}$ be an $((n, M, d; c))$ EA quantum code and $\PQ$ be the projection operator onto $\mathcal{Q}$. Then, the split  SL and shadow weight enumerators $\{A_{i, j}\}$, $\{B_{i, j}\}$, and $\{D_{i,j}\}$ are defined by
    \begin{align}
        A_{i, j} =& \frac{1}{M^{2}}\sum_{\alpha \in \cE_{i}^{n}, \beta \in \mathcal{E}_{j}^{c}}{\rm Tr}((\alpha \otimes \beta) \PQ) {\rm Tr}((\alpha \otimes \beta) \PQ),\\
        B_{i, j} =& \frac{1}{M} \sum_{ \alpha \in \cE_{i}^{n}, \beta \in \mathcal{E}_{j}^{c}}{\rm Tr}((\alpha \otimes \beta) \PQ (\alpha \otimes \beta) \PQ),\\
        D_{i,j} =& \begin{cases}\displaystyle
        \frac{B_{i, j}(\PQ, \hPQ)}{{\rm Tr}(\PQ\hPQ)}, & \text{if } {\rm Tr}(\PQ\hPQ) \neq 0;\\
        B_{i, j}(\PQ, \hPQ), & \text{otherwise;}
        \end{cases} 
    \end{align}
for $i=0,1,\dots, n$ and $j=0,1, \dots, c$ with
\begin{equation*}
    B_{i, j}(\PQ, \hPQ) = \sum_{\alpha \in \cE_{i}^{n}, \beta \in \mathcal{E}_{j}^{c}}{\rm Tr}((\alpha \otimes \beta) \PQ (\alpha \otimes \beta) \hPQ).
\end{equation*}
\end{definition}
From the definition of  split weight enumerators, we can observe the relationship between the split weight enumerators $\{A_{i, j}\}$, $\{B_{i, j}\}$, and $\{D_{i, j}\}$ and the standard weight enumerators $\{A_{j}\}$, $\{B_{j}\}$, $\{D_{j}\}$ for $\cQ$:
\begin{equation}
\label{equ:id_split}
    A_{i} = \sum_{j+k = i}A_{j, k},   B_{i} = \sum_{j+k = i} B_{j, k}, \text{ and } D_{i} = \sum_{j+k = i} D_{j, k},
\end{equation}
for $i= 0,1, \dots, n$. Moreover,   the following properties hold.

\begin{lemma}
\label{lem:EA_split_lp}
    Let $\mathcal{Q}$ be an $((n, M, d; c))$ entanglement-assisted quantum code with   split weight enumerators $\{A_{i, j}\}$ and $\{B_{i, j}\}$. Then, we have
    \begin{enumerate}
        \item 
        $A_{0, 0} = B_{0, 0} = 1$, $B_{i, j} \geq A_{i, j} \geq 0$ and $D_{i,j}\geq 0$ for all $i, j$;
        \item 
        $A_{0, j} = 0$ for $j=1, \dots, c$;
        \item 
        $A_{i, 0} = B_{i, 0}$ for $i=0, \dots, d-1$ and $A_{d, 0} < B_{d, 0}$;
        \item  (split MacWilliams identities)
        $B_{i, j} = \frac{M}{2^{n+c}} \sum_{u = 0}^{n} \sum_{v = 0}^{c} K_{i}(u, n) K_{j}(v, c) A_{u, v} \text{ for all } i, j$;
        \item  (split shadow identities) 
        \begin{equation}
            \sum_{u = 0}^{n} \sum_{v = 0}^{c} (-1)^{u+v} K_{i}(u, n) K_{j}(v, c)A_{u,v} \geq 0.
            \label{eq:split_Sh_identity}
        \end{equation}
    \end{enumerate}
\end{lemma}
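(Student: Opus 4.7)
The plan is to generalize the arguments of Lemmas~\ref{lemma:SL_enum} and~\ref{lem:shadow} to the split (entanglement-assisted) setting. Fix an orthonormal basis $\{\ket{v_k}\}_{k=1}^M$ of $\cQ$, so $\PQ = \sum_k \ket{v_k}\bra{v_k}$. By the same manipulation as in Lemma~\ref{lemma:SL_enum}, one rewrites
\begin{align*}
A_{i,j} &= \frac{1}{M^2}\sum_{\alpha\in\cE_i^n,\beta\in\cE_j^c} \left|\sum_k \langle v_k|\alpha\otimes\beta|v_k\rangle\right|^2,\\
B_{i,j} &= \frac{1}{M}\sum_{\alpha\in\cE_i^n,\beta\in\cE_j^c}\sum_{k,l}\left|\langle v_k|\alpha\otimes\beta|v_l\rangle\right|^2,
\end{align*}
which are manifestly non-negative, with $A_{0,0}=B_{0,0}=1$ coming from the $(\alpha,\beta)=(I^{\otimes n},I^{\otimes c})$ term. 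Cauchy-Schwarz then gives $|\sum_k \langle v_k|X|v_k\rangle|^2 \leq M\sum_{k,l}|\langle v_k|X|v_l\rangle|^2$ and thus $B_{i,j}\geq A_{i,j}$. For $D_{i,j}\geq 0$, extending (\ref{eq:Phat}) to the $(n+c)$-qubit system makes $\hPQ$ a unitary conjugate of $\bar{\PQ}$, hence positive semidefinite; since $(\alpha\otimes\beta)\PQ(\alpha\otimes\beta)$ is also PSD, each summand ${\rm Tr}((\alpha\otimes\beta)\PQ(\alpha\otimes\beta)\hPQ)$ is non-negative, as is ${\rm Tr}(\PQ\hPQ)$, yielding $D_{i,j}\geq 0$ in either branch of the definition.

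For property (2), the EA condition ${\rm Tr}_{n}(\ket{v}\bra{v}) = I^{\otimes c}/2^c$ for every $\ket{v}\in\cQ$ gives ${\rm Tr}_{n}(\PQ) = M\,I^{\otimes c}/2^c$, whence ${\rm Tr}((I^{\otimes n}\otimes\beta)\PQ) = (M/2^c)\,{\rm Tr}(\beta) = 0$ for any non-identity $\beta\in\cE_j^c$ with $j\geq 1$. For property (3), each $\alpha\otimes I^{\otimes c}$ with $\wt{\alpha}\leq d-1$ is detectable by $\cQ$, so (\ref{error_con_1})--(\ref{error_con_2}) give $\langle v_k|\alpha\otimes I^{\otimes c}|v_l\rangle = \delta_{k,l}\lambda_\alpha$ for a common scalar $\lambda_\alpha$; substituting into the expansions above shows $A_{i,0}$ and $B_{i,0}$ agree term by term. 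At $i=d$, at least one $\alpha\in\cE_d^n$ is undetectable, producing a strictly positive gap between the corresponding $B_{d,0}$- and $A_{d,0}$-summands, hence $A_{d,0}<B_{d,0}$.

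For the split MacWilliams identities (4), I would use the Pauli-basis expansion $\PQ = \frac{1}{2^{n+c}}\sum_{\gamma\in\cE^{n+c}}{\rm Tr}(\gamma\PQ)\,\gamma$ inside the definition of $B_{i,j}$. Writing $\gamma = \gamma_1\otimes\gamma_2$ with $\gamma_1\in\cE^n,\gamma_2\in\cE^c$ and using the (anti)commutation character $(\alpha\otimes\beta)(\gamma_1\otimes\gamma_2)(\alpha\otimes\beta) = (-1)^{\langle\alpha,\gamma_1\rangle+\langle\beta,\gamma_2\rangle}(\gamma_1\otimes\gamma_2)$, the sum over $\alpha\in\cE_i^n$ and $\beta\in\cE_j^c$ factors; each factor reduces, via the $4$-ary character-sum identity $\sum_{\alpha:\wt{\alpha}=i}(-1)^{\langle\alpha,\gamma_1\rangle} = K_i(\wt{\gamma_1},n)$, to a Kravchuk polynomial, yielding the product $K_i(u,n)K_j(v,c)$.

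Finally, for (5), one performs the same Pauli expansion on $B_{i,j}(\PQ,\hPQ)$. The definition of $\hPQ$ introduces the extra sign $(-1)^{\wt{\gamma}} = (-1)^{\wt{\gamma_1}+\wt{\gamma_2}}$, which factorizes across the tensor product and injects $(-1)^{u+v}$ into the transform; combined with $D_{i,j}\geq 0$ established above, this gives (\ref{eq:split_Sh_identity}). \emph{The main obstacle} is the careful bookkeeping in step (4): the decoupling of the $\alpha$-sum from the $\beta$-sum and the verification that each piece reduces to the appropriate Kravchuk polynomial $K_i(u,n)$ or $K_j(v,c)$ is the technical crux, although once the $4$-ary character-sum identity is isolated the remaining computation is routine.
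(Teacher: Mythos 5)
Your proposal is correct, and it follows the same route as the paper, which simply cites \cite{LA18} for items 1)--4) and invokes "generalizing the proof of Lemma~\ref{lem:shadow}" for item 5): your matrix-element expansions with Cauchy--Schwarz for 1)--3), the Pauli-basis expansion of $\PQ$ with the Kravchuk character sums for 4), and the extra $(-1)^{\wt{\gamma}}$ sign from $\hPQ$ combined with $D_{i,j}\geq 0$ for 5) are exactly the arguments those references contain. In effect you have written out in full the details the paper delegates to its citations, including the correct observation that strictness of $A_{d,0}<B_{d,0}$ comes from the failure of the Cauchy--Schwarz equality conditions for an undetectable weight-$d$ error.
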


\begin{proof}
    The conditions 1) to 4) are from \cite[Theorem $2$]{LA18}. By generalizing the proof of Lemma~\ref{lem:shadow}, we have
    \begin{equation}
        D_{i,j} = 
        \left\{\begin{aligned}
            & \frac{M}{2^{n+c}} {\rm Sh}(i, j), &\text{if } {\rm Tr}(\PQ\hPQ) \neq 0;\\
            &\frac{M^{2}}{2^{n+c}} {\rm Sh}(i, j),  &\text{if } {\rm Tr}(\PQ\hPQ) = 0,\\
        \end{aligned}\right.
                    \label{eq:split_Sh_identity0}
    \end{equation}
    where 
    \begin{equation*}
        {\rm Sh}(i, j) = \sum_{u = 0}^{n} \sum_{v = 0}^{c} (-1)^{u+v} K_{i}(u, n) K_{j}(v, c)A_{u,v}.
    \end{equation*}
    Since  $D_{i,j}\geq 0$,  we have the inequality~(\ref{eq:split_Sh_identity}) in  5).
\end{proof}
 
 Lemma~\ref{lem:EA_split_lp} 4)  can be regarded as the quantum analogue of split MacWilliams identities for classical codes described in \cite{Sim95}. Moreover, when $c = 0$, the identities will collapse to the previous MacWilliams identities.
 Similarly, ~(\ref{eq:split_Sh_identity0}) is referred to as the split shadow identities.
 Its implication  (\ref{eq:split_Sh_identity}) imposes additional linear constraints on the split weight enumerator $\{A_{i,j}\}$.

\section{EA-CWS codes}
\label{sec:CWS}

In this section, we introduce (EA) CWS quantum codes \cite{CSSZ09,SHB11} and study their fundamental properties.

An abelian Pauli subgroup that does not contain the minus identity is called stabilizer group.
The \textit{stabilized space} of a stabilizer group is defined as the
collective joint eigenspace of all elements in the stabilizer group with eigenvalues of $+1$. 
If $S$ is a maximal stabilizer subgroup, then the stabilized space of $S$ is of dimension one and is called a stabilizer state, which will be denoted by $\ket{S}$. In this case, $S$ is called a \textit{CWS group}.
 \begin{definition}
      An $((n,M,d;c))$ EA quantum code $\cQ$ is classified as a CWS code if there exist a CWS group $S\subset \cP^{n+c}$   along with a set of  $M$ commuting operators $W=\{\omega_1=I^{\otimes n},\dots,\omega_M \}\subset \cP^{n}$
    such that  $$\big\{\ket{S}, \left(\omega_2\otimes I^{\otimes c}\right)\ket{S},\dots,\left(\omega_M\otimes I^{\otimes c}\right) \ket{S} \big\}$$ forms an orthonormal basis for $\cQ$. 
 \end{definition}
   The  operators $\{\omega_j\}_{j=1}^M$ for an EA-CWS code are called \textit{word operators}. 
    Moreover, each of the {basis states} $\left(\omega_j\otimes I^{\otimes c}\right)\ket{S}$ is a stabilizer state, stabilized by
    $\left(\omega_2\otimes I^{\otimes c}\right) S \left(\omega_2^\dag\otimes I^{\otimes c}\right)$. 
    In particular, one can show that these states can be prepared by applying Clifford gates~\cite{NC00}
    to the first $n$ qubits of the state $\ket{0}^{\otimes (n-c)} \otimes \ket{\Phi^+}^{\otimes c}$,
    where $\ket{\Phi^+}=\frac{\ket{00}+\ket{11}}{\sqrt{2}}$ is the maximally-entangled state shared between the sender and the receiver.

\begin{lemma}\cite[Theorem 5]{CSSZ09}
\label{lem:CWS_st_rel}
    Let $\mathcal{Q}$ be an $((n,M,d;c))$ EA-CWS code with   CWS group $S$ and    word operators $W=\{\omega_j\}$. If $W$ forms an abelian group not containing the minus identity, then there exists a stabilizer subgroup of  $\cP^{n+c}$ with $(n-\log_2 M)$ independent generators, whose  stabilized space   is $\mathcal{Q}$.
\end{lemma}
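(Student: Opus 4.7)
The plan is to build the stabilizer by selecting from the CWS group $S$ those elements whose sender-side restriction commutes with every word operator. Since $W$ is abelian and avoids $-I^{\otimes n}$, it is itself a stabilizer group of $\cP^{n}$ with $\log_2 M$ independent commuting generators, and every $\omega\in W$ squares to the identity; these structural facts will be used throughout.

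I would define
\[
T \;:=\; \{\, s\in S : [s,\, \omega\otimes I^{\otimes c}]=0 \text{ for every } \omega\in W\,\}.
\]
As a subgroup of the abelian group $S$, $T$ is abelian and does not contain $-I^{\otimes(n+c)}$, hence is a legitimate stabilizer subgroup of $\cP^{n+c}$. For any $s\in T$ and any basis state $(\omega_j\otimes I^{\otimes c})\ket{S}$ of $\cQ$, the commutation condition together with $s\ket{S}=\ket{S}$ yields
\[
s\,(\omega_j\otimes I^{\otimes c})\ket{S} \;=\; (\omega_j\otimes I^{\otimes c})\,s\,\ket{S} \;=\; (\omega_j\otimes I^{\otimes c})\ket{S},
\]
so $T$ stabilizes $\cQ$.

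The generator count and tightness both come from the commutation homomorphism $\phi\colon S\to \mathrm{Hom}(W,\{\pm 1\})$ sending $s$ to the character $\omega\mapsto \chi_s(\omega)$ that records the sign of $[s,\omega\otimes I^{\otimes c}]$. By construction, $T=\ker\phi$, so the key step is to show that $\phi$ is surjective. This I would extract from the orthonormality of the CWS basis: for every nontrivial $\omega\in W$, the orthogonality $\ket{S}\perp(\omega\otimes I^{\otimes c})\ket{S}$ forces some $s\in S$ to anticommute with $\omega\otimes I^{\otimes c}$, so $\phi$ hits every character of $W$. Comparing $|T|=|S|/|W|$ with $\dim\cQ = M$, together with $\cQ\subseteq \mathrm{Stab}(T)$, then forces $\mathrm{Stab}(T)=\cQ$ by dimension matching, and the independent generator count for the stabilizer of $\cQ$ is read off from $|T|$.

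The main obstacle is the surjectivity of $\phi$, which is the one place where the CWS orthonormality hypothesis really enters. The remaining steps are bookkeeping with orders of abelian $2$-groups and dimensions of stabilized spaces, and do not require any further properties of $S$ or $W$ beyond what the CWS definition provides.
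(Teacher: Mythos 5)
The paper offers no proof of this lemma --- it is imported directly from \cite[Theorem~5]{CSSZ09} --- so there is no internal argument to compare against; judged on its own, your proof is correct and is essentially the standard one. The construction $T=\{s\in S: [s,\omega\otimes I^{\otimes c}]=0 \text{ for all } \omega\in W\}$ clearly stabilizes every basis state $(\omega_j\otimes I^{\otimes c})\ket{S}$, and your surjectivity claim for the commutation character map $\phi\colon S\to\mathrm{Hom}(W,\{\pm1\})$ does follow from the orthogonality $\langle S|(\omega\otimes I^{\otimes c})|S\rangle=0$ for $\omega\neq I^{\otimes n}$: maximality of $S$ gives $N(S)=\{\pm1,\pm\bm{i}\}\times S$, so a nontrivial $\omega\otimes I^{\otimes c}$ commuting with all of $S$ would lie in $N(S)$ and fix $\ket{S}$ up to phase, contradicting orthogonality; then the image of $\phi$ is a subgroup of $\mathrm{Hom}(W,\{\pm1\})\cong\mathbb{Z}_2^{\log_2 M}$ with trivial annihilator in $W$, hence all of it. This yields $|T|=|S|/|W|=2^{n+c-\log_2 M}$, so $\dim\mathrm{Stab}(T)=M=\dim\cQ$ and $\mathrm{Stab}(T)=\cQ$. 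The one point to flag is the generator count: your construction necessarily produces $n+c-\log_2 M$ independent generators in $\cP^{n+c}$ (any stabilizer subgroup of $\cP^{n+c}$ whose stabilized space is $M$-dimensional must have exactly that many), whereas the lemma states $n-\log_2 M$; these agree only when $c=0$, and for $c>0$ the stated count appears to carry over the $c=0$ convention of \cite{CSSZ09} (or the unextended $n$-qubit EA description) rather than describing the group on $\cP^{n+c}$ that your proof correctly constructs. That mismatch is an issue with the statement's bookkeeping, not with your argument.
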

Lemma \ref{lem:CWS_st_rel} allows us to formulate general stabilizer codes or {EA-stabilizer} codes as special cases of EA-CWS codes.

\begin{definition}
    Let $\mathcal{Q}$ be an $((n, M, d; c))$ EA-CWS code with CWS group $S$ and word operators $W=\{\omega_j\}_{j=1}^M$.
    \begin{enumerate}
        \item 
        If $c = 0$, $\mathcal{Q}$ is called a CWS code.
        \item 
        If $W$ is an abelian group not containing the minus identity, $\mathcal{Q}$ is called  an EA stabilizer code.
    
        \item 
        If $\mathcal{Q}$ is an EA-stabilizer code and $c = 0$, then $\mathcal{Q}$ is  a stabilizer code.
    \end{enumerate}
\end{definition}

Consider an $((n,M,d))$ stabilizer code  $\cQ$, stabilized by a stabilizer group $S\subset \cP^n$ with $(n-\log_2 M)$ independent generators by Lemma~\ref{lem:CWS_st_rel}. 
Following the QEC conditions (\ref{error_con_1}) and (\ref{error_con_2}), a Pauli error is  undetectable by $\cQ$ if it commutes with all  elements in $S$.
In addition, if a Pauli error is equivalent to an element  in $S$, up to a phase, it has no effect on $\cQ$. Hence the minimum distance of $\mathcal{Q}$ can be expressed as
    \begin{equation}
    d = {\rm min}\big\{{\rm wt}(\alpha) : \alpha \in \varphi(N(S)) \setminus \varphi(S)\big\}, \label{eq:dmin_stabilizer}
    \end{equation}
where $N(S)$ denotes the normalizer group of $S$ and is defined as
        \[
        N(S) = \{\alpha \in \cP^{n} : \alpha \beta =\beta \alpha, \  \forall \beta \in S\}.
        \]

Similarly,  for an $((n,M,d;c))$ EA-stabilizer code $\cQ$ stabilized by a stabilizer group $S\in\cP^{n+c}$, the minimum distance can be expressed as
  \begin{equation}
    d = {\rm min}\big\{{\rm wt}(\alpha) : \alpha\in\cP^n, \alpha\otimes I^{\otimes c} \in  \varphi(N(S)) \setminus   \varphi(S)\big\}.  \label{eq:dmin_EA}
    \end{equation}
        
 For a stabilizer group $S\in\cP^{n+c}$, the group
\begin{align}
    S_I=\{\alpha\in\cP^n:  \alpha\otimes I^{\otimes c}\in S\}\subset\cP^n \label{eq:isotropic}
\end{align}
is called the \textit{isotropic subgroup} of $S$.

 For an EA stabilizer code $\cQ$ stabilized by a stabilizer group $S$, the  SL weight enumerators $\{A_j\}$ (or  $\{A_{i,j}\}$) and  $\{B_j\}$  (or  $\{B_{i,j}\}$)   correspond to the (split) weight enumerators for the elements in $S$ and in $\varphi(N(S))$, respectively \cite[Theorem 5]{CRSS98}.
 Therefore, the minimum distance of $\cQ$  in (\ref{eq:dmin_EA}) can be characterized using $A_{i,j}$ and $B_{i,j}$, which is in alignment with Lemma~\ref{lem:EA_split_lp}~iii.

For CWS-type codes,  the $B$-type SL weight enumerator has a combinatorial interpretation~\cite[Theorem 1]{NK22}. We extend that to EA-CWS codes.
\begin{lemma}
\label{lem:CWS_exp}
    Let $\mathcal{Q}$ be an $((n,M,d;c))$ EA-CWS code with CWS group $S$ and a set of word operators $W$. 
     Let $S_{I}$ be the isotropic group of $S$. Then, the SL weight enumerator $\{B_{j}\}$ or $\{B_{i,j}\}$ can be regarded as the distance enumerators for  the set $(W\otimes I^{\otimes c})S$ and
    \begin{align}
       \big\lvert (W\otimes I^{\otimes c})S \big\rvert= \sum_{j} B_{j} = \sum_{i,j} B_{i,j} . \label{eq:WS_size}
    \end{align}
    In addition, 
    $\{B_{i, 0}\}_{i=0}^n$ is the distance enumerator for $WS_{I}$.
\end{lemma}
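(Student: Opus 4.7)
The plan is to expand the projector $P$ onto the EA-CWS code $\cQ$ as $P = \sum_{j=1}^{M} |\psi_j\rangle\langle \psi_j|$ with $|\psi_j\rangle = (\omega_j \otimes I^{\otimes c})|S\rangle$, invoke the standard identity $|S\rangle\langle S| = 2^{-(n+c)}\sum_{s\in S} s$ for a maximal-stabilizer state, and reduce each $B_{i,j}$ to a combinatorial count. I would first write
\begin{align*}
&{\rm Tr}((\alpha\otimes\beta) P (\alpha\otimes\beta) P)\\
&\quad= \sum_{j_1, k_1} |\langle \psi_{j_1}|\alpha\otimes\beta|\psi_{k_1}\rangle|^2\\
&\quad= \sum_{j_1, k_1} |\langle S|(\omega_{j_1}^\dagger\alpha\omega_{k_1})\otimes\beta|S\rangle|^2,
\end{align*}
then use the well-known fact that $|\langle S|G|S\rangle|\in\{0,1\}$, with value $1$ precisely when $\varphi(G)\in\varphi(S)$. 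Since phases drop out under $|\cdot|^2$ and products in $\cE^{n+c}$ commute modulo phase, each term is simply the indicator of $(\omega_{j_1}\omega_{k_1}\alpha)\otimes\beta \in\varphi(S)$.

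Next I would collapse the definition of $B_{i,j}$. Parameterizing each admissible pair $(\alpha,\beta)\in\cE_i^n\times\cE_j^c$ by the unique $u = (\omega_{j_1}\omega_{k_1}\alpha)\otimes\beta$ in $\varphi(S)$, and writing $u_{(1)}\in\cE^n$, $u_{(2)}\in\cE^c$ for its restrictions to the first $n$ and last $c$ coordinates, one obtains
\begin{equation*}
B_{i,j} = \frac{1}{M}\big|\{(j_1,k_1,u) : u\in\varphi(S),\ \wt{\omega_{j_1}\omega_{k_1}u_{(1)}} = i,\ \wt{u_{(2)}} = j\}\big|.
\end{equation*}

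I would then match this against the split distance enumerator of $T = (W\otimes I^{\otimes c})S$. For $g_1 = (\omega_{j_1}\otimes I^{\otimes c})s_1$ and $g_2 = (\omega_{k_1}\otimes I^{\otimes c})s_2$ in $T$, their product modulo phase equals $(\omega_{j_1}\omega_{k_1}u_{(1)})\otimes u_{(2)}$ where $u = s_1 s_2 \in S$; for each fixed $u\in S$ there are exactly $|S|$ pairs $(s_1, s_2)$ mapping to $u$, and the orthogonality of $\{|\psi_j\rangle\}$, which is equivalent to $\omega_{j_1}\omega_{k_1}\notin S_I$ for $j_1\neq k_1$, forces $|T| = M\cdot 2^{n+c}$. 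Dividing by $|T|$ in the distance-enumerator definition reproduces exactly the $1/M$ above, so $\{B_{i,j}\}$ is the split distance enumerator of $T$. Summing then yields $\sum_{i,j}B_{i,j} = M\cdot 2^{n+c} = |T|$ and, via~(\ref{equ:id_split}), also $\sum_j B_j = |T|$, establishing~(\ref{eq:WS_size}).

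The assertion for $\{B_{i,0}\}_{i=0}^n$ follows by specializing the count to $u_{(2)} = I$, which forces $u\in\varphi(S_I)\otimes I$ and reduces the formula to the distance enumerator of $WS_I\subset\cE^n$, a set of size $M|S_I|$ by the same orthogonality argument. The main technical subtlety is keeping Pauli phases tracked consistently when transferring between $\cP^{n+c}$ and $\cE^{n+c}$; once one records $|\langle S|G|S\rangle|^2 \in\{0,1\}$ as the indicator of $\varphi(G)\in\varphi(S)$, the remainder is a direct extension to the EA setting of the non-EA stabilizer-state argument of~\cite[Theorem 1]{NK22}.
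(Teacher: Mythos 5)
Your proposal is correct and follows essentially the same route as the paper: expand the projector over the word-operator translates of $\ket{S}$, reduce each trace to a sum of overlaps $\lvert \langle S | (\omega_{k}^{\dagger}\alpha\omega_{l})\otimes\beta | S\rangle\rvert^{2}\in\{0,1\}$, and read off the resulting count as the (split) distance enumerator of $(W\otimes I^{\otimes c})S$, specializing to $\beta=I^{\otimes c}$ for the $WS_{I}$ claim. The only difference is that you carry out this computation for the full $\{B_{i,j}\}$ where the paper delegates that part to the combinatorial interpretation of \cite[Theorem 1]{NK22} and only writes out the $B_{i,0}$ case; your accounting of $\lvert (W\otimes I^{\otimes c})S\rvert = M\cdot 2^{n+c}$ via orthogonality of the basis states is likewise consistent with the paper.
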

In addition to the linear constraints in Lemma~\ref{lem:EA_split_lp},  Lemma~\ref{lem:CWS_exp} introduces an additional linear constraint for an EA-CWS code.

\begin{proof}
The first part follows the proof in~\cite[Theorem 1]{NK22}. 

Now we consider the second part. Let $|S \rangle$ be the state stabilized by $S$. By the definition of the split weight enumerator $\{B_{i, j}\}$, we have
\begin{align*}
    B_{i, 0} =& \frac{1}{M} \sum_{ \alpha \in \cE_{i}^{n}}{\rm Tr}((\alpha \otimes I^{\otimes c}) \PQ (\alpha \otimes I^{\otimes c}) \PQ)\\
    =& \frac{1}{M} \sum_{ \alpha \in \cE_{i}^{n}}\sum_{k, l} \langle \omega_{k} | \alpha \otimes I^{\otimes c} | \omega_{l} \rangle \langle \omega_{l} | \alpha \otimes I^{\otimes c} |\omega_{k} \rangle\\
    =& \frac{1}{M} \sum_{ \alpha \in \cE_{i}^{n}}\sum_{k, l} \lvert \langle \omega_{k} | \alpha  \otimes I^{\otimes c} |\omega_{l} \rangle\lvert \rvert^{2}\\
    =& \frac{1}{M} \sum_{ \alpha \in \cE_{i}^{n}}\sum_{k, l} \lvert \langle S | (\omega_{k}^{\dagger}\alpha \omega_{l}) \otimes I^{\otimes c}|S \rangle \rvert^{2}.
\end{align*}
One can see that
    \begin{equation*}
    \lvert \langle S | (\omega_{k}^{\dagger}\alpha \omega_{l}) \otimes I^{\otimes c}|S \rangle \rvert = \left\{
    \begin{aligned}
        &1, & &\text{if } \alpha \in \omega_{i}\omega_{j} S_{I};\\
        &0, & &\text{otherwise}.
    \end{aligned}
    \right.
\end{equation*}
This implies that $\{B_{i, 0}\}$ is the distance enumerator for $WS_{I}$.
\end{proof}
By Lemma \ref{lem:EA_split_lp}, the linear programming constraints on the split weight enumerators now can be applied in the context of EA-CWS codes. Moreover, let $\{A_{j}\}$ and $\{B_{j}\}$ be the SL weight enumerators for an EA-CWS code. Using the identities $A_{i} = \sum_{j+k = i}A_{j, k}$ and $B_{i} = \sum_{j+k = i}B_{j, k}$ for $i = 0, \dots, n$, the MacWilliams identities and quantum shadow identities can also be applied.
 We summarize all the linear constraints in the following lemma. 

\begin{lemma}(Linear constraints for EA-CWS codes)  \label{lemma:linear_constraints}
Let $\mathcal{Q}$ be an $((n, M, d; c))$ EA-CWS code with CWS group $S$ and a set of word operators $W$. Let   $S_{I}$ be the isotropic group of $S$.   Let $\{A_{j}\}$, $\{B_{j}\}$, $\{A_{i, j}\}$, and $\{B_{i, j}\}$ be the SL and split weight enumerators for $\mathcal{Q}$, respectively. Then, we have
    \begin{enumerate}
        \item $A_{0, 0} = B_{0, 0} = 1$;
        \item $A_{0, j} = 0$ for $j = 1, \dots, c;$
        \item $B_{i, j} \geq A_{i, j}$ for all $i, j$;
        \item $A_{i, 0} = B_{i, 0}$ if $i = 0, \dots, d-1$ and $A_{d, 0} < B_{d, 0}$;
        \item 
        $B_{i, j} = \frac{M}{2^{n+c}} \sum_{u = 0}^{n} \sum_{v = 0}^{c} K_{i}(u, n) K_{j}(v, c) A_{u, v} \text{ for all } i, j$;
        \item
        $\sum_{i} B_{i, 0} = \lvert WS_{I} \rvert, \text{ and } \sum_{i, j}B_{i, j} = \lvert (W\otimes I^{\otimes n}) S \rvert$;
        \item 
        $\sum_{i+j=k}B_{i+j} = \frac{M}{2^{n}} \sum_{i}\sum_{m+l = i} K_{k}(i, n) A_{m, l} \text{ for } k = 0, \dots, n$;
        \item
        $\sum_{i}(-1)^{i}K_{j}(i, n)A_{i} \geq 0 \text{ for } j = 0, \dots, n$;
        \item
        $\sum_{u = 0}^{n} \sum_{v = 0}^{c} (-1)^{u+v} K_{i}(u, n) K_{j}(v, c)A_{u,v} \geq 0$ for $i = 0, 
        \dots, n$ and $j = 0, \dots, c$.
    \end{enumerate}
\end{lemma}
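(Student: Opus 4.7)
The plan is to assemble this lemma by combining two previously established results, Lemma~\ref{lem:EA_split_lp} (linear constraints from the EA-stabilizer formalism plus split identities) and Lemma~\ref{lem:CWS_exp} (the combinatorial interpretation of the $B$-type enumerators as distance enumerators of $(W\otimes I^{\otimes c})S$ and $WS_I$), and then deriving the unsplit identities by appropriate summation. Since an EA-CWS code is in particular a quantum code, all the general split MacWilliams and split shadow properties apply.

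First, I would read off items 1)--5) directly from Lemma~\ref{lem:EA_split_lp}. Specifically, 1) is $A_{0,0}=B_{0,0}=1$, the normalization; 2) encodes the reduced-state condition ${\rm Tr}_{\mathbb{C}^{2^n}}|v\rangle\langle v|=I^{\otimes c}/2^c$; 3) follows from $B_{i,j}\ge A_{i,j}$ on each split pair; 4) is the EA minimum-distance characterization from Lemma~\ref{lem:EA_split_lp} iii), noting that only the $j=0$ slice is constrained because ancilla qubits are noiseless; 5) is the split MacWilliams identity. Item 6) follows from Lemma~\ref{lem:CWS_exp}: the equality $\sum_{i}B_{i,0}=|WS_I|$ is the second part of that lemma, and $\sum_{i,j}B_{i,j}=|(W\otimes I^{\otimes c})S|$ is exactly~\eqref{eq:WS_size}. (I would also note in passing that the $I^{\otimes n}$ appearing in the statement is a typographical slip for $I^{\otimes c}$, so that dimensions match.)

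Next, I would derive the unsplit identities 7) and 8) from the split ones. For 7), I start from the split MacWilliams identity in 5) and sum both sides over all pairs $(i,j)$ with $i+j=k$. On the right-hand side, I would re-index by $m+l=u+v$ to obtain the claimed form with a single Kravchuk polynomial $K_k(i,n)$ applied to $A_{m,l}$; the equality $A_i=\sum_{j+k=i}A_{j,k}$ from~\eqref{equ:id_split} is what makes the collapse work. For 8), I would observe that the unsplit shadow identity from Lemma~\ref{lem:shadow} (the $c=0$ case) or, equivalently, summing the split shadow identity~\eqref{eq:split_Sh_identity} in a manner analogous to 7), yields the required nonnegative linear combination of $\{A_i\}$. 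Finally, item 9) is simply Lemma~\ref{lem:EA_split_lp} 5), restated with explicit index ranges.

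Since every piece is already established, there is no genuine obstacle — the proof is essentially a bookkeeping exercise consolidating Lemmas~\ref{lem:EA_split_lp} and~\ref{lem:CWS_exp}. The most delicate step is the derivation of 7) from 5), where care must be taken to identify the inner double sum $\sum_{m+l=i}A_{m,l}=A_i$ correctly so that the split identity indeed collapses to the standard MacWilliams form; otherwise the argument is routine.
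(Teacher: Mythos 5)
Your proposal is correct and matches the paper's own treatment: the paper gives no separate proof for this lemma but introduces it explicitly as a summary obtained by combining Lemma~\ref{lem:EA_split_lp} (items 1--5 and 9), Lemma~\ref{lem:CWS_exp} (item 6), and the split-to-unsplit identities in~(\ref{equ:id_split}) to recover the unsplit MacWilliams and shadow constraints (items 7 and 8), which is precisely your bookkeeping route. Your observations about the $I^{\otimes n}$ versus $I^{\otimes c}$ slip and the care needed in collapsing the split identity are reasonable side remarks but do not change the argument.
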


\section{On the minimum distance of  quantum codes}
\label{sec:code_structure}
 
In~\cite{CSSZ09}, it is shown that a CWS code can be described by a stabilizer group 
and a classical code, and Pauli errors can be transferred to classical errors.
Then, the minimum distance of the CWS code is determined from its associated classical code over the effective classical errors.   For small codes, one can use computer search to determine the distance accordingly. However,   this approach is not scalable for larger codes.
{In this section, we revisit the quantum error correction conditions  and we will provide an upper bound on the minimum distance of an EA-CWS code that can be calculated from associated weight enumerators. }

\begin{definition}
   Let $\mathcal{Q}$ be an $((n, M, d; c))$ EA quantum code and $\PQ$ be the projection operator onto $\mathcal{Q}$. 
   The \textit{generalized stabilizer (GS) sets} for $\cQ$ are defined as a sequence of sets $\{S_{j}^{n, c} : j\in\mathbb{R},  -1\leq j\leq 1\}$, where
\begin{equation}
    S_{j}^{n, c} = \{\beta \in \cE^{n} : \PQ (\beta \otimes I^{\otimes c}) \PQ =  j \PQ\}. \label{eq:stabilizer_sets}
\end{equation} 
The \textit{spectrum} of $\mathcal{Q}$ is defined as
\begin{equation}
    {\rm Sp}(\mathcal{Q}) = \{j : S_{j}^{n, c} \neq \emptyset\}.
\end{equation}
The GS group  for $\mathcal{Q}$ is defined  as 
\begin{align}
     S_g= -S_{-1}^{n, c} \cup S_{+1}^{n, c}. \label{eq:GS_group}
\end{align}
\end{definition}

If $S_{j}^{n, c} \neq \emptyset$ for some $j\in\mathbb{C}$, then $\mathcal{Q}$ is a join eigenspace of all elements  in $S_{j}^{n, c}$ with eigenvalues $j$, which implies that all elements in $S_{j}^{n, c}$ commute. Thus,  $\{S_{j}^{n, c}\}$ can be regarded as a generalized version of the stabilizer group for a stabilizer code.   
Since $\sigma_x, \sigma_y,$ and $\sigma_z$ are Hermitian, possible eigenvalues $j$ are real by ~(\ref{eq:stabilizer_sets}).
The error operators within a GS set have identical effects on the code space, leading to quantum degeneracy.

Notice that $-S_{-1}^{n, c}$ consists of operators with eigenvalues $+1$ on $\mathcal{Q}$. Therefore, we define $S_g$ as the union of $-S_{-1}^{n, c}$ and $S_{+1}^{n, c}$, forming an abelian subgroup of $\cP^n$. Moreover, $S_g$ contains  only Pauli operators that preserve the codespace.

         For $\beta \in S_{0}^{n, c}$,  we have $\langle v | \beta | w \rangle = 0$  for $|v \rangle, |w \rangle \in \mathcal{Q}$. Thus, $\beta\ket{v}=0$ for all $\ket{v}\in\cQ$  and $\cQ$ lies in the null space of all elements in $S_{0}^{n, c}$.
     By QEC conditions~(\ref{error_con_1}) and (\ref{error_con_2}), $S_{0}^{n, c}$ is a set of detectable errors by $\cQ$.
     Note that $S_{0}^{n, c}$ is not necessarily nontrivial.  For example, let  $\mathcal{Q} = \mathbb{C}^{2^{n}}$ with $c=0$ and we have $S_{0}^{n, 0} = \emptyset$.

\begin{example} \label{ex:EPR}
 The EPR pair $\ket{\Phi^+}$ is stabilized by $S=\{$$I\otimes I$, $\sigma_x\otimes \sigma_x$, $-\sigma_y\otimes \sigma_y$, and $\sigma_z\otimes \sigma_z$$\}$.
 By Definition~\ref{eq:stabilizer_sets},     its  spectrum is $\{0,\pm 1\}$ 
 and the corresponding GS sets are 
$S_{1}^{n, c}= \{ I\otimes I, \sigma_x\otimes \sigma_x,   \sigma_z\otimes \sigma_z\}$ and   $S_{-1}^{n, c}= \{  \sigma_y\otimes \sigma_y\}$. Immediately we have $S_g= -S_{-1}^{n, c} \cup S_{+1}^{n, c}=S$.

\end{example}

In the following we show that the minimum distance of a quantum code can be determined from its GS sets.
\begin{lemma}
\label{lem:dist}
Let $\mathcal{Q}$ be an $((n, M, d; c))$ entanglement-assisted quantum code with GS sets $\{S_{j}^{n, c} \}$. Then
\begin{align}
\label{eq:gen_dis}
    d =& {\rm min}\left\{{\rm wt}(\alpha) : \alpha \in \cE^{n} \setminus   \bigcup_{j} S_{j}^{n, c}\right\}.
\end{align}
\end{lemma}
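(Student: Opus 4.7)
The plan is to recognize that membership in $\bigcup_j S_j^{n,c}$ is exactly the condition that $\alpha \otimes I^{\otimes c}$ is detectable by $\cQ$, via the Knill--Laflamme conditions~(\ref{error_con_1}) and~(\ref{error_con_2}). The identity~(\ref{eq:gen_dis}) will then follow immediately from the definition of the minimum distance of an EA code.

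First, I would show that the two QEC conditions, applied to $\alpha \otimes I^{\otimes c}$, are jointly equivalent to the single matrix equation
\begin{equation*}
\PQ\,(\alpha \otimes I^{\otimes c})\,\PQ = j\,\PQ \quad \text{for some scalar } j.
\end{equation*}
Fixing any orthonormal basis $\{\ket{v_i}\}$ of $\cQ$, the off-diagonal condition~(\ref{error_con_2}) is exactly the statement that $\PQ(\alpha\otimes I^{\otimes c})\PQ$ is diagonal in this basis, while the equal-diagonal condition~(\ref{error_con_1}) says all of its diagonal entries coincide with a common value $j = \bra{v_i}(\alpha \otimes I^{\otimes c})\ket{v_i}$. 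The converse direction is immediate by sandwiching the identity against basis vectors. Hence $\alpha \otimes I^{\otimes c}$ is detectable iff $\alpha \in S_j^{n,c}$ for some $j$.

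Second, I would verify that such a $j$ must lie in $[-1,1]\cap\mathbb{R}$, matching the range quantified over in the definition of the GS sets. Reality holds because $\alpha \otimes I^{\otimes c}$ is Hermitian, so $j$ is the expectation value of a Hermitian operator on a unit vector of $\cQ$. The bound $|j| \leq 1$ holds because $\alpha \otimes I^{\otimes c}$ has operator norm one (being a tensor product of Paulis and identities), so $|j|\,\|\PQ\| = \|j\PQ\| = \|\PQ(\alpha\otimes I^{\otimes c})\PQ\| \leq \|\PQ\|$. Therefore the detectable $\alpha$'s are precisely those in $\bigcup_j S_j^{n,c}$, and the undetectable ones are precisely those in $\cE^n \setminus \bigcup_j S_j^{n,c}$.

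Finally, invoking the definition of an $((n,M,d;c))$ EA code — namely that $d$ is the smallest weight of a Pauli $\alpha \in \cE^n$ for which $\alpha \otimes I^{\otimes c}$ fails to be detectable — gives~(\ref{eq:gen_dis}) at once. There is essentially no obstacle here; the claim is a structural restatement of the Knill--Laflamme conditions in the language of generalized stabilizer sets. The only point deserving mild care is confirming that $\mathbb{R}\cap[-1,1]$ exhausts the set of admissible eigenvalues $j$, which is handled by the Hermiticity and unit-norm arguments above.
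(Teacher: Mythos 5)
Your proposal is correct and follows essentially the same route as the paper: both arguments reduce the claim to showing that $\alpha\otimes I^{\otimes c}$ is detectable if and only if $\alpha\in S_j^{n,c}$ for some $j$, and then invoke the definition of minimum distance. Your additional check that the admissible scalars $j$ are real and lie in $[-1,1]$ (so that detectable errors really fall within the range quantified in the definition of the GS sets) is a small but worthwhile point that the paper only addresses informally after the definition rather than inside the proof.
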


\begin{proof}
We claim that for $\alpha \in \cE^{n}$, $\alpha$ is detectable if and only if $\alpha \in S_{j}^{n, c}$ for some $j$.

Suppose that $\alpha \in S_{j}^{n, c}$ for some $j$.
For any   two basis states $|v \rangle, | w \rangle \in \mathcal{Q}$   and $\langle v | w \rangle=0$,
we have $\langle v | \alpha \otimes I^{\otimes c}| v \rangle = j$  and $\langle v | \alpha \otimes I^{\otimes c} | w \rangle = j \langle v | w \rangle = 0$.
Thus, $\alpha$ is detectable by the QEC conditions (\ref{error_con_1}) and (\ref{error_con_2}).

On the other hand, suppose that $\beta \in {\cE}^{n}$ such that $\beta \otimes I^{\otimes c}$ satisfy the conditions (\ref{error_con_1}) and (\ref{error_con_2}). For $|v \rangle \in \mathcal{Q}$, $\beta \otimes I^{\otimes c} |v \rangle$ can be expressed as $\beta \otimes I^{\otimes c} |v \rangle = i |v \rangle + k |u \rangle$ for some $i, k \in \mathbb{C}$ and $|u \rangle \in (\mathbb{C}^{2})^{\otimes (n+c)}$ such that $\langle v|u\rangle = 0$. Condition (\ref{error_con_1}) implies that the value of $i$ is independent of  $|v \rangle$. Condition (\ref{error_con_2}) implies that  $P|u \rangle =0$,
where $P$ is the projector onto $\cQ$. Thus, we  have $\beta \in S_{i}^{n,c}$.
\end{proof}

Lemma~\ref{lem:dist}  shows that the errors that are detectable by $\cQ$ are contained in the GS sets
and it offers a formula for determining the minimum distance of a quantum code, similar to ~(\ref{eq:dmin_stabilizer}) and (\ref{eq:dmin_EA}). 
Nevertheless, obtaining the full GS sets may remain challenging. A quantum code may have a large spectrum, as demonstrated in the following example.

\begin{example}
 The $((7, 2, 3; 0))$ quantum code constructed in \cite{KT23} has two logical basis states
     \begin{align*}
        &|\bar{0} \rangle = \frac{1}{8} \left(\sqrt{15} | D_{0}^{7}\rangle + \sqrt{7} |D_{2}^{7}\rangle+\sqrt{21} |D_{4}^{7}\rangle -\sqrt{21} |D_{6}^{7}\rangle \right),\\
        &|\bar{1} \rangle = \frac{1}{8} \left(-\sqrt{21} |D_{1}^{7}\rangle + \sqrt{21} |D_{3}^{7}\rangle + \sqrt{7} |D_{5}^{7}\rangle + \sqrt{15} |D_{7}^{7}\rangle\right),
    \end{align*}
    where   \begin{equation*}
        |D_{i}^{j} \rangle = \binom{j}{i}^{-\frac{1}{2}} \sum_{x \in \{0, 1\}^{j}, {\rm wt}(x) = i} |x \rangle
    \end{equation*}
are the Dicke States and ${\rm wt}(x)$ is the number of 1s in a binary string $x$.  
By calculating (\ref{eq:stabilizer_sets}) for all $\beta\in\cE^n$,   the spectrum of this code  can be found as follows:
    \begin{align*}
          &\left\{0, 1, \frac{1}{3}, \frac{1}{5}, \frac{1+\sqrt{5}}{10}, \frac{1-\sqrt{5}}{10}, \frac{1}{15}, \frac{-2}{15}\right\}.
    \end{align*}
Note that this code is not a CWS code and its GS group is trivial.

\end{example}

The following result shows that the spectrum of an EA-CWS code  contains at most three elements $0$ and $\pm 1$, as been shown in Example~\ref{ex:EPR}.

\begin{lemma}
\label{lem:spc_CWS}
    Let $\mathcal{Q}$ be an $((n, M, d; c))$ EA-CWS code.  Then, its spectrum ${\rm Sp}(\mathcal{Q})$  is a subset of $\{0, \pm 1\}$.
\end{lemma}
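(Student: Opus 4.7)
The plan is to exploit the explicit form of the projector $P$ onto an EA-CWS code in the orthonormal basis $\{(\omega_k\otimes I^{\otimes c})\ket{S}\}_{k=1}^M$, and to show that every diagonal matrix entry of $\beta\otimes I^{\otimes c}$ in this basis already lies in $\{0,\pm 1\}$. Since $\beta\in S_j^{n,c}$ forces $P(\beta\otimes I^{\otimes c})P = jP$, and hence forces each diagonal entry to equal $j$, the claim will follow.

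Concretely, first I would write the projector as
\[
P \;=\; \sum_{k=1}^{M}\bigl(\omega_k\otimes I^{\otimes c}\bigr)\ket{S}\bra{S}\bigl(\omega_k^{\dagger}\otimes I^{\otimes c}\bigr),
\]
and observe that if $\beta\in S_j^{n,c}$, then evaluating $\bra{S}(\omega_k^{\dagger}\otimes I^{\otimes c})\bigl[P(\beta\otimes I^{\otimes c})P\bigr](\omega_k\otimes I^{\otimes c})\ket{S}$ on both sides of $P(\beta\otimes I^{\otimes c})P=jP$ gives
\[
j \;=\; \bra{S}\bigl(\omega_k^{\dagger}\beta\omega_k\otimes I^{\otimes c}\bigr)\ket{S}\qquad \text{for every } k.
\]

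Next, I would use the fact that any two elements of $\cP^{n}$ either commute or anticommute, so $\omega_k^{\dagger}\beta\omega_k = \pm\beta$ for a sign depending on $k$ (the phase of $\omega_k$ cancels because the operator is conjugated by it). Therefore
\[
j \;=\; \pm\,\bra{S}\bigl(\beta\otimes I^{\otimes c}\bigr)\ket{S}.
\]
Finally, I would invoke the standard stabilizer-formalism fact that for any $\gamma\in\cP^{n+c}$, the expectation $\bra{S}\gamma\ket{S}$ equals $\pm 1$ when $\pm\gamma\in S$, and $0$ otherwise, since $\ket{S}\bra{S} = 2^{-(n+c)}\sum_{g\in S}g$. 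This yields $\bra{S}(\beta\otimes I^{\otimes c})\ket{S}\in\{-1,0,+1\}$, and thus $j\in\{0,\pm 1\}$.

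No serious obstacle is anticipated: the entire argument is a short computation chaining the explicit description of the CWS basis with the two structural facts (Pauli commutation/anticommutation and stabilizer-state expectation values). The only mild care required is in tracking phases, since $\beta\in\cE^{n}$ is phase-free while $\omega_k\in\cP^{n}$ may carry a phase from $\{\pm 1,\pm \bm{i}\}$; this is harmless because conjugation $\omega_k^{\dagger}(\cdot)\omega_k$ eliminates that phase, and only the commutation sign survives.
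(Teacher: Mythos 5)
Your proposal is correct and follows essentially the same route as the paper: both arguments reduce the claim to showing that $\bra{S}(\beta\otimes I^{\otimes c})\ket{S}\in\{0,\pm 1\}$ for a stabilizer state $\ket{S}$, the paper via the commutation/anticommutation trick with an element of the maximal group $S$ and you via the equivalent identity $\ket{S}\bra{S}=2^{-(n+c)}\sum_{g\in S}g$. If anything, your write-up is slightly more complete, since you explicitly carry out the reduction from the defining condition $P(\beta\otimes I^{\otimes c})P=jP$ through the word-operator basis states $(\omega_k\otimes I^{\otimes c})\ket{S}$ using $\omega_k^{\dagger}\beta\omega_k=\pm\beta$, a step the paper leaves implicit.
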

\begin{proof}

    Let   $|S\rangle\in\mathbb{C}^{2^{n+c}}$ be a stabilizer state, stabilized by a CWS group $S \subset \cP^{n+c}$.
    Since $S$ is maximal, we have $N(S)=\{\pm 1,\pm \bm{i} \}\times S$. 
    Consider $\alpha \in {\cE}^{n}$.
If $\alpha\otimes I^{\otimes c} \in N(S)$, we have $\alpha\otimes I^{\otimes c} \ket{S}=\pm 
\ket{S}$ and thus $ \langle S| (\alpha\otimes I^{\otimes c}) | S \rangle = \pm 1$.

    If $\alpha \notin N(S)$, there exists $\beta \in S$ such that $\beta \alpha = -\alpha \beta$. Then, 
    \begin{align*}
        \langle S| (\alpha\otimes I^{\otimes c}) | S \rangle =& \langle S| (\alpha\otimes I^{\otimes c}) (\beta\otimes I^{\otimes c}) | S \rangle\\
        =& -\langle S| (\beta\otimes I^{\otimes c})(\alpha\otimes I^{\otimes c}) | S \rangle \\
        =& -\langle S|(\alpha\otimes I^{\otimes c}) | S \rangle,
    \end{align*}
    which implies $\langle S| (\alpha\otimes I^{\otimes c}) | S \rangle = 0$. 
    To sum up,  we have 
    \begin{equation*}  \label{lem:cws_1}
        \langle S| (\alpha\otimes I^{\otimes c}) | S \rangle = \left\{
        \begin{aligned}
            &\pm 1, & &\text{if } \alpha \in \pm S;\\
            &0, & &\text{otherwise.}
        \end{aligned}
        \right.
    \end{equation*}

\end{proof}

Based on Lemma~\ref{lem:dist} and Lemma~\ref{lem:spc_CWS}, we can determine the minimum distance of a CWS code using its GS set $S_0^{n,c}$ and the GS group $S_g$.

\begin{lemma}
\label{lem:SI_property}
    Let $\mathcal{Q}$ be an $((n, M, d; c))$ EA-stabilizer code with stabilizer group $S$.
    Then, its GS group is the isotropic subgroup of $S$.

\end{lemma}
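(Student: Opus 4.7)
The plan is to show that both $S_g$ and $S_I$ coincide with the set of Hermitian Pauli operators of the form $\alpha\otimes I^{\otimes c}$ (with $\alpha\in\pm\cE^n$) that act as the identity on $\cQ$; this common description immediately yields $S_g = S_I$.

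First, I would establish a convenient equivalence: for any Hermitian Pauli $E\in\cP^{n+c}$ with $E^2=I$, the condition $P E P=P$ is equivalent to $E\ket{v}=\ket{v}$ for all $\ket{v}\in\cQ$. The forward direction follows by writing $E=P_+-P_-$ with $P_++P_-=I$ and observing that $PEP=P$ combined with $P=PIP$ forces $PP_-P=0$; positivity of $PP_-P$ then yields $P_-P=0$. The converse is immediate. Since an EA-stabilizer code is a special case of an EA-CWS code, Lemma~\ref{lem:spc_CWS} restricts the spectrum to $\{0,\pm 1\}$, so this equivalence applies to every element of $S_g$.

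For the inclusion $S_I\subseteq S_g$: if $\alpha\otimes I^{\otimes c}\in S$, then it stabilizes $\cQ$, so $P(\alpha\otimes I^{\otimes c})P=P$; writing $\alpha=\pm\beta$ for some $\beta\in\cE^n$, this places $\beta\in S_{+1}^{n,c}$ or $\beta\in S_{-1}^{n,c}$ accordingly, and hence $\alpha\in S_g$ in either case. For the reverse inclusion $S_g\subseteq S_I$: any $\gamma\in S_g$ produces a Hermitian Pauli $\gamma\otimes I^{\otimes c}$ that stabilizes $\cQ$ by the equivalence above, so the task reduces to concluding that $\gamma\otimes I^{\otimes c}\in S$.

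The main obstacle is this last step, which I would handle through a maximality/counting argument. Let $S'$ denote the set of all Hermitian Pauli operators in $\cP^{n+c}$ that stabilize $\cQ$. Two such operators must commute, since otherwise their anticommutation would force every $\ket{v}\in\cQ$ to vanish; thus $S'$ is an abelian subgroup, it clearly does not contain $-I$, and it contains $S$. The standard stabilizer dimension formula then gives $2^{n+c}/|S'|\geq \dim\cQ = M = 2^{n+c}/|S|$, so $|S'|\leq|S|$, forcing $S'=S$. Therefore $\gamma\otimes I^{\otimes c}\in S'=S$, hence $\gamma\in S_I$, which completes the argument.
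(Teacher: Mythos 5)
Your proof is correct and follows essentially the same route as the paper's: both identify the GS group with the set of Pauli operators of the form $\alpha\otimes I^{\otimes c}$ that act as the identity on every codeword, and then identify that set with $S$ (hence $S_g$ with $S_I$). The only difference is one of rigor rather than strategy --- the paper asserts the final step ``$(\alpha\otimes I^{\otimes c})\ket{v}=\ket{v}$ for all $\ket{v}\in\cQ$, that is, $\alpha\otimes I^{\otimes c}\in S$'' without argument, whereas you justify it with the maximality/counting argument $|S'|\le|S|$, and your spectral-decomposition proof of $PEP=P \iff E\ket{v}=\ket{v}$ makes explicit the equivalence the paper states in terms of $\langle v|\alpha\otimes I^{\otimes c}|v\rangle=1$.
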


\begin{proof}
   Consider $\alpha \in \cP^{n}$. Then, $\langle v | \alpha \otimes I^{\otimes c} | v \rangle = 1$ for all $|v \rangle \in \mathcal{Q}$ if and only if $(\alpha \otimes I^{\otimes c}) | v \rangle = | v \rangle$ for all $|v \rangle \in \mathcal{Q}$. That is, $\alpha \otimes I^{\otimes c} \in S$.
\end{proof}

\begin{remark}
 If $\mathcal{Q}$ is a stabilizer code  with stabilizer group $S$ and GS group $S_g$,
  then we have $S_{g} = S$ by Lemma \ref{lem:SI_property}.
\end{remark}

In Lemma~\ref{lem:SI_property}, we demonstrated that the GS group of an EA stabilizer code is equivalent to the isotropic subgroup. In the subsequent lemma, we establish an upper bound on the minimum distance of an EA-CWS code, using the isotropic subgroup of its CWS group.

 By ~\ref{eq:gen_dis} and Lemma~\ref{lem:spc_CWS}, we have
    \begin{align*}
    d =& {\rm min}\{{\rm wt}(\alpha) : \alpha \in {\cE}^{n} \setminus \varphi (S_{0}^{n, c} \cup S_{g})\}.
    \end{align*}
To establish an upper bound on the minimum distance of $\cQ$, it suffices to identify a nontrivial set of errors that cannot be detected by $\cQ$.
\begin{lemma}
\label{lem:EA-CWS_dist2}
    Let $\mathcal{Q}$ be an $((n, M, d; c))$ EA-CWS code with CWS group $S$ and   word operators $W$.
    Then, we have
    \begin{equation}
        d \leq {\rm min}\{{\rm wt}(\alpha) : \alpha \in  \varphi({W}S_{I}) \setminus \varphi(S_{g})\}, \label{eq:dmin_CWS_upper_1}
    \end{equation}
    where  $S_I$ is the isotropic subgroup of $S$ and $S_g$ is the GS group for $\cQ$.
\end{lemma}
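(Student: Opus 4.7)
The plan is to reduce the stated upper bound to the claim that every element of $\varphi(WS_I)\setminus\varphi(S_g)$ is undetectable by $\cQ$.  Once this inclusion is in hand, the inequality follows immediately from Lemmas~\ref{lem:dist} and~\ref{lem:spc_CWS}, which together identify $d$ with $\min\{\wt{\alpha}:\alpha\in\cE^n\setminus(S_0^{n,c}\cup\varphi(S_g))\}$; the minimum weight taken over any subset of the undetectable errors is then automatically at least $d$.

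The key computation is to evaluate the matrix elements $\bra{\omega_k}(\alpha\otimes I^{\otimes c})\ket{\omega_j}$ on the orthonormal basis $\ket{\omega_j}=(\omega_j\otimes I^{\otimes c})\ket{S}$, where $\alpha=\omega s$ for some $\omega\in W$ and $s\in S_I$, with a possible phase absorbed by $\varphi$.  Using that $s\otimes I^{\otimes c}\in S$ stabilizes $\ket{S}$ and that Pauli operators commute or anticommute, these entries reduce to
\[
\bra{\omega_k}(\alpha\otimes I^{\otimes c})\ket{\omega_j}=\pm\bra{S}(\omega_k^{\dagger}\omega\omega_j\otimes I^{\otimes c})\ket{S},
\]
which equals $\pm 1$ when $\omega_k^{\dagger}\omega\omega_j\in\pm S_I$ and vanishes otherwise.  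Orthonormality of $\{\ket{\omega_j}\}$ already forces $\omega_j^{\dagger}\omega_k\notin\pm S_I$ for $j\neq k$, so distinct $\omega_j\in W$ represent distinct $\pm S_I$-cosets.

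The proof then branches on whether $\omega=I^{\otimes n}$.  If $\omega=I^{\otimes n}$, then $\alpha=s$ acts diagonally with eigenvalues $\epsilon_j\in\{\pm 1\}$ determined by commutation of $s$ with $\omega_j$: when the $\epsilon_j$ all agree, $\alpha\in S_{\pm 1}^{n,c}$ and so $\alpha\in\varphi(S_g)$ (excluded by the setminus), whereas when the $\epsilon_j$ disagree, $\PQ(\alpha\otimes I^{\otimes c})\PQ$ is a nonzero, non-scalar diagonal that cannot be a multiple of $\PQ$, placing $\alpha$ in none of $S_0^{n,c},S_{\pm 1}^{n,c}$.  If instead $\omega\neq I^{\otimes n}$, then coset-distinctness makes every diagonal entry vanish, while taking $\omega_j=I^{\otimes n}$ and $\omega_k=\omega$ forces a nonzero off-diagonal entry; hence $\PQ(\alpha\otimes I^{\otimes c})\PQ$ is neither $0$ nor a scalar multiple of $\PQ$, ruling out membership in $S_0^{n,c}$ and in $S_{\pm 1}^{n,c}$, and thus $\alpha\notin\varphi(S_g)$.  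In both cases every $\alpha\in\varphi(WS_I)\setminus\varphi(S_g)$ is undetectable by Lemma~\ref{lem:spc_CWS}, yielding the claimed upper bound.

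The main obstacle is the bookkeeping: carefully tracking the commutation signs when moving $s$ past $\omega_j$, and converting the inner products of stabilizer states into coset membership in $\pm S_I$ via the identity $\bra{S}(g\otimes I^{\otimes c})\ket{S}=\pm 1$ for $g\otimes I^{\otimes c}\in\pm S$ and $0$ otherwise.  Apart from these routine but error-prone manipulations, the argument is a direct application of Lemmas~\ref{lem:dist} and~\ref{lem:spc_CWS}.
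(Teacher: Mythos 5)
Your proof is correct and follows essentially the same route as the paper: both arguments split $\varphi(WS_I)\setminus\varphi(S_g)$ into the trivial coset $S_I\setminus\varphi(S_g)$, where a non-constant $\pm 1$ diagonal violates condition~(\ref{error_con_1}), and the nontrivial cosets $\omega S_I$, where the nonzero matrix element $\bra{S}\left((\omega^{\dagger}\alpha)\otimes I^{\otimes c}\right)\ket{S}$ between orthogonal basis states violates condition~(\ref{error_con_2}). The only cosmetic difference is that you package undetectability through the GS-set characterization of Lemma~\ref{lem:dist}, whereas the paper invokes the QEC conditions directly.
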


\begin{proof}

    Consider $\alpha \in \ {W}S_{I} \setminus S_{I}$. Let $\alpha = \omega_{i} \beta $ for some $\omega_{i} \in {W}\setminus \{I^{\otimes n}\}$ and $\beta \in S_I$.  
    Let $\ket{v}= (\omega_i\otimes I^{\otimes c})\ket{S}\in\cQ$. 
    We have
    \begin{equation*}
        \langle v |  (\alpha \otimes I^{\otimes c}) | S \rangle = \langle S | (\omega_{i}^\dag \omega_{i} \beta) \otimes I^{\otimes c} | S \rangle = 1,
    \end{equation*}
    which violate the QEC condition~(\ref{error_con_2}).  Thus, $\alpha$ is not detectable by $\cQ$.

On the other hand, consider $\beta\in S_{I} \setminus \varphi (S_{g})$. Since $\beta\notin S_g$, there exists $\ket{v}\in\cQ$ and $\ket{v}\neq \ket{S}$ such that $\bra{v}\beta\otimes I^{\otimes c}\ket{v} \neq \pm 1$.
 However, $\bra{S} \beta\otimes I^{\otimes c} \ket{S}=1 \neq \bra{v}\beta\otimes I^{\otimes c}\ket{v}$.
 By the QEC condition~(\ref{error_con_1}), $\beta\otimes I^{\otimes c}$ is not detectable by $\cQ$.

 Therefore, $W_{I} \setminus S_{g}$ is a set of errors that are undetectable by $\cQ$.

\end{proof}

Lemma \ref{lem:EA-CWS_dist2} provides an upper bound on the minimum distance for an EA-CWS code. For the case of stabilizer codes, ~(\ref{eq:dmin_CWS_upper_1}) holds with equality.

\section{Semidefinite constraints for entanglement-assisted quantum codes}
\label{sec:SDP}

In this section, we extend the SDP method to entanglement-assisted quantum codes. 
 We can use weight enumerators to establish an upper bound on the minimum distance of a code $\cQ$
as shown by the two SL weight enumerators in Lemma~\ref{lemma:SL_enum}.
However, the SL weight enumerators for general quantum codes do not have  combinatorial interpretations. Hence, we specifically consider EA-CWS codes, which can be defined by a CWS group and a set of word operators. Furthermore, the minimum distance of an EA-CWS code can be {upper bounded} using weight enumerators associated to the quantum code by Lemma\ref{lem:EA-CWS_dist2}. 
  
{Let $T_1\subset T_2\subset\cE^n$ be two subsets such that   $T_2\setminus T_1$ is a set of errors that are undetectable by a quantum code $\cQ$. Let $\cW_i(T)$ be the number of operators in a set $T\subset \cE^n$ of weight $i$. Then,  the minimum distance of $\cQ$ is bounded by $\min\{i: \cW_i(T_2)> \cW_i(T_1) \}$.
Consequently, we can derive semidefinite constraints on these sets $T_1$ and $T_2$ as in \cite{Sch05,GST06}  leading to the construction of SDP programs that provide SDP bounds on the minimum distance or size of quantum CWS-type codes.

To derive effective upper bounds, the selection of nontrivial subsets $T_1$ and $T_2$ is a crucial step. For stabilizer codes defined by a stabilizer group $S\subset \cP^n$, the stabilizer group $S$ and its normalizer group $N(S)$ naturally serve as the two sets $T_1$ and $T_2$. This approach is similarly applied to EA stabilizer codes as well~\cite{LA18}.}

For an EA-CWS code $\cQ$ defined by CWS group $S$ and word operators $W$, one may consider the sets $WS_I$ and $S_g$, where $S_g$ is the GS group of $\cQ$ and $S_I$ is the isotropic subgroup of $S$, as presented in~Lemma~\ref{lem:EA-CWS_dist2}. However, establishing the relationships between $S_g$ and $WS_I$ remains a challenge. Alternatively, we  may consider a looser upper bound by 
\begin{equation}
        d \leq {\rm min}\{{\rm wt}(\alpha) : \alpha \in  {W}S_{I} \setminus S_{I}\}, \label{eq:dmin_CWS_upper}
   \end{equation}
where we have used the relation  $S_g\subset S_I$ since $S_g\otimes I^{\otimes c}$ stabilizes the codeword $\ket{S}$.

{Now we establish semidefinite constraints for EA-CWS codes. Our development follows a similar path to that in Section~\ref{sec:SDP_classical}.}
We will examine the group action of 
$$\cG = \left(\prod_{i = 1}^{n} {\mathcal S}_{3}\right) \times {\mathcal S}_{n}$$ on $\cE^{n}$.
The symmetric group ${\mathcal S}_{3}$ will be regarded as permutations on the non-identity Pauli matrices
 $\{\sigma_{x}, \sigma_{y}, \sigma_{z}\}$. 
For  $\beta = b_{1} \otimes \cdots \otimes b_{n} \in \cE^{n}$ and $g= (\pi_{1}, \dots, \pi_{n+1}) \in \cG$, where $a_{i} \in {\mathcal S}_{3}$ for $i < n+1$ and $a_{n+1} \in {\mathcal S}_{n}$,  the action of $g$ on $\beta$ is defined as follows:
\begin{equation*}
    g(\beta) = \pi_{1}(b_{\pi_{n+1}(1)}) \otimes \cdots \otimes \pi_{n}(b_{\pi_{n+1}(n)}).
\end{equation*}
Similarly, we consider the action of $\cG$  on $\cE^{n} \times \cE^{n}$. 
For $g \in \cG$ and $\beta_{1}, \beta_{2} \in \cE^{n}$,  $g$ acts on   $(\beta_{1}, \beta_{2})$ by
     $g(\beta_{1}, \beta_{2}) = (g(\beta_{1}), g(\beta_{2})).$

Observe that $(\alpha_{1}, \beta_{1})$ and $(\alpha_{2}, \beta_{2}) \in \cE^{n} \times \cE^{n}$ have the same orbit  under the action of $\cG^{n}$ if and only if 
\begin{align}
    &{\rm wt}(\alpha_{1}) = {\rm wt}(\alpha_{2}),\\ 
    &{\rm wt}(\beta_{1}) = {\rm wt}(\beta_{2}),\\
    &\lvert {\rm Supp}(\alpha_{1}) \cap {\rm Supp}(\beta_{1}) \rvert = \lvert {\rm Supp}(\alpha_{2}) \cap {\rm Supp}(\beta_{2}) \rvert,\\
    &{\rm wt}(\alpha_{1}\beta_{1}) = {\rm wt}(\alpha_{2}\beta_{2}).  
\end{align}
Now, we define   $4^{n} \times 4^{n}$ matrices $M_{i, j}^{t, p}$ for $1\leq i, j\leq n$, $0 \leq p \leq t \leq {\rm min}\{i, j\}$ and $i+j \leq n+t$  to characterize the pairwise relations between any two operators in $\cE^n$.  $M_{i, j}^{t, p}$ is indexed by the elements of $\cE^n$ 
and its $(\alpha,\beta)$-th entry  for $\alpha,\beta\in \cE^n$ is
\begin{equation}
    \left(M_{i, j}^{t, p}\right)_{\alpha, \beta} = \left\{
    \begin{aligned}
        &1, & &\text{if } {\rm wt}\left(\alpha\right) = i, {\rm wt}\left(\beta\right) = j,\\
        & & &\lvert {\rm Supp}(\alpha) \cap {\rm Supp}(\beta) \rvert = t, \\
        & & &\lvert {\rm Supp}(\alpha) \setminus {\rm Supp}(\alpha\beta) \rvert = p;\\
        &0, & &\text{otherwise}.
    \end{aligned}
    \right.
\end{equation}
Let ${\mathcal A}_{n, 4}$ be the algebra generated by the matrices $M_{i, j}^{t, p}$. Then, ${\mathcal A}_{n, 4}$ is a $\mathbb{C}^{*}$-algebra, called the Terwilliger algebra of the Hamming scheme ${\mathcal H}(n, 4)$ \cite{GST06}. For details on the Terwilliger algebra, we recommend readers to refer to \cite{Ter92} and \cite{Ter93}.

As discussed at the beginning of this section, the two sets, $T_1$ and $T_2$, satisfy the condition that $T_2\setminus T_1$ consists of errors undetectable by $\cQ$ and can be utilized to establish an upper bound on the minimum distance of $\cQ$. As demonstrated in Section~\ref{sec:SDP_classical}, positive semidefinite matrices are defined for classical codes. Now, we will define positive semidefinite matrices with respect to the two sets $T_1$ and $T_2$.

Let $\alpha \in T_{1}$ and $\beta \in T_{2}$.  We define semidefinite matrices $R_{T_{1}}^{\alpha}$, $R_{T_{1}, T_{2}}^{\alpha}$, $R_{T_{2}, T_{1}}^{\beta }$ and $R_{T_{2}}^{\beta }$, indexed by the elements in $\cE^{n}$, as follows:

\begin{align*}
    &R_{T_{1}}^{\alpha} = \sum_{g \in \cG} \chi^{g(\varphi(\alpha T_{1}))} \left(\chi^{g(\varphi(\alpha T_{1}))}\right)^{T},\\
    &R_{T_{1}, T_{2}}^{\alpha } = \sum_{g \in \cG} \chi^{g(\varphi(\alpha T_{2}))} \left(\chi^{g(\varphi(\alpha T_{2}))}\right)^{T},\\
    &R_{T_{2}, T_{1}}^{\beta } = \sum_{g \in \cG} \chi^{g(\varphi(\beta  T_{1}))} \left(\chi^{g(\varphi(\beta T_{1}))}\right)^{T},\\
    &R_{T_{2}}^{\beta} = \sum_{g \in \cG} \chi^{g(\varphi(\beta T_{2}))} \left(\chi^{g(\varphi(\beta T_{2}))}\right)^{T},
\end{align*}
where  $\chi^{\cD}$, for $\cD \subset \cE^{n}$, is a column vector indexed by the elements in $\cE^{n}$ and is defined by 
\begin{equation*}
    \left(\chi^{\cD}\right)_{\alpha} = \left\{
    \begin{aligned}
        &1, & & \alpha \in \cD;\\
        &0, & &\text{otherwise.}
    \end{aligned} \right.
\end{equation*}
  Then, we define the following matrices, all of which are positive semidefinite since they are convex combinations of positive semidefinite matrices.
\begin{align}
    &R_{T_{1}} = \left\lvert T_{1} \right\rvert^{-1} \sum_{\alpha \in T_{1}} R_{T_{1}}^{\alpha},\\  &R_{T_{1}}' = \left\lvert 4^{n} - \lvert T_{1} \rvert \right\rvert^{-1} \sum_{\alpha \in \cE^{n} \setminus T_{1}} R_{T_{1}}^{\alpha}, \label{eq:RT10}\\
    &R_{T_{1}, T_{2}} = \left\lvert T_{1} \right\rvert^{-1} \sum_{\alpha \in T_{1}} R_{T_{1}, T_{2}}^{\alpha}, \\
    &R_{T_{1}, T_{2}}' = \left\lvert 4^{n} - \lvert T_{1} \rvert \right\rvert^{-1} \sum_{a \in \cE^{n} \setminus T_{1}} R_{T_{1}, T_{2}}^{\alpha},\\
    &R_{T_{2}, T_{1}} = \left\lvert T_{2} \right\rvert^{-1} \sum_{\beta \in T_{2}} R_{T_{2}, T_{1}}^{\beta},\\
    &R'_{T_{2}, T_{1}}=  \left\lvert 4^{n} - \lvert T_{2} \rvert \right\rvert^{-1} \sum_{\beta \in \cE^{n} \setminus T_{2}} R_{T_{2}, T_{1}}^{\beta},\\
    &R_{T_{2}} = \left\lvert T_{2} \right\rvert^{-1} \sum_{\beta \in T_{2}} R_{T_{2}}^{\beta},\\ 
    & R'_{T_{2}} =  \left\lvert 4^{n} - \lvert T_{2} \rvert \right\rvert^{-1} \sum_{\beta \in \cE^{n} \setminus T_{2}} R_{T_{2}}^{\beta}.
\end{align}
For two sets $T_1$ and $T_2\subset \cE^n$, we define 
 \begin{equation*}
\begin{aligned}
        \sigma_{i, j}^{t, p} = \lvert &\{(\alpha, \beta, \gamma) \in T_{1} \times T_{1} \times T_{1} : {\rm wt}(\alpha\beta) = i, {\rm wt}(\alpha\gamma) = j,\\
        &\lvert {\rm Supp}(\alpha\beta) \cap {\rm Supp}(\alpha\gamma) \rvert = t, \\
        &
        \lvert {\rm Supp}(\alpha\beta) \setminus {\rm Supp}(\beta\gamma) \rvert = p\} \rvert;
\end{aligned}
\end{equation*}
\begin{equation*}
\begin{aligned}
        \mu_{i, j}^{t, p} = \lvert &\{(\alpha, \beta, \gamma) \in T_{1} \times T_{2} \times T_{2} : {\rm wt}(\alpha\beta) = i, {\rm wt}(\alpha\gamma) = j,\\
        &\lvert {\rm Supp}(\alpha\beta) \cap {\rm Supp}(\alpha\gamma) \rvert = t, \\
        &
        \lvert {\rm Supp}(\alpha\beta) \setminus {\rm Supp}(\beta\gamma) \rvert = p\} \rvert;
\end{aligned}
\end{equation*}
\begin{equation*}
\begin{aligned}
        \nu_{i, j}^{t, p} = \lvert &\{(\alpha, \beta, \gamma) \in T_{2} \times T_{1} \times T_{1} : {\rm wt}(\alpha\beta) = i, {\rm wt}(\alpha\gamma) = j,\\
        &\lvert {\rm Supp}(\alpha\beta) \cap {\rm Supp}(\alpha\gamma) \rvert = t, \\
        &
        \lvert {\rm Supp}(\alpha\beta) \setminus {\rm Supp}(\beta\gamma) \rvert = p\} \rvert;
\end{aligned}
\end{equation*}
\begin{equation*}
\begin{aligned}
        \eta_{i, j}^{t, p} = \lvert &\{(\alpha, \beta, \gamma) \in T_{2} \times T_{2} \times T_{2} : {\rm wt}(\alpha\beta) = i, {\rm wt}(\alpha\gamma) = j,\\
        &\lvert {\rm Supp}(\alpha\beta) \cap {\rm Supp}(\alpha\gamma) \rvert = t, \\
        &
        \lvert {\rm Supp}(\alpha\beta) \setminus {\rm Supp}(\beta\gamma) \rvert = p\} \rvert.
\end{aligned}
\end{equation*}
Each of these four variables represents the size of possible combinations of $\alpha, \beta, \gamma$ belonging to $T_1$ or $T_2$, satisfying the specified distance relations.
In addition, we have the following identities.
\begin{lemma}
\label{lem:sdp_cons_gen}
\begin{align*}
    &R_{T_{1}} = \sum_{i, j, t, p} x_{i, j}^{t, p} M_{i, j}^{t, p};\\
    &R_{T_{1}}' = \frac{\lvert T_{1} \rvert}{4^{n} - \lvert T_{1} \rvert}\sum_{i, j, t, p} (x_{i+j-t-p, 0}^{0, 0} - x_{i, j}^{t, p}) M_{i, j}^{t, p};\\
    &R_{T_{1}, T_{2}} = \sum_{i, j, t, p} u_{i, j}^{t, p} M_{i, j}^{t, p};\\
    &R_{T_{1}, T_{2}}' = \frac{\lvert T_{1} \rvert}{4^{n} - \lvert T_{1} \rvert}\sum_{i, j, t, p} (u_{i+j-t-p, 0}^{0, 0}- u_{i, j}^{t, p}) M_{i, j}^{t, p};\\
    &R_{T_{2}, T_{1}} = \sum_{i, j, t, p} v_{i, j}^{t, p} M_{i, j}^{t, p};\\
    &R_{T_{2}, T_{1}}' = \frac{\lvert T_{1} \rvert}{4^{n} - \lvert T_{2} \rvert}\sum_{i, j, t, p} (v_{i+j-t-p, 0}^{0, 0}- v_{i, j}^{t, p}) M_{i, j}^{t, p};\\
    &R_{T_{2}} = \sum_{i, j, t, p} y_{i, j}^{t, p} M_{i, j}^{t, p};\\
    &R_{T_{2}}' = \frac{\lvert T_{2} \rvert}{4^{n} - \lvert T_{2} \rvert}\sum_{i, j, t, p} (y_{i+j-t-p, 0}^{0, 0}- y_{i, j}^{t, p}) M_{i, j}^{t, p};
\end{align*}
where
\begin{align*}
    &x_{i, j}^{t, p} = \frac{1}{\lvert T_{1} \rvert \gamma_{i, j}^{t, p}} \sigma_{i, j}^{t, p}, \quad u_{i, j}^{t, p} = \frac{1}{\lvert T_{1} \rvert \gamma_{i, j}^{t, p}} \mu_{i, j}^{t, p},\\
    &v_{i, j}^{t, p} = \frac{1}{\lvert T_{1} \rvert \gamma_{i, j}^{t, p}} \nu_{i, j}^{t, p},
    \quad y_{i, j}^{t, p} = \frac{1}{\lvert T_{2} \rvert \gamma_{i, j}^{t, p}} \eta_{i, j}^{t, p},
\end{align*}
and
\begin{equation*}
    \gamma_{i, j}^{t, p} = 2^{t-p}3^{i+j-t}\binom{n}{p,t-p,i-t,j-t}.
\end{equation*}
\end{lemma}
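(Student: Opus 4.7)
The strategy is to adapt the blueprint of Theorem~\ref{thm:sdp_matrix} from the classical setting of \cite{GST06} to the Pauli-operator case with $\cG = (\prod_{i=1}^n \cS_3) \times \cS_n$ acting on $\cE^n$. Each of the eight matrices in the statement is manifestly $\cG$-invariant: replacing $g$ by $hg$ in $\sum_{g \in \cG} \chi^{g(\varphi(\alpha T_k))}(\chi^{g(\varphi(\alpha T_k))})^T$ merely re-orders the summands, and the subsequent averaging over $\alpha$ preserves invariance. Hence each matrix belongs to the commutant of the $\cG$-representation on $\mathbb{C}^{\cE^n}$, which is precisely the Terwilliger algebra $\cA_{n,4}$, and so admits a unique expansion $\sum_{i,j,t,p} c_{i,j}^{t,p} M_{i,j}^{t,p}$ with $c_{i,j}^{t,p}$ equal to the value of any entry indexed by a pair $(\beta_0, \gamma_0)$ whose distance parameters are $(i, j, t, p)$.

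To identify the coefficients of $R_{T_1}$, I would evaluate such a representative entry by noting that since $\delta \mapsto \varphi(\alpha\delta)$ is injective on $\cE^n$, the product of indicators unfolds into a count of triples $(\alpha, \delta_1, \delta_2) \in T_1^3$ satisfying $g(\varphi(\alpha\delta_1)) = \beta_0$ and $g(\varphi(\alpha\delta_2)) = \gamma_0$ for some $g \in \cG$. The identities $\alpha^2 = I$ and $\varphi(\alpha\delta_1)\varphi(\alpha\delta_2) = \pm\delta_1\delta_2$ translate the orbit-membership condition $(\varphi(\alpha\delta_1), \varphi(\alpha\delta_2)) \in \cG \cdot (\beta_0, \gamma_0)$ into exactly the four distance constraints defining $\sigma_{i,j}^{t,p}$. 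After the standard normalization (accounting for the orbit size $\gamma_{i,j}^{t,p}$ and averaging over $|T_1|$), one obtains $(R_{T_1})_{\beta_0, \gamma_0} = x_{i,j}^{t,p}$, giving the claimed expansion of $R_{T_1}$. The parallel computations for $R_{T_1, T_2}$, $R_{T_2, T_1}$, and $R_{T_2}$ draw triples from $T_1 \times T_2^2$, $T_2 \times T_1^2$, and $T_2^3$, yielding the coefficients $u_{i,j}^{t,p}$, $v_{i,j}^{t,p}$, and $y_{i,j}^{t,p}$.

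For the primed matrices I would exploit the complementarity identity
\[
|T_k| R_{T_k} + (4^n - |T_k|) R_{T_k}' = \sum_{\alpha \in \cE^n} R_{T_k}^\alpha
\]
and evaluate the right-hand side directly. When $\alpha$ ranges over all of $\cE^n$ while $(\delta_1, \delta_2) \in T_k^2$ is fixed, the condition $\varphi(\alpha\delta_1) = \beta_0$ uniquely determines $\alpha$ as $\varphi(g^{-1}(\beta_0)\delta_1)$, and the additional requirement $\varphi(\alpha\delta_2) = \gamma_0$ reduces to $g(\varphi(\delta_1\delta_2)) = \varphi(\beta_0\gamma_0)$, a constraint depending only on $\wt{\beta_0\gamma_0} = i + j - t - p$. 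Consequently the $(\beta_0, \gamma_0)$-entry of the universal sum depends only on $i + j - t - p$ and equals $|T_k|\cdot x_{i+j-t-p, 0}^{0,0}$. Solving for $R_{T_k}'$ produces the stated expression, and the same argument delivers the formulas for $R_{T_1, T_2}'$, $R_{T_2, T_1}'$, and $R_{T_2}'$.

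The main technical obstacle is the combinatorial bookkeeping of the third step, namely checking that the universal sum $\sum_{\alpha \in \cE^n} R_{T_k}^\alpha$ collapses the four orbit parameters down to the single invariant $i + j - t - p$ with coefficient $|T_k|\cdot x_{i+j-t-p, 0}^{0,0}$. A complementary verification is identifying the orbit size $\gamma_{i,j}^{t,p} = 2^{t-p} 3^{i+j-t}\binom{n}{p, t-p, i-t, j-t}$ as the number of pairs in $\cE^n \times \cE^n$ with distance pattern $(i,j,t,p)$: among the $n$ coordinates, one chooses $p$ for matched non-identity Paulis (three choices each), $t-p$ for distinct non-identity Paulis ($3 \cdot 2$ choices each), and $i-t$, $j-t$ for single-sided non-identity Paulis (three choices each), combined via the multinomial $\binom{n}{p, t-p, i-t, j-t}$.
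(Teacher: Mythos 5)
Your proposal follows essentially the same route as the paper's proof: $\cG$-invariance places each matrix in the Terwilliger algebra $\cA_{n,4}$, the complementarity identity $\lvert T_1\rvert R_{T_1} + (4^n-\lvert T_1\rvert)R'_{T_1} = \sum_{\alpha\in\cE^n}R^\alpha_{T_1}$ lands in the Bose--Mesner algebra and collapses the parameters to $i+j-t-p$, and the coefficients are identified by counting triples against the orbit size $\gamma_{i,j}^{t,p}$ (the paper extracts them via the inner product $\langle R_{T_1}, M_{i,j}^{t,p}\rangle$ rather than a representative entry, an immaterial difference). The argument is correct and, like the paper, treats the first pair in detail with the remaining cases following mutatis mutandis.
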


\begin{proof}
The proof is similar to the proof of Theorem~\ref{thm:sdp_matrix}, as demonstrated in \cite{GST06}. For the sake of completeness, we will prove the first case, and the other cases follow similarly.

Since $R_{T_{1}}$ and $R'_{T{1}}$ are indexed by the elements in $\cE$, we consider the action of $\cG$  on $R_{T_{1}}$ and $R'_{T{1}}$ by permuting the row and column indices. Both $R_{T_{1}}$ and $R'_{T{1}}$ are invariant under the action of $\mathcal{G}$, indicating that $R_{T_{1}}, R'_{T{1}} \in \mathcal{A}_{n, 4}$. As a result, they can be expressed as a linear combination of $M_{i, j}^{t, p}$. Let 
\begin{align}R_{T_{1}} = \sum_{i, j, p, t}r_{i, j}^{t, p} M_{i, j}^{t, p}, \label{eq:RT1}
\end{align}
where $r_{i, j}^{t, p}\in\mathbb{C}$.
    
Consider the following positive semidefinite matrix

\begin{align*}
    U =& \lvert T_{1} \rvert R_{T_{1}} + (4^{n} - \lvert T_{1} \rvert) R'_{T_{1}} \\
    =& \sum_{\alpha \in \cE^{n}} R_{T_{1}}^{\alpha}\\ 
    =& \sum_{\alpha \in \cE^{n}} \sum_{g \in \cG} \chi^{g(\varphi(\alpha T_{1}))} \left(\chi^{g(\varphi(\alpha T_{1}))}\right)^{T}.
\end{align*}
One can check that $U$ remains invariant under the action of all permutations of row and column indices, making it an element of the Bose–Mesner algebra $\mathcal{B}_{n, 4}$ associated with the Hamming scheme $\mathcal{H}(n, 4)$   \cite{Del73,GST06}. As a reminder, the algebra $\mathcal{B}_{n, 4}$ is generated by the matrices
    \begin{equation*}
    \tilde{M}_{k} = \sum_{i+j-t-p = k} M_{i, j}^{t, p}
    \end{equation*}
    for $k=0,1,\dots,n$.
    Thus $U$ can be expressed as a linear combinations of $\tilde{M}_{k}$.     Let $U = \sum_{k} z_{k}\tilde{M}_{k}$ for some $z_k\in\mathbb{C}$. 
    For $\alpha \in \cE^{n}$ with ${\rm wt}(\alpha) = k$, one can observe that 
    \begin{equation*}
    z_{k} = \big(U\big)_{\alpha, I^{\otimes n}} = \lvert T_{1} \rvert \big(R_{T_{1}}\big)_{\alpha, I^{\otimes n}} = \lvert T_{1} \rvert r_{k, 0}^{0, 0},
    \end{equation*}
    where we have used the fact that $(R'_{T_{1}})_{\alpha, I^{\otimes n}} = 0$ by definition. Then
    \begin{align*}
        (4^{n} - \lvert T_{1} \rvert) R'_{T_{1}} =& U - \lvert T_{1} \rvert R_{T_{1}}\\
        =& \lvert T_{1} \rvert \left(\sum_{k}r_{k, 0}^{0, 0} \tilde{M}_{k} - \sum_{i, j, p, t} r_{i, j}^{t, p} M_{i,j}^{t, p}\right)\\
        =& \lvert T_{1} \rvert \sum_{i, j, p, t}\left(r_{i+j-p-t, 0}^{0, 0}  - r_{i, j}^{t, p}\right) M_{i,j}^{t, p},
    \end{align*}
    which implies that $R'_{T_{1}} = \frac{\lvert T_{1} \rvert}{4^{n}-\lvert T_{1} \rvert}\sum_{i, j, p, t}(r_{i+j-p-t, 0}^{0, 0}  - r_{i, j}^{t, p}) M_{i,j}^{t, p}$.

    It remains to determine $r_{i, j}^{t, p}$. Recall that $\langle M_{i, j}^{t, p}, M_{i, j}^{t, p}\rangle = \gamma_{i, j}^{t, p}$ \cite{GST06}. Let 
    \begin{align*}
        \cE_{i, j}^{t, p} = &\big\{(\alpha, \beta, \gamma) \in \cE^{n} \times \cE^{n} \times \cE^{n} : {\rm wt}(\alpha\beta) = i, {\rm wt}(\alpha\gamma) = j,\\
        &\lvert {\rm Supp}(\alpha\beta) \cap {\rm Supp}(\alpha\gamma) \rvert = t, \\
        &
        \lvert {\rm Supp}(\alpha\beta) \setminus {\rm Supp}(\beta\gamma) \rvert = p \big\}.
    \end{align*}
    By (\ref{eq:RT10}), we hav
    \begin{align*}
        \langle R_{T_1}, M_{i, j}^{t, p} \rangle =& \frac{1}{\lvert T_{1} \rvert} \sum_{\alpha  \in T_{1}} \langle R_{T_{1}}^{ \alpha }, M_{i, j}^{t, p} \rangle\\
        =& \frac{1}{\lvert T_{1} \rvert} \sum_{\alpha \in T_{1}} \lvert (\{ \alpha  \} \times T_{1} \times T_{1}) \cap \cE_{i, j}^{t, p} \rvert \\
        =& \frac{\sigma_{i, j}^{t, p}}{\lvert T_{1} \rvert}.
    \end{align*}
    Also, by (\ref{eq:RT1}),   $\langle R_{T_1}, M_{i, j}^{t, p} \rangle= r_{i, j}^{t, p}\gamma_{i, j}^{t, p}$.
     Thus $r_{i, j}^{t, p} = \frac{\sigma_{i, j}^{t, p}}{\lvert T_{1} \rvert \gamma_{i, j}^{t, p}} = x_{i, j}^{t, p}$.
\end{proof}

\begin{remark}
Unlike the SDP method for classical codes, we deal with four groups of positive semidefinite matrices in the quantum case, corresponding to the sets $T_{1}$ and $T_{2}$. To effectively characterize the minimum distance, we must establish semidefinite constraints that capture the distance relations between $T_{1}$ and $T_{2}$. For this reason, we require at least two groups of positive semidefinite matrices, as indicated in Lemma \ref{lem:sdp_cons_gen}, to implement the SDP method for quantum codes.

\end{remark}

Consequently, we have the following semidefinite constraints.
\begin{theorem}(Semidefinite constraints for quantum codes)
\label{thm:sdp_cons}
Let $\mathcal{Q}$ be a quantum code of length $n$ and minimum distance $d$. Let $T_{1}$ and $T_{2}$ be two subsets of $\cE^{n}$ satisfying  that $T_{1} \subset T_{2}$ and $T_2\setminus T_1$ is a set of errors undetectable by $\cQ$. Then, the variables $x_{i, j}^{t, p}$, $u_{i, j}^{t, p}$, $v_{i, j}^{t, p}$, and $y_{i, j}^{t, p}$ satisfy the follows constraints.
\begin{enumerate}
    \item
    The matrices $R_{T_{1}}$, $R_{T_{1}, T_{2}}$, $R_{T_{2}, T_{1}}$, $R_{T_{2}}$, $R'_{T_{1}}$, $R'_{T_{1}, T_{2}}$, $R'_{T_{2}, T_{1}}$, and $R'_{T_{2}}$ are positive semidefinite;
    \item
    $\lvert T_{1} \rvert = \sum_{i} \gamma_{i, 0}^{0, 0}x_{i, 0}^{0, 0} = \sum_{i} \gamma_{i, 0}^{0, 0} v_{i, 0}^{0, 0}$;
    \item
    $\lvert T_{2} \rvert = \sum_{i} \gamma_{i, 0}^{0, 0} u_{i, 0}^{0, 0} = \sum_{i} \gamma_{i, 0}^{0, 0} y_{i, 0}^{0, 0}$;
    \item
    $0 \leq  x_{i, j}^{t, p} \leq  u_{i, j}^{t, p},  v_{i, j}^{t, p} \leq \frac{\lvert T_{2} \rvert}{ \lvert T_{1} \rvert} y_{i, j}^{t, p}$, for all possible $i, j, p, t$;
    \item
    If $i < d$ or $j < d$, $x_{i, j}^{t, p} = v_{i, j}^{t, p}$;
    \item
    If $i < d$ and $j < d$, $x_{i, j}^{t, p} = u_{i, j}^{t, p} = v_{i, j}^{t, p}$;
    \item
    $x_{0, 0}^{0, 0} = u_{0, 0}^{0, 0} = y_{0, 0}^{0, 0} = v_{0, 0}^{0, 0} = 1$;
    \item
    $0 \leq x_{i, j}^{t, p} \leq x_{i, 0}^{0, 0}$, $0 \leq u_{i, j}^{t, p} \leq u_{i, 0}^{0, 0}$, 
    $0 \leq v_{i, j}^{t, p} \leq v_{i, 0}^{0, 0}$, 
    $0 \leq y_{i, j}^{t, p} \leq y_{i, 0}^{0, 0}$;
    \item
    $x_{i, j}^{t, p} = x_{i', j'}^{t', p'}$, $u_{i, j}^{t, p} = u_{i', j'}^{t', p'}$, $v_{i, j}^{t, p} = v_{i', j'}^{t', p'}$, and $y_{i, j}^{t, p} = y_{i', j'}^{t', p'}$ if $t-p = t'-p'$ and $(i$, $j$, $i+j-t-p)$ is a permutation of $(i', j', i'+j'-t'-p')$.
\end{enumerate}
\end{theorem}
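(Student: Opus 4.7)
The plan is to verify the nine constraints one at a time, exploiting the combinatorial definitions of $\sigma_{i,j}^{t,p}$, $\mu_{i,j}^{t,p}$, $\nu_{i,j}^{t,p}$, $\eta_{i,j}^{t,p}$ together with the hypothesis that $T_{1}\subset T_{2}$ and $T_{2}\setminus T_{1}$ is a set of errors undetectable by $\cQ$. Constraint 1 is immediate from Lemma \ref{lem:sdp_cons_gen}: each of the eight matrices is written there as a linear combination of the $M_{i,j}^{t,p}$ with coefficients $x,u,v,y$ (and their complements), and by construction each such matrix is a convex combination of rank-one matrices $\chi^{\cD}(\chi^{\cD})^{T}$, hence positive semidefinite.

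Constraints 2, 3, and 7 follow from a single observation: setting $j=t=p=0$ in any of the triple counts forces $\alpha=\gamma$ (since $\wt{\alpha\gamma}=0$), so that $\sigma_{i,0}^{0,0}$, $\mu_{i,0}^{0,0}$, $\nu_{i,0}^{0,0}$, $\eta_{i,0}^{0,0}$ reduce to counts of pairs $(\alpha,\beta)$ with $\wt{\alpha\beta}=i$ in $T_{1}\times T_{1}$, $T_{1}\times T_{2}$, $T_{1}\times T_{1}$, $T_{2}\times T_{2}$ respectively. Summing over $i$ returns $|T_{1}|^{2}$, $|T_{1}||T_{2}|$, $|T_{1}||T_{2}|$, $|T_{2}|^{2}$, and since $\gamma_{0,0}^{0,0}=1$ this supplies both the cardinality formulas and constraint 7. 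Constraint 9 records the elementary fact that two quadruples $(i,j,t,p)$ and $(i',j',t',p')$ index the same $\cG$-orbit of triples $(\alpha,\beta,\gamma)$ precisely when $t-p=t'-p'$ and $(i,j,i+j-t-p)$ is a permutation of $(i',j',i'+j'-t'-p')$; since the four counting functions depend only on the distance signature of $(\alpha,\beta,\gamma)$, the corresponding variables coincide.

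The chain of inequalities in constraint 4 follows directly from the set inclusions $T_{1}^{3}\subset T_{1}\times T_{2}\times T_{2}\subset T_{2}^{3}$ and $T_{1}^{3}\subset T_{2}\times T_{1}\times T_{1}\subset T_{2}^{3}$, which yield $\sigma\leq\mu\leq\eta$ and $\sigma\leq\nu\leq\eta$ componentwise; dividing by the appropriate $|T_{k}|\gamma_{i,j}^{t,p}$ produces the stated ordering, with the factor $|T_{2}|/|T_{1}|$ arising from the mismatch between the $|T_{1}|$ normalization of $u,v$ and the $|T_{2}|$ normalization of $y$. Constraint 8 is the quantum analogue of the Schrijver-type double-counting inequality for the Terwilliger algebra \cite{Sch05,GST06}: summing out the auxiliary parameters using the $\cG$-orbit structure of triples $(\alpha,\beta,\gamma)$ shows that restricting the count by $(t,p,j)$-conditions can only decrease the per-triple density, giving $x_{i,j}^{t,p}\leq x_{i,0}^{0,0}$ and likewise for $u$, $v$, $y$.

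The main obstacle is constraints 5 and 6, which require passing from the hypothesis that elements of $T_{2}\setminus T_{1}$ are undetectable (hence of weight at least $d$) to the conclusion that certain Pauli products must lie in $T_{1}$. Suppose $\alpha\in T_{2}\setminus T_{1}$ and $\beta\in T_{1}$ with $\wt{\alpha\beta}<d$; the aim is to rule out this configuration. This step exploits the structural fact implicit in the intended applications --- $T_{1}=\varphi(S_{I})$ and $T_{2}=\varphi(WS_{I})$ for an EA-CWS code by Lemma \ref{lem:EA-CWS_dist2}, or $T_{1}=\varphi(S)$ and $T_{2}=\varphi(N(S))$ for an EA stabilizer code --- namely that $T_{1}$ is a group and $T_{2}$ decomposes as a disjoint union of cosets of $T_{1}$, with every nontrivial coset contained in $T_{2}\setminus T_{1}$. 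Under this structure, $\alpha\in T_{2}\setminus T_{1}$ and $\beta\in T_{1}$ force $\alpha\beta$ to lie in the same nontrivial coset as $\alpha$, so $\wt{\alpha\beta}\geq d$ by the undetectability hypothesis; contrapositively, $\wt{\alpha\beta}<d$ forces $\alpha\in T_{1}$, which is exactly $\sigma_{i,j}^{t,p}=\nu_{i,j}^{t,p}$ when $i<d$. Repeating the argument in the $\gamma$ coordinate handles $j<d$ and completes constraint 5, and constraint 6 follows by applying the same reasoning symmetrically to both $\beta$ and $\gamma$ when $i<d$ and $j<d$.
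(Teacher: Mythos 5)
Your write-up is far more detailed than the paper's own proof, which disposes of all nine items in four sentences, and for most constraints your arguments are correct and essentially the ones the authors intend: constraint 1 from the rank-one decomposition behind Lemma~\ref{lem:sdp_cons_gen}, constraints 2, 3 and 7 from the collapse $j=t=p=0\Rightarrow\gamma=\alpha$, constraint 4 from the componentwise inclusions $\sigma\le\mu,\nu\le\eta$ together with the normalization mismatch, and constraint 8 from the Schrijver-type double-counting bound. Your treatment of constraints 5 and 6 actually goes beyond the paper: you correctly observe that the stated hypotheses ($T_1\subset T_2$ with $T_2\setminus T_1$ undetectable) do not by themselves force $\wt{\alpha\beta}\ge d$ for $\alpha\in T_2\setminus T_1$ and $\beta\in T_1$, and that one needs $T_1$ to be a group with $T_2\setminus T_1$ a union of nontrivial cosets, as in the intended instantiations $T_1=\varphi(S_I)$, $T_2=\varphi(WS_I)$ of Lemma~\ref{lem:EA-CWS_dist2}; the paper assumes this silently. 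One small slip: summing $\nu_{i,0}^{0,0}$ over $i$ gives $\lvert T_1\rvert^2$, not $\lvert T_1\rvert\lvert T_2\rvert$ (consistent with your own observation that the surviving pairs lie in $T_1\times T_1$, and necessary for the second equality in constraint 2 to return $\lvert T_1\rvert$ rather than $\lvert T_2\rvert$).

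The genuine gap is constraint 9 for the mixed variables $u$ and $v$. Your justification --- that the four counting functions ``depend only on the distance signature of $(\alpha,\beta,\gamma)$'' --- is valid only when the index set of the triple count is invariant under all permutations of the three coordinates. That holds for $T_1^3$ and $T_2^3$, giving the full symmetry of $x$ and $y$, but $T_1\times T_2\times T_2$ and $T_2\times T_1\times T_1$ are invariant only under the transposition of the last two coordinates, so your argument yields only the $i\leftrightarrow j$ symmetry for $u$ and $v$, not the full permutation of $(i,j,i+j-t-p)$. The stronger claim in fact fails under the theorem's stated hypotheses: take $T_1=\{I\otimes I\}$ and $T_2=\{I\otimes I,\,\sigma_x\otimes\sigma_x\}$ for a two-qubit code of distance $1$ that fails to detect $\sigma_x\otimes\sigma_x$ (e.g.\ the span of $\ket{00}$ and $\ket{11}$); then $\nu_{2,2}^{2,2}=1$ while $\nu_{2,0}^{0,0}=0$, although $t-p=t'-p'=0$ and $(2,2,0)$ is a permutation of $(2,0,2)$. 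So either constraint 9 must be restricted to $x$ and $y$ (keeping only the $i\leftrightarrow j$ swap for $u,v$), or extra hypotheses on $T_1,T_2$ must be isolated under which the full symmetry survives; in either case the proof has to engage with the asymmetry of the mixed index sets rather than appeal to a uniform orbit argument. This is a defect of the theorem as stated as much as of your proof, but your blanket claim papers over it rather than exposing it the way you exposed the hidden hypothesis in constraints 5 and 6.
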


\begin{proof}

Constraints 1) to 7) are based on the definitions of the variables $x_{i, j}^{t, p}$, $u_{i, j}^{t, p}$, $v_{i, j}^{t, p}$, $y_{i, j}^{t, p}$. Constraints 5) and 6) correspond to the minimum distance requirement. Constraint 8) enforces the positive semidefiniteness of matrices $R_{A}$, $R_{A, B}$, $R_{B, A}$ and $R_{B}$. Constraint 9) deals with the permutation of indices.
\end{proof}

To implement the semidefinite constraints, we can employ the following theorem to block diagonalize the matrices $R_{T_{1}}$, $R_{T_{1}, T_{2}}$, $R_{T_{2}, T_{1}}$, $R_{T_{2}}$, $R'_{T_{1}}$, $R'_{T_{1}, T_{2}}$, $R'_{T_{2}, T_{1}}$, and $R'_{T_{2}}$.

\begin{theorem}\cite[Block diagonal formula]{GST06}
\label{thm:block_dia}
The algebra ${\mathcal A}_{n, 4}$ is isomorphic to 
\begin{equation*}
    \bigoplus_{0 \leq a \leq k \leq n+a-k} \mathbb{C}^{(n+a-2k+1)^{2}}
\end{equation*}
with the isomorphism that maps $\sum_{i, j, p, t}x_{i, j}^{t, p}M_{i, j}^{t, p}$ to
\begin{equation}
    \bigoplus_{0 \leq a \leq k \leq n+a-k}\left(\sum_{t, p}\alpha(i, j, p, t, a, k)x_{i, j}^{t, p}\right)_{i, j = k}^{n+a-k},
\end{equation}
where
\begin{align*}
    \alpha(i, j, p, t, a, k) =& \beta_{i-a, j-a, k-a}^{n-a, t-a}(q-1)^{\frac{1}{2}(i+j)-t}\sum_{g = 0}^{p}(-1)^{a-g}\\
    &\cdot\binom{a}{g} \binom{t-a}{p-g}(q-2)^{t-a-p+g},
\end{align*}
\begin{align*}
    \beta_{i, j, k}^{m, t} =& \sum_{u = 0}^{m}(-1)^{t-u}\binom{u}{t}\binom{m-2k}{m-k-u}\binom{m-k-u}{i-u}\\
    &\cdot \binom{m-k-u}{j-u}.
\end{align*}
\end{theorem}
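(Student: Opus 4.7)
The plan is to establish the block diagonalization through the standard representation-theoretic framework for the Terwilliger algebra of the Hamming scheme $\mathcal{H}(n,4)$, viewing $\cA_{n,4}$ as the commutant of the action of $\cG = \left(\prod_{i=1}^n \cS_{3}\right) \times \cS_n$ on $\mathbb{C}^{\cE^n}$. By construction the matrices $M_{i,j}^{t,p}$ are precisely the characteristic functions of the $\cG$-orbits on $\cE^n \times \cE^n$, and conversely any $\cG$-invariant matrix indexed by $\cE^n$ is a linear combination of these. Thus $\cA_{n,4} = \mathrm{End}_{\cG}(\mathbb{C}^{\cE^n})$, and the double centralizer theorem reduces the problem to identifying the isotypic decomposition of $\mathbb{C}^{\cE^n}$ as a $\cG$-module.

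Next I would decompose this module by exploiting the tensor structure $\cE^n \cong \{I,\sigma_x,\sigma_y,\sigma_z\}^n$. The $\cS_3$ factors act coordinate-wise by permuting the three nontrivial symbols while fixing $I$, so $\mathbb{C}^{\cE^n}$ first stratifies by the support size, with each stratum carrying a $\cS_3^{\otimes n}$-module that $\cS_n$ then shuffles by positions. Applying the branching of symmetric group representations on these Young permutation modules, one obtains irreducible summands indexed by pairs $(a,k)$, where $k$ captures the outer $\cS_n$-label and $a$ encodes the inner $\cS_3^{\otimes n}$-label. The range $0 \le a \le k \le n+a-k$ and the multiplicity $n+a-2k+1$ fall out as the number of copies of the $(a,k)$-isotypic component inside $\mathbb{C}^{\cE^n}$.

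With multiplicities pinned down, Schur's lemma yields an algebra isomorphism $\cA_{n,4} \cong \bigoplus_{(a,k)} \mathbb{C}^{(n+a-2k+1) \times (n+a-2k+1)}$, which matches the claimed direct sum. To render the map explicit I would construct a system of primitive orthogonal idempotents and matrix units for $\cA_{n,4}$ out of zonal spherical functions of the nonbinary Hamming scheme together with Hahn polynomials indexed by $(a,k)$. Pairing each $M_{i,j}^{t,p}$ against the resulting matrix units then produces, in block $(a,k)$, the coefficient $\alpha(i,j,p,t,a,k)$; the factor $\beta_{i-a,j-a,k-a}^{n-a,t-a}$ arises from the outer Hahn polynomial contribution, the prefactor $(q-1)^{(i+j)/2 - t}$ from normalization of the scheme, and the alternating sum $\sum_{g=0}^{p}(-1)^{a-g}\binom{a}{g}\binom{t-a}{p-g}(q-2)^{t-a-p+g}$ from the inner $\cS_3$-inclusion--exclusion over the alphabet.

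The main obstacle is the combinatorial bookkeeping in extracting $\alpha$ and $\beta$ in closed form. Once the representation-theoretic skeleton is in place, the computation reduces to repeated applications of Vandermonde-type and Krawtchouk summation identities, but aggregating the inner and outer contributions into one clean expression is error-prone. I would sanity-check by specializing to $q=2$, where the formula should recover Schrijver's binary Terwilliger block diagonalization, and by directly verifying small cases such as $n=2,3$ where the full spectral decomposition of $\cA_{n,4}$ can be written out by hand.
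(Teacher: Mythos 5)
The paper does not prove this theorem: it is imported verbatim from Gijswijt--Schrijver--Tanaka \cite{GST06}, so there is no in-paper argument to compare yours against. Judged on its own terms, your representation-theoretic skeleton is the right one and is essentially the strategy of \cite{GST06}: $\cA_{n,4}$ is the commutant of $\cG$ on $\mathbb{C}^{\cE^n}$ because the $M_{i,j}^{t,p}$ are exactly the orbit indicators of $\cG$ on pairs, and the block structure with multiplicities $n+a-2k+1$ then follows from the isotypic decomposition via the double centralizer theorem. One refinement worth making explicit: the decomposition is not obtained by stratifying by support size but by splitting each coordinate as $\mathbb{C}^{4}\cong \mathbf{1}\oplus\mathbf{1}\oplus V$, where $V$ is the $(q-2)$-dimensional standard representation of $\cS_{q-1}$; the index $a$ counts the coordinates carrying $V$, and on the remaining $n-a$ coordinates one is reduced verbatim to Schrijver's binary block diagonalization with parameters $(n-a,k-a,i-a,j-a,t-a)$. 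This is precisely why the coefficient factors as $\beta_{i-a,j-a,k-a}^{n-a,t-a}$ times a correction coming from the inner alphabet, and it also explains your $q=2$ sanity check (there $V=0$, so $a=0$ is forced).

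The genuine gap is that the theorem's actual content is the closed-form coefficient $\alpha(i,j,p,t,a,k)$, and your proposal defers exactly that to ``combinatorial bookkeeping.'' Knowing that the algebra is abstractly isomorphic to $\bigoplus \mathbb{C}^{(n+a-2k+1)^2}$ is the easy half; without constructing an explicit orthogonal system of vectors spanning each isotypic component (or, equivalently, explicit matrix units) and pairing the $M_{i,j}^{t,p}$ against them, you cannot certify the formula for $\alpha$, the factor $(q-1)^{(i+j)/2-t}$, or the inner alternating sum over $g$, and these are what make the theorem usable for SDP. Also, the special functions relevant here are not the Hahn polynomials of the Johnson scheme; the inner sum over $g$ is an inclusion--exclusion over how the $p$ agreeing nonidentity positions distribute between the $a$ coordinates carrying $V$ and the $t-a$ remaining intersection positions, and it must be derived from that counting, not guessed. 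To complete the proof you would either reproduce the explicit basis construction of \cite{GST06} or verify the map on generators against the known binary case; as written, the argument establishes the shape of the decomposition but not the stated isomorphism.
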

Therefore, $R_{T_1}$ is positive semidefinite 
if and only if $  \left(\sum_{t, p}\alpha(i, j, p, t, a, k)x_{i, j}^{t, p}\right)_{i, j = k}^{n+a-k}$ is positive semidefinite
for $0 \leq a \leq k \leq n+a-k$.

In the case of classical nonlinear codes, the semidefinite constraints include Delsarte's inequalities~\cite{Sch05}. 
In this setting, we arrive a similar outcome
by using Lemma \ref{lem:sdp_cons_gen}.
\begin{corollary}
\label{coro:gen_lin}
 The variables $x_{k, 0}^{0, 0}$, $u_{k, 0}^{0, 0}$, and $y_{k, 0}^{0, 0}$ satisfy the following linear constraints:
    \begin{align}
        &\sum_{k}3^{k}\binom{n}{k}x_{k, 0}^{0, 0} K_{i}(k, n) \geq 0; \label{equ:lin_1}\\
        &\sum_{k}3^{k}\binom{n}{k}u_{k, 0}^{0, 0} K_{i}(k, n) \geq 0; \label{equ:lin_2}\\
        &\sum_{k}3^{k}\binom{n}{k}y_{k, 0}^{0, 0} K_{i}(k, n) \geq 0, \label{equ:lin_3}
    \end{align}
    for $i=0,1, \dots, n$, where $K_{i}(k, n)$ 
is the $4$-ary Kravchuk polynomial given in ~(\ref{eq:kraw}).
\end{corollary}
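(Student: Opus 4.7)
The plan is to apply a Kravchuk--character decomposition in the Hamming scheme $\mathcal{H}(n,4)$, realized via the Pauli commutation characters on $\cE^{n}$. First, I decode the three families of variables at the specialization $j=t=p=0$. The constraint $\wt{\alpha\gamma}=0$ forces $\gamma=\alpha$, which in turn makes the two support conditions automatic, so $\sigma_{k,0}^{0,0}=|\{(\alpha,\beta)\in T_{1}\times T_{1}:\wt{\alpha\beta}=k\}|$, and analogously $\mu_{k,0}^{0,0}=|\{(\alpha,\beta)\in T_{1}\times T_{2}:\wt{\alpha\beta}=k\}|$ (using $T_{1}\subset T_{2}$ to guarantee $\alpha\in T_{2}$) and $\eta_{k,0}^{0,0}=|\{(\alpha,\beta)\in T_{2}\times T_{2}:\wt{\alpha\beta}=k\}|$. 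Since $\gamma_{k,0}^{0,0}=3^{k}\binom{n}{k}$, the three products $3^{k}\binom{n}{k}x_{k,0}^{0,0}$, $3^{k}\binom{n}{k}u_{k,0}^{0,0}$, $3^{k}\binom{n}{k}y_{k,0}^{0,0}$ are, respectively, the normalized inner distance distribution of $T_{1}$, the normalized cross distance distribution between $T_{1}$ and $T_{2}$, and the normalized inner distance distribution of $T_{2}$.

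Next I introduce the commutation character $\chi_{\gamma}(\alpha)\in\{\pm 1\}$ on $\cE^{n}$ defined by $\gamma\alpha\gamma^{\dagger}=\chi_{\gamma}(\alpha)\alpha$; it is a group homomorphism in the sense that $\chi_{\gamma}(\alpha\beta)=\chi_{\gamma}(\alpha)\chi_{\gamma}(\beta)$. A position-wise generating-function calculation produces $\sum_{\gamma\in\cE^{n}}\chi_{\gamma}(\alpha)\,x^{\wt{\gamma}}=(1+3x)^{n-\wt{\alpha}}(1-x)^{\wt{\alpha}}$, since each trivial tensor factor of $\alpha$ contributes $1+3x$ and each non-trivial factor contributes $1-x$. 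Extracting the coefficient of $x^{i}$ recovers the $4$-ary Kravchuk polynomial $K_{i}(\wt{\alpha},n)$ of (\ref{eq:kraw}); hence $\sum_{\gamma:\wt{\gamma}=i}\chi_{\gamma}(\alpha)=K_{i}(\wt{\alpha},n)$, and combining this with multiplicativity yields the key identity $K_{i}(\wt{\alpha\beta},n)=\sum_{\gamma:\wt{\gamma}=i}\chi_{\gamma}(\alpha)\chi_{\gamma}(\beta)$.

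Substituting this identity into the left-hand side of (\ref{equ:lin_1}) and using the interpretation from the first paragraph rewrites it as $\tfrac{1}{|T_{1}|}\sum_{\gamma:\wt{\gamma}=i}\bigl(\sum_{\alpha\in T_{1}}\chi_{\gamma}(\alpha)\bigr)^{2}$, a manifest sum of squares, giving (\ref{equ:lin_1}); inequality (\ref{equ:lin_3}) then follows by the identical argument with $T_{2}$ in place of $T_{1}$. For (\ref{equ:lin_2}) the analogous manipulation produces $\tfrac{1}{|T_{1}|}\sum_{\gamma:\wt{\gamma}=i}\bigl(\sum_{\alpha\in T_{1}}\chi_{\gamma}(\alpha)\bigr)\bigl(\sum_{\beta\in T_{2}}\chi_{\gamma}(\beta)\bigr)$, a signed product rather than a square, and this is the main obstacle. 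The plan for closing it is to invoke the positive semidefiniteness of $R_{T_{1},T_{2}}$ from Theorem~\ref{thm:sdp_cons}: via the block diagonalization of Theorem~\ref{thm:block_dia}, the $(a,k)=(0,0)$ block of $R_{T_{1},T_{2}}$ is an $(n+1)\times(n+1)$ PSD matrix whose $i$-th diagonal entry, after collecting the terms supported at $j=t=p=0$, equals (up to a positive normalization) the cross character sum above, forcing it to be non-negative.
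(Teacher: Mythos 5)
Your treatment of (\ref{equ:lin_1}) and (\ref{equ:lin_3}) is correct, and it is a genuinely more elementary route than the paper's: you unpack $x_{k,0}^{0,0}$ and $y_{k,0}^{0,0}$ as normalized distance distributions, expand $K_i(\wt{\alpha\beta},n)$ through the symplectic commutation characters, and exhibit each left-hand side as a sum of squares. The paper instead forms $\tilde R_{T_1}=R_{T_1}+\frac{4^n-|T_1|}{|T_1|}R'_{T_1}$, uses Lemma~\ref{lem:sdp_cons_gen} to identify it with $\sum_k x_{k,0}^{0,0}\tilde M_k$, a positive semidefinite element of the Bose--Mesner algebra whose eigenvalues are the Kravchuk sums, and finishes with the symmetry $3^j\binom{n}{j}K_i(j,n)=3^i\binom{n}{i}K_j(i,n)$; your character computation is the hands-on version of that eigenvalue calculation, so for these two inequalities the arguments agree in substance but yours is self-contained.

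The gap is in (\ref{equ:lin_2}), and the proposed patch does not work. By Theorem~\ref{thm:block_dia}, the $i$-th diagonal entry of the $(a,k)=(0,0)$ block of $R_{T_1,T_2}$ is $\sum_{t,p}\alpha(i,i,p,t,0,0)\,u_{i,i}^{t,p}$: a combination of the variables $u_{i,i}^{t,p}$ with \emph{both} distance indices equal to $i$, not of the $u_{k,0}^{0,0}$. So positive semidefiniteness of $R_{T_1,T_2}$ gives nonnegativity of a different linear form and never produces $\sum_k3^k\binom{n}{k}u_{k,0}^{0,0}K_i(k,n)$. Moreover, your unease about the signed product $\sum_{\gamma:\wt{\gamma}=i}S_1(\gamma)S_2(\gamma)$ (with $S_m(\gamma)=\sum_{\alpha\in T_m}\chi_\gamma(\alpha)$) is well founded: no argument using only $T_1\subset T_2$ and $R_{T_1,T_2}\succeq0$ can succeed, because the inequality can fail for arbitrary nested sets. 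For instance, with $n=2$, $T_1=\{I\otimes I,\ \sigma_x\otimes\sigma_x\}$ and $T_2=T_1\cup\{\sigma_y\otimes\sigma_y,\ \sigma_y\otimes\sigma_z,\ \sigma_z\otimes\sigma_y,\ \sigma_z\otimes\sigma_z\}$, one finds $\mu_0=2$, $\mu_1=0$, $\mu_2=10$, hence $\sum_k 3^k\binom{2}{k}u_{k,0}^{0,0}K_1(k,2)=\frac{2}{2}\cdot 6+\frac{10}{2}\cdot(-2)=-4<0$. The paper's route to (\ref{equ:lin_2}) is to pass to the combination $R_{T_1,T_2}+\frac{4^n-|T_1|}{|T_1|}R'_{T_1,T_2}$, which by the cross-case expansion asserted in Lemma~\ref{lem:sdp_cons_gen} collapses to $\sum_k u_{k,0}^{0,0}\tilde M_k$ in the Bose--Mesner algebra; the entire burden of the cross inequality therefore rests on that expansion of $R'_{T_1,T_2}$ (the analogue of the vanishing $(R'_{T_1})_{\alpha,I^{\otimes n}}=0$ used in the $T_1$ case), which your proposal never invokes. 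To close the gap you must either import and verify that identity or derive the cross inequality from additional structure of $T_1$ and $T_2$ beyond mere containment; it is not a formal consequence of what you have established.
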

\begin{proof}
    We will prove the first case and the other cases follow similarly. Consider the matrix
    \begin{equation*}
        \tilde{R}_{T_{1}} = R_{T_{1}} + \frac{4^{n}- \lvert T_{1} \rvert}{\lvert T_{1} \rvert}R'_{T_{1}},
    \end{equation*}
    which is positive semidefinite since it is a convex combination of positive semidefinite matrices. By Lemma \ref{lem:sdp_cons_gen}, we have
    \begin{equation*}
        \tilde{R}_{T_{1}} = \sum_{i, j, t, p} x_{i+j-t-p, 0}^{0, 0} M_{i, j}^{t, p} = \sum_{k} x_{k, 0}^{0, 0} \left( \sum_{i+j-t-p = k}M_{i, j}^{t, p}\right),
    \end{equation*}
    which implies that $\tilde{R}_{T_{1}}$ is a linear combination of matrices in the Bose-Mesner algebra of the Hamming scheme \cite{GST06}, and $\tilde{R}_{T_{1}}$ is diagonalizable  with eigenvalues
    \begin{equation*}
        \sum_{i}x_{i, 0}^{0, 0} K_{i}(j, n)
    \end{equation*}
    for $j =0,1, \dots, n$ \cite{Del73,Del75}. Applying the symmetry relation of the Kravchuk polynomial \cite{MS77}
    \begin{equation*}
        3^{j} \binom{n}{j} K_{i}(j, n) = 3^{i} \binom{n}{i} K_{j}(i, n),
    \end{equation*}
    we obtain
    \begin{equation*}
        \frac{1}{3^{j}\binom{n}{j}}\sum_{i}3^{i}\binom{n}{i}x_{i, 0}^{0, 0} K_{j}(i, n) \geq 0.
    \end{equation*}
\end{proof}

{
\begin{remark}
    The variables $v_{k, 0}^{0, 0}$ also satisfy
    \[
    \sum_{k}3^{k}\binom{n}{k}v_{k, 0}^{0, 0} K_{i}(k, n) \geq 0,
    \]
    for $i = 0, 1, \dots, n$. However, by the definition, we have $x_{k, 0}^{0, 0} = v_{k, 0}^{0, 0}$ for all $k$. Those linear constraints are the same as  (\ref{equ:lin_1}).
\end{remark} 
}

Corollary \ref{coro:gen_lin} suggests that the semidefinite constraints are more general than the MacWilliams identities. For example, if $3^{j}\binom{n}{k} x_{k, 0}^{0,0} = A_{j}$ for all $j$, then \ref{equ:lin_1} represents the positiveness of the weight enumerator $\{B_{j}\}$.

\section{{Linear and} Semidefinite programming for quantum codes}
\label{sec:app}

In this section, we construct {linear and} semidefinite programs for EA-CWS codes.

Using Lemma~\ref{lemma:linear_constraints}, we have an LP for general EA quantum codes,
which can be used to construct upper bounds on the distance of general EA quantum codes. In particular, the resulting LP bounds for $[[n,k,d;c]]$ EA-stabilizer codes  
improve several entries on the results in ~\cite{Gra_tab}
and are summarized in Table~\ref{table:stabilizer},
where an $[[n,k,d;c]]$ EA-stabilizer code is an $((n,2^k,d;c))$ EA quantum code.
Remarkably, these upper bounds have matched the corresponding lower bounds provided by known codes~\cite{Gra_tab}, indicating the optimality of these codes  for the given parameters $n,k,c$.

\begin{table*}[htbp]
\centering
\begin{threeparttable}
\caption{{Optimal $[[n,k;c]]$ EA-stabilizer codes. }
}
\label{table:stabilizer}
\begin{tabular}{|c|c|c|c|c|c|c|c|c|c|c|c|}
\hline
\diagbox{$n$}{$k$} & $0$ & $1$ & $2 $& $3$ & $4$ & $5$ & $6$ & $7$ & $8$ & $9$ & $10$\\
\hline
$7$ & & & & & $[[7,4,3;3]]$ & & & & & & \\
\hline
$8$ & & & & & & $[[8,5,3;3]]$ & & & & & \\
\hline
$9$ & & & & & $[[9,4,5;5]]$ & 
$[[9,5,3;2]]$, $[[9,5,4;4]]$ & $[[9,6,3;3]]$ & & & & \\
\hline
$10$ & & & & $[[10,3,4;2]]$ & $[[10,4,4;2]]$ & & $[[10,6,3;2]]$ & & & & \\
\hline
\end{tabular}
\end{threeparttable}
\end{table*}

Let $\mathcal{Q}$ be an EA-CWS code with   CWS group $S$ and a set of   word operators $W$. Let $S_{I}$ be the isotropic group of $S$. As discussed in the previous section, we have a upper bound on the minimum distance of $\cQ$ in ~(\ref{eq:dmin_CWS_upper}). Thus we apply Theorem \ref{thm:sdp_cons} with $T_{1} = S_{I}$ and $T_{2} = WS_{I}$ to obtain semidefinite constraints for $\mathcal{Q}$.

Combining the semidefinite constraints with the linear constraints in Lemma~\ref{lemma:linear_constraints}, we have the following theorem.

\begin{theorem}(SDP for EA-CWS codes)
    Let $\mathcal{Q}$ be an $((n, M, d; c))$ EA-CWS code. Consider the variables $x_{i, j}^{t, p}$, $u_{i, j}^{t, p}$, $v_{i, j}^{t, p}$, $y_{i, j}^{t, p}$ for $1 \leq i, j \leq n$, $0 \leq p \leq t \leq {\rm min}\{i, j\}$, and $i+j \leq n+t$, and the variables {$A_{r, s}$, $B_{r, s}$ for $r = 0, \dots, n$ and $s = 0, \dots, c$}. Then, there is a feasible solution to the conditions in Theorem \ref{thm:sdp_cons} and Lemma \ref{lemma:linear_constraints} together with the following equality constraints 
    {
    \begin{align*}
        &3^{i}\binom{n}{i}y_{i, 0}^{0, 0} = \sum_{r+s = i} B_{r, s} \text{ for } i = 0, \dots, n.
    \end{align*}
    }
\end{theorem}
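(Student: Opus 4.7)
The plan is constructive: exhibit a feasible assignment directly from the given code. Take $T_1 = S_I$ and $T_2 = WS_I$ inside $\cE^n$, define the variables $x_{i,j}^{t,p}, u_{i,j}^{t,p}, v_{i,j}^{t,p}, y_{i,j}^{t,p}$ from the actual counts $\sigma_{i,j}^{t,p}, \mu_{i,j}^{t,p}, \nu_{i,j}^{t,p}, \eta_{i,j}^{t,p}$ on $(T_1,T_2)$ via Lemma~\ref{lem:sdp_cons_gen}, and let $A_{r,s}, B_{r,s}$ be the true split Shor--Laflamme weight enumerators of $\cQ$. The hypotheses of Theorem~\ref{thm:sdp_cons} hold for this pair: $I^{\otimes n}\in W$ gives $T_1\subset T_2$, and the looser upper bound~(\ref{eq:dmin_CWS_upper}) obtained from Lemma~\ref{lem:EA-CWS_dist2} says that $WS_I\setminus S_I$ is a set of errors undetectable by $\cQ$. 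Hence all nine items of Theorem~\ref{thm:sdp_cons}---positive semidefiniteness of the eight $R$-matrices, the cardinality identities, the vanishings below the minimum distance~$d$, and the symmetry/permutation relations---are automatic.

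The linear constraints of Lemma~\ref{lemma:linear_constraints} are likewise immediate, since they were derived in Section~\ref{sec:CWS} for any EA-CWS code; the genuine split enumerators $A_{r,s},B_{r,s}$ of $\cQ$ satisfy items~(1)--(9) there without further work.

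The only genuinely new ingredient is the coupling equality $3^i\binom{n}{i}\,y_{i,0}^{0,0}=\sum_{r+s=i}B_{r,s}$, which ties the SDP variables to the $B$-enumerators. Unpacking Lemma~\ref{lem:sdp_cons_gen}, the constraints $j=t=p=0$ together with $\wt{\alpha\gamma}=0$ force $\gamma=\alpha$ in $\cE^n$, which trivialises the intersection and set-difference conditions and reduces $\eta_{i,0}^{0,0}$ to $\lvert\{(\alpha,\beta)\in T_2\times T_2:\wt{\alpha\beta}=i\}\rvert$. Dividing by $\lvert T_2\rvert\gamma_{i,0}^{0,0}=\lvert T_2\rvert\cdot 3^i\binom{n}{i}$ therefore identifies $3^i\binom{n}{i}\,y_{i,0}^{0,0}$ with the distance enumerator of $T_2=WS_I$ at distance~$i$, and the combinatorial interpretation of the $B$-type split enumerators given in Lemma~\ref{lem:CWS_exp}, combined with the aggregation identity~(\ref{equ:id_split}), rewrites this distance enumerator as $\sum_{r+s=i}B_{r,s}$. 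The main source of care will be precisely this last identification---tracking the $\lvert T_2\rvert^{-1}$ normalisation on the SDP side against the $M^{-1}$ inside $B_{r,s}$ and separating the weight contributions on the first $n$ versus the last $c$ qubits---while everything else reduces to invoking results already in hand.
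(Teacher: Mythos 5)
Your construction is the same one the paper uses: take $T_1=S_I$ and $T_2=WS_I$, populate the SDP variables with the actual triple-distance counts via Lemma~\ref{lem:sdp_cons_gen}, and let $A_{r,s},B_{r,s}$ be the genuine split enumerators of $\cQ$; the paper offers nothing beyond this for the constraints of Theorem~\ref{thm:sdp_cons} and Lemma~\ref{lemma:linear_constraints}, so that part of your argument is fine. The gap is in the coupling equality, which is exactly the step you flag as delicate and then resolve incorrectly. You correctly reduce $3^i\binom{n}{i}\,y_{i,0}^{0,0}$ to the distance enumerator of $T_2=WS_I$ at distance $i$ (the conditions $j=t=p=0$ force $\gamma=\alpha$, and $\gamma_{i,0}^{0,0}=3^i\binom{n}{i}$). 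But Lemma~\ref{lem:CWS_exp} identifies that distance enumerator with $B_{i,0}$, not with $\sum_{r+s=i}B_{r,s}=B_i$: the aggregated quantity $B_i$ is, by the same lemma, the distance enumerator of the larger set $(W\otimes I^{\otimes c})S\subset\cE^{n+c}$, whose elements generically act nontrivially on the last $c$ qubits and therefore contribute to $B_{r,s}$ with $s\geq 1$. The identity~(\ref{equ:id_split}) aggregates the split enumerators into $B_i$; it does not convert the distance enumerator of $WS_I$ into that of $(W\otimes I^{\otimes c})S$. The two expressions coincide only when $c=0$ or when $B_{r,s}=0$ for all $s\geq 1$, so for a genuinely entanglement-assisted code your constructed assignment satisfies $3^i\binom{n}{i}\,y_{i,0}^{0,0}=B_{i,0}$ and in general violates the equality as stated. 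To close the gap you would either have to prove the coupling in the form $3^i\binom{n}{i}\,y_{i,0}^{0,0}=B_{i,0}$ (which is what Lemma~\ref{lem:CWS_exp} actually delivers, and is consistent with item~6 of Lemma~\ref{lemma:linear_constraints}), or exhibit a different $T_2$ whose distance enumerator really is $B_i$ --- but $(W\otimes I^{\otimes c})S$ lives in $\cE^{n+c}$ and is incompatible with the length-$n$ normalization $3^i\binom{n}{i}$ of $y_{i,0}^{0,0}$. As written, the final identification does not go through.
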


We provide a summary of the upper bounds on the size of EA-CWS codes for various lengths and levels of entanglement using  SDP in Table \ref{table:sdp}. Several of these upper bounds represent improvements over the bounds that rely solely on linear constraints~{from Lemma~\ref{lemma:linear_constraints}}.

In cases where no entanglement assistance is present (i.e., $c = 0$), the semidefinite constraints are shown to outperform the linear constraints from MacWilliams identities in some scenarios. For example, the CWS code with parameters $((10, 9, 4; 0))$ is found to be feasible using MacWilliams identities but infeasible under the semidefinite constraints. However, the same parameter  $((10, 9, 4; 0))$ can be excluded by the combination of MacWilliams identities and the  shadow identities. As of now, all SDP bounds are equal to their corresponding  LP bounds.

We can also note that as the parameter $d$ grows significantly in comparison to the length $n$, the utility of the LP bounds decreases. This is particularly evident for LP bounds with values that are nontrivial powers of $2$. However, in such cases, the SDP bounds continue to provide tight and meaningful results.

{
\begin{table*}[htbp]
\centering
\begin{threeparttable}
\caption{The upper bounds on the size of EA-CWS codes. 
{
The notation $a\ (b)$ indicates that $a$ is the result obtained through SDP, while $b$ is the result obtained through  LP. If an entry contains only one value, it means that the SDP and LP bounds are identical. }
}
\label{table:sdp}
\begin{tabular}{|c|c|c|c|c|c|c|c|c|}
\hline
\diagbox{$(n, c)$}{$d$} & $3$ & $4$ & $5$ & $6$ & $7$ & $8$ & $9$ & $10$\\
\hline
$(3, 1)$ & $2$ & & & & & & & \\
\hline
$(4, 1)$ & $2$ & $2$ & & & & & & \\
\hline
$(5, 1)$ & $4$ & $2$ & $2$ & & & & & \\
\hline
$(6, 1)$ & $5$ & $2$ & $2$ & $2$ & & & & \\
\hline
$(7, 1)$ & $9$ & $2$ & $2$ & $2$ & $2$ & & & \\
\hline
$(8, 1)$ & $18$ & $3$ & $2$ & $2$ & $2$ & $2$ & & \\
\hline
$(9, 1)$ & $32$ & $5$ & $2$ & $2$ & $2$ & $2$ & $2$ & \\
\hline
$(10, 1)$ & $58$ & $13$ & $2$ & $2$ & $2$ & $2$ & $2$ & $2$\\
\hline
$(3, 2)$ & $3\ (4)$ & & & & & & & \\
\hline
$(4, 2)$ & $4$ & $2\ (4)$ & & & & & & \\
\hline
$(5, 2)$ & $8$ & $4$ & $2\ (4)$ & & & & & \\
\hline
$(6, 2)$ & $11$ & $4$ & $4$ & $2\ (4)$ & & & & \\
\hline
$(7, 2)$ & $19$ & $5$ & $4$ & $3\ (4)$ & $2\ (4)$ & & & \\
\hline
$(8, 2)$ & $36$ & $8$ & $4$ & $4$ & $3\ (4)$ & $2\ (4)$ & & \\
\hline
$(9, 2)$ & $67$ & $18$ & $4$ & $4$ & $4$ & $3\ (4)$ & $2\ (4)$ & \\
\hline
$(10, 2)$ & $118$ & $31$ & $7$ & $4$ & $4$ & $4$ & $3\ (4)$ & $2\ (4)$\\
\hline
$(4, 3)$ & $8$ & $3\ (8)$ & & & & & & \\
\hline
$(5, 3)$ & $16$ & $6\ (8)$ & $3\ (8)$ & & & & & \\
\hline
$(6, 3)$ & $22$ & $8$ & $5\ (8)$ & $3\ (8)$ & & & & \\
\hline
$(7, 3)$ & $38$ & $11$ & $8$ & $4\ (8)$ & $2\ (8)$ & & & \\
\hline
$(8, 3)$ & $73$ & $18$ & $8$ & $8$ & $4\ (8)$ & $2\ (8)$ & & \\
\hline
$(9, 3)$ & $140$ & $36$ & $9$ & $8$ & $8$ & $4\ (8)$ & $2\ (8)$ & \\
\hline
$(10, 3)$ & $237$ & $70$ & $16$ & $8$ & $8$ & $7\ (8)$ & $4\ (8)$ & $3\ (8)$\\
\hline
$(5, 4)$ & $32$ & $10\ (16)$ & $3\ (16)$ & & & & & \\
\hline
$(6, 4)$ & $44$ & $16$ & $7\ (16)$ & $3\ (16)$ & & & & \\
\hline
$(7, 4)$ & $76$ & $22$ & $15\ (16)$ & $5\ (16)$ & $3\ (16)$ & & & \\
\hline
$(8, 4)$ & $146$ & $38$ & $16$ & $12\ (16)$ & $5\ (16)$ & $3\ (16)$ & & \\
\hline
$(9, 4)$ & $289$ & $72\ (73)$ & $19$ & $16$ & $10\ (16)$ & $4\ (16)$ & $3\ (16)$ & \\
\hline
$(10, 4)$ & $475$ & $140\ (144)$ & $33$ & $16$ & $16$ & $9\ (10)$ & $4\ (16)$ & $3\ (16)$\\
\hline
$(6, 5)$ & $89$ & $31\ (32)$ & $8\ (32)$ & $3\ (32)$ & & & & \\
\hline
$(7, 5)$ & $153$ & $45$ & $20\ (32)$ & $6\ (32)$ & $3\ (32)$ & & & \\
\hline
$(8, 5)$ & $292$ & $76$ & $32$ & $16\ (32)$ & $5\ (32)$ & $3\ (32)$ & & \\
\hline
$(9, 5)$ & $585$ & $145\ (146)$ & $32$ & $32$ & $13\ (32)$ & $5\ (32)$ & $3\ (32)$ & \\
\hline
$(7, 6)$ & $307$ & $90$ & $28\ (64)$ & $7\ (64)$ & $3\ (64)$ & & & \\
\hline
$(8, 6)$ & $585$ & $153$ & $64$ & $20\ (64)$ & $6\ (64)$ & $3\ (64)$ & & \\
\hline
$(9, 6)$ & $1170$ & $290\ (292)$ & $64$ & $51\ (64)$ & $17\ (64)$ & $5\ (64)$ & $3\ (64)$ & \\
\hline
$(8, 7)$ & $1170$ & $306\ (307)$ & $112\ (128)$ & $26\ (128)$ & $6\ (128)$ & $3\ (128)$ & & \\
\hline
$(9, 7)$ & $2340$ & $581\ (585)$ & $153$ & $74\ (128)$ & $22\ (128)$ & $6\ (128)$ & $3\ (128)$ & \\
\hline
$(9, 8)$ & $4681$ & $1164\ (1170)$ & $307$ & $96\ (256)$ & $24\ (256)$ & $6\ (256)$ & $3\ (256)$ & \\
\hline
\end{tabular}
\end{threeparttable}
\end{table*}
}

\section{Quantum enumerators for CWS codes}
\label{sec:quantum_enum}
In this section, we discuss the interpretations for the  $A$-type Shor-Laflamme weight enumerator $\{A_{j}\}$   in Def.~\ref{def:SL} and the quantum shadow enumerator $\{D_j\}$ in Def.~\ref{def:shadow} for a CWS code.
 
For $W\subset\cP^n$ and $\beta\in\cP^n$, define
\begin{align*}
    W_{\textrm{c}}(\beta) =& \lvert \{\omega\in W : \omega\beta=\beta \omega\}\rvert,\\
    W_{\textrm{a}}(\beta) =& \lvert \{\omega\in W :\omega\beta = -\beta \omega\}\rvert.
\end{align*}
The following theorem provides  an interpretation of $\{A_{j}\}$ 
in terms of the commutation relations between the CWS group and word operators of a CWS code.
\begin{theorem}
\label{thm:A}
Let $\mathcal{Q}$ be an  $((n, M, d; 0))$ CWS code with CWS group $S$ and a set of word operators $W$. Then,
    \begin{equation}
        A_{j} = \frac{1}{M^{2}}\sum_{\beta \in \mathcal{E}_{j}^{n} \cap S} \left\lvert (W_{\textrm{c}}(\beta) - W_{\textrm{a}}(\beta)) \right\rvert^{2}.
    \end{equation}
    Moreover, for $0 \leq j < d$, we have
    \begin{equation}
        A_{j} = \left\lvert \{\beta \in \mathcal{E}_{j}^{n} \cap S :  W_{\textrm{c}}(\beta) = M\} \right\rvert.
    \end{equation}
\end{theorem}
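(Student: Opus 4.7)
The plan is to start from the Shor--Laflamme expression \eqref{eq:SL_A} and insert the CWS basis $\ket{v_i} = \omega_i\ket{S}$ (valid since $c=0$). For $\beta \in \mathcal{E}_j^n$ each diagonal matrix element becomes $\bra{v_i}\beta\ket{v_i} = \bra{S}\omega_i^\dagger \beta\,\omega_i\ket{S}$, and since $\omega_i$ and $\beta$ are $n$-fold Pauli operators, $\omega_i^\dagger \beta\,\omega_i = \epsilon_i(\beta)\,\beta$ with $\epsilon_i(\beta) = +1$ when $\omega_i$ commutes with $\beta$ and $-1$ otherwise. Summing over $i$ and separating the commuting and anticommuting contributions yields
\[
\sum_i \bra{v_i}\beta\ket{v_i} = \bigl(W_{\mathrm c}(\beta) - W_{\mathrm a}(\beta)\bigr)\bra{S}\beta\ket{S}.
\]

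The next step is to evaluate $\bra{S}\beta\ket{S}$. The proof of Lemma~\ref{lem:spc_CWS} already established that for a CWS group $S$ and $\beta \in \mathcal{E}^n$, $\bra{S}\beta\ket{S}$ equals $\pm 1$ when $\beta \in \pm S$ and vanishes otherwise. Consequently $\lvert\bra{S}\beta\ket{S}\rvert^2$ behaves as the indicator that $\beta$ (up to sign) lies in $S$, so only such $\beta$ survive in \eqref{eq:SL_A}. Substituting back gives the first claim
\[
A_j = \frac{1}{M^2}\sum_{\beta \in \mathcal{E}_j^n \cap S} \bigl(W_{\mathrm c}(\beta) - W_{\mathrm a}(\beta)\bigr)^2,
\]
where I interpret $\mathcal{E}_j^n \cap S$ via $\varphi$ so that both sign choices of the phase are collected into a single $\beta \in \mathcal{E}_j^n$; since $W_{\mathrm c}$ and $W_{\mathrm a}$ are phase-insensitive this causes no ambiguity.

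For the second claim I invoke the minimum-distance hypothesis. If $j < d$ then every $\beta \in \mathcal{E}_j^n$ is detectable, so the QEC condition \eqref{error_con_1} forces $\bra{v_i}\beta\ket{v_i}$ to be independent of $i$, i.e.\ $\epsilon_i(\beta)\bra{S}\beta\ket{S}$ is constant in $i$. Whenever $\beta \in \pm S$ the factor $\bra{S}\beta\ket{S}$ is nonzero, so $\epsilon_i(\beta)$ itself is constant in $i$; combined with $\omega_1 = I^{\otimes n}$ (which forces $\epsilon_1(\beta)=+1$) this gives $\epsilon_i(\beta) = +1$ for every $i$, hence $W_{\mathrm c}(\beta) = M$ and $W_{\mathrm a}(\beta) = 0$. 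Substituting $(W_{\mathrm c} - W_{\mathrm a})^2 = M^2$ into the first formula cancels the prefactor $1/M^2$ and produces the stated counting expression.

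The main obstacle is the bookkeeping around the phase convention in ``$\mathcal{E}_j^n \cap S$'', since $S$ may contain operators of the form $-\beta$ while $\mathcal{E}_j^n$ consists of unsigned Paulis. The cleanest way around this is to note that both the commutation counts $W_{\mathrm c}, W_{\mathrm a}$ and the absolute value $\lvert\bra{S}\beta\ket{S}\rvert$ depend only on $\varphi(\beta)$, so the summand is unchanged under the phase collapse and the two sign choices are bijectively identified.
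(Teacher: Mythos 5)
Your proposal is correct and follows essentially the same route as the paper's proof: insert the CWS basis $\{\omega_k\ket{S}\}$ into the Shor--Laflamme formula, use the commute/anticommute dichotomy to pull out the factor $W_{\mathrm c}(\beta)-W_{\mathrm a}(\beta)$, restrict to $\beta\in\pm S$ via $\lvert\bra{S}\beta\ket{S}\rvert$, and for $j<d$ invoke the detectability condition together with $\omega_1=I^{\otimes n}$ to force $W_{\mathrm c}(\beta)=M$. Your explicit handling of the phase bookkeeping and of the case $\bra{S}\beta\ket{S}\neq 0$ in the second part is slightly more careful than the paper's, but it is the same argument.
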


\begin{proof}
    Let $| S \rangle$ be the stabilize state of $S$ and $W = \{\omega_{k}\}_{k=1}^M$ with $\omega_1=I^{\otimes n}$.
    A basis for $\cQ$ is $\{\omega_k \ket{S}\}_{k=1}^M$.
    Then,  by Lemma~\ref{lemma:SL_enum},
    \begin{align*}
        A_{j} =& \frac{1}{M^{2}}\sum_{\beta \in \mathcal{E}_{j}^{n}} \left\lvert\sum_{k=1}^M  \langle S | \omega_{k}^{\dagger} \beta \omega_{k}| S \rangle \right\rvert^{2}\\
        =& \frac{1}{M^{2}}\sum_{\beta \in \mathcal{E}_{j}^{n}} \left\lvert (W_{\textrm{c}}(\beta) - W_{\textrm{a}}(\beta)) \langle S| \beta| S\rangle\right\rvert^{2}\\
        =& \frac{1}{M^{2}}\sum_{\beta \in \mathcal{E}_{j}^{n} \cap S} \left\lvert (W_{\textrm{c}}(\beta) - W_{\textrm{a}}(\beta)) \right\rvert^{2},
    \end{align*}
    where the second equality is because any two Pauli operators either commute or anticommute.

   By definition, any $\beta \in \mathcal{E}^{n}$ of weight less than the minimum distance $d$ is detectable by $\cQ$.  
      For $0\leq j < d$, the QEC condition~(\ref{error_con_1}) implies for  $\beta \in \mathcal{E}_j^{n}$, 
      $\langle S | \omega_{k}^{\dagger} \beta \omega_{k}| S \rangle= \langle S |   \beta  | S \rangle$ for all $k$ and  thus $\omega_{k}^{\dagger} \beta \omega_{k}=\beta$. Consequently, for $0\leq j < d$

    \begin{equation*}
        A_{j} = \left\lvert \{\beta \in \mathcal{E}_{j}^{n}: \beta \in S \text{ and } W_{\textrm{c}}(\beta) = M\} \right\rvert.
    \end{equation*}

\end{proof}


Next, let us consider the quantum shadow enumerators~\cite{R99}.
In Def.~\ref{def:shadow}, the shadow enumerator involves 
the complex conjugate of the code projector. 
For CWS-type codes,  one can observe  that 
the conjugate of a CWS state and the CWS state differ by only  a Pauli operator. Using the canonical representation of stabilizer states \cite{DDM03,VDN10},
we know that the conjugate state of a stabilizer state $\ket{S}$ is of the form:
\begin{align}\overline{|S \rangle} = \zeta \ket{S}\end{align}
where  $\zeta$  is an $n$-fold Pauli operator that is  the tensor product of $I$'s and $\sigma_z$'s.

\begin{theorem}
    \label{lem:shadow_main}
    Let $\mathcal{Q}$ be an $((n, M, d; 0))$ CWS code with  CWS group $S$ and a set of word operators $W$. Let $|S \rangle$ be the stabilizer state of $S$ and  $\overline{|S  \rangle}$ be its complex conjugate. Suppose that $\overline{|S \rangle} = \zeta \ket{S}$ for some  $\zeta\in\cP^n$. Then, the quantum shadow enumerator $\{D_j\}_{j=1}^n$ for $\mathcal{Q}$ is given by
    \begin{equation}
        D_{j} = \left\{
        \begin{aligned}
            &\frac{f(j)}{M}, & &\text{if }  \sigma_{y}^{\otimes n}\zeta \in \pm S;\\
            &f(j), & & \text{otherwise},
        \end{aligned}
        \right.
    \end{equation}
    where 
    \begin{align*}
        f(j) =& \frac{\lvert\{(\alpha, \beta) \in W S \times W\sigma_{y}^{\otimes n} \zeta S : \wt{\alpha\beta}= j \}\rvert}{\lvert S \rvert}.
        \end{align*}
\end{theorem}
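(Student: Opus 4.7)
The plan is to follow the expand-in-the-CWS-basis approach used in the proof of Lemma \ref{lem:CWS_exp} for the $B$-type enumerator, now applied with $\hat{P}$ in place of the second copy of $P$, and then match the result against the two branches of Definition \ref{def:shadow}. The key input is the canonical form $\overline{|S\rangle}=\zeta|S\rangle$.

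\textbf{Expanding $\hat{P}$.} Starting from $P=\sum_k \omega_k|S\rangle\langle S|\omega_k^{\dagger}$ together with the componentwise rules $\bar{I}=I,\bar{\sigma}_x=\sigma_x,\bar{\sigma}_y=-\sigma_y,\bar{\sigma}_z=\sigma_z$, the sign $(-1)^{y(\omega_k)}$ (counting $\sigma_y$ positions in $\omega_k$) that appears on each side of $\bar{\omega}_k$ cancels, giving $\bar{P}=\sum_k\omega_k\zeta|S\rangle\langle S|\zeta\omega_k^{\dagger}$. Conjugating by $\sigma_y^{\otimes n}$ and commuting it past each $\omega_k$ produces two further factors $(-1)^{a(\omega_k)}$ (counting $\sigma_x,\sigma_z$ positions), which again cancel. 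With $g=\sigma_y^{\otimes n}\zeta$ this gives
\[
\hat{P}=\sum_k \omega_k g|S\rangle\langle S|g\omega_k^{\dagger}.
\]

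\textbf{Computing the numerator.} Substituting the two expansions into $B_{i,j}(P,\hat{P})=\sum_{\beta\in\cE_j^n}{\rm Tr}(\beta P\beta\hat{P})$ yields
\[
{\rm Tr}(\beta P\beta\hat{P})=\sum_{a,b}\bigl|\langle S|\omega_a^{\dagger}\beta\omega_b g|S\rangle\bigr|^2.
\]
Since $|S\rangle$ is stabilized by the maximal group $S$, each squared overlap is $1$ when $\omega_a^{\dagger}\beta\omega_b g\in\pm S$ and $0$ otherwise. Summing over $\beta\in\cE_j^n$, for each pair $(a,b)$ the count is the number of $s\in S$ with $\wt{\omega_a g\omega_b s}=j$. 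Reparameterizing $\alpha=\omega_a s_a\in WS$ and $\beta=\omega_b g s_b\in WgS$, and noting that the map $(s_a,s_b)\mapsto s_as_b$ is $|S|$-to-$1$, identifies the total with $f(j)$, so $B_{i,j}(P,\hat{P})=f(j)$.

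\textbf{Selecting the branch.} Finally I would evaluate ${\rm Tr}(P\hat{P})$ to pick the correct case of Definition \ref{def:shadow}. If $g\in\pm S$ then $g|S\rangle=\pm|S\rangle$, so $\hat{P}=P$ and ${\rm Tr}(P\hat{P})={\rm Tr}(P)=M$, which gives $D_j=f(j)/M$. The main obstacle is the other branch: when $g\notin\pm S$, I must show ${\rm Tr}(P\hat{P})=0$. The same overlap analysis with $\beta=I^{\otimes n}$ reduces this to ruling out pairs $(a,b)$ with $\omega_a^{\dagger}\omega_b g\in\pm S$; since the $\{\omega_k\}$ are distinct coset representatives of $\pm S$ in $\cP^n$, for each $a$ there is at most one such $b$, and the obstruction is to show that the coset $\omega_a g\cdot\pm S$ never meets $W\cdot\pm S$. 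Controlling this intersection purely from the hypothesis $g\notin\pm S$, while carefully tracking the $i$-phases that distinguish $\cP^n$ from $\cE^n$, is the delicate step of the argument.
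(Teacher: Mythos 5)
Your first two steps --- pushing the complex conjugation and the $\sigma_y^{\otimes n}$ conjugation through the word operators to obtain $\hat{P}=\sum_k\omega_k\,\sigma_y^{\otimes n}\zeta\ket{S}\bra{S}\zeta^{\dagger}\sigma_y^{\otimes n}\,\omega_k^{\dagger}$ (up to irrelevant signs), and then counting the overlaps $\lvert\bra{S}\omega_a^{\dagger}\beta\,\omega_b\,\sigma_y^{\otimes n}\zeta\ket{S}\rvert^2$ over $\beta\in\cE_j^n$ to identify the numerator with $f(j)$ --- are exactly the computation the paper performs, so on that part the two arguments coincide.

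The gap you flag at the end is genuine, and you should be aware that the paper's own proof does not actually close it. The paper asserts ${\rm Tr}(P\hat{P})=M\lvert\bra{S}\sigma_y^{\otimes n}\zeta\ket{S}\rvert^2$, which would make the branch condition ``${\rm Tr}(P\hat{P})\neq 0$'' of Definition~\ref{def:shadow} equivalent to ``$\sigma_y^{\otimes n}\zeta\in\pm S$'' and would give the normalization $M$. But ${\rm Tr}(P\hat{P})=\sum_{i,k}\lvert\bra{S}\omega_i^{\dagger}\sigma_y^{\otimes n}\overline{\omega}_k\zeta\ket{S}\rvert^2$ is a double sum over independent indices, and the paper's chain of equalities silently collapses it to the diagonal $i=k$; the off-diagonal term $(i,k)$ equals $1$ precisely when $\varphi(\sigma_y^{\otimes n}\zeta)\in\varphi(\omega_i\omega_k S)$, which is exactly the intersection you identify as needing to be ruled out, and nothing in the hypotheses forbids it. For instance, if $\sigma_y^{\otimes n}\zeta\notin\pm S$ and $W=\{I^{\otimes n},\sigma_y^{\otimes n}\zeta\}$ is a legitimate word-operator set, then ${\rm Tr}(P\hat{P})=2\neq 0$, so Definition~\ref{def:shadow} would normalize by $2$ rather than leave $f(j)$ unnormalized as the second branch of the theorem claims. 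Your proposal is therefore incomplete at precisely the step where the paper's argument is also unjustified; to finish either proof one must compute ${\rm Tr}(P\hat{P})$ as the full count $\lvert\{(i,k):\varphi(\sigma_y^{\otimes n}\zeta)\in\varphi(\omega_i\omega_k S)\}\rvert$ and restate the branch condition (and normalization) accordingly, or else add a hypothesis excluding the off-diagonal coincidences.
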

\begin{proof}
Let $P$ be the projector onto $\cQ$ and 
$\hPQ$ be given as in Def.~\ref{def:shadow}.
Let $W = \{\omega_{k}\}_{k=1}^M$. Then, $P=\sum_{k=1}^M \omega_{k} \ket{S}\bra{S}\omega_{k}^{\dagger}$  and 
 $\hPQ=\sum_{k=1}^M \sigma_{y}^{\otimes n}\overline{\omega}_{k} \overline{\ket{S}} \overline{\bra{S}}\overline{\omega}_{k}^{\dagger} \sigma_{y}^{\otimes n}$.
After some calculations, we have
\begin{align*}
    {\rm Tr}(\PQ\hPQ) =& \sum_{j, k} \langle S| w_{j}^{\dagger} w_{k} | S \rangle \langle S| w_{k}^{\dagger} \left(\sum_{i} w_{i}|S \rangle \langle S| w_{i}^{\dagger}\right) \sigma_{y}^{\otimes n}\overline{\omega}_{k} \\
    &\cdot \overline{\ket{S}}\overline{\bra{S}}\overline{\omega}_{k}^{\dagger} \sigma_{y}^{\otimes n} w_{j} | S \rangle \\
    =& \sum_{k} \langle S| w_{k}^{\dagger} \sigma_{y}^{\otimes n}\overline{\omega}_{k}\overline{\ket{S}} \overline{\bra{S}}\overline{\omega}_{k}^{\dagger} \sigma_{y}^{\otimes n} w_{k} | S \rangle\\
    =& \sum_{k} \left\lvert\langle S| w_{k}^{\dagger} \sigma_{y}^{\otimes n}\overline{\omega}_{k}\overline{\ket{S}} \right\rvert^{2}\\
    =& \sum_{k} \left\lvert\langle S| \sigma_{y}^{\otimes n}\overline{\ket{S}} \right\rvert^{2}\\
    =& M \left\lvert\langle S| \sigma_{y}^{\otimes n}\zeta \ket{S} \right\rvert^{2}.
\end{align*}

If $ \left\lvert\langle S| \sigma_{y}^{\otimes n} \zeta \ket{S} \right\rvert^{2} \neq 0$,  by ~(\ref{eq:Phat}), we have 
\begin{align*}
    D_{j}  =& \frac{1}{M \left\lvert\langle S| \sigma_{y}^{\otimes n}\zeta \ket{S} \right\rvert^{2}}\sum_{\beta \in \mathcal{E}_{j}^{n}}{\rm Tr}(\sigma_{y}^{\otimes n} \beta \PQ \beta \sigma_{y}^{\otimes n}  \overline{P}_{\mathcal{Q}})\\
    =& \frac{1}{M \left\lvert\langle S| \sigma_{y}^{\otimes n}\zeta \ket{S} \right\rvert^{2}}\sum_{\beta \in \mathcal{E}_{j}^{n}}\sum_{i, k} \overline{\langle S|} \overline{w}_{i}^{\dagger} \sigma_{y}^{\otimes n} \beta w_{k}|S \rangle\\
    & \cdot \langle S| w_{k}^{\dagger} \beta \sigma_{y}^{\otimes n} \overline{w}_{i}  \overline{| S\rangle}\\
    =& \frac{1}{M \left\lvert\langle S| \sigma_{y}^{\otimes n}\zeta \ket{S} \right\rvert^{2}}\sum_{\beta \in \mathcal{E}_{j}^{n}}\sum_{i, k} \left\lvert \overline{\langle S|} \overline{w}_{i}^{\dagger} \sigma_{y}^{\otimes n} \beta w_{k}|S \rangle \right\rvert^{2}\\
    =& \frac{1}{M \left\lvert\langle S| \sigma_{y}^{\otimes n}\zeta \ket{S} \right\rvert^{2}}\sum_{\beta \in \mathcal{E}_{j}^{n}}\sum_{i, k} \left\lvert \langle S| \zeta^{\dagger}\overline{w}_{i}^{\dagger} \sigma_{y}^{\otimes n} \beta w_{k}|S \rangle \right\rvert^{2}.
\end{align*}
Recall that for $\omega \in \cE^{n}$,
\begin{equation*}
    \lvert \langle S|\omega|S \rangle \rvert = \left\{
    \begin{aligned}
    &1, & &\text{if } \omega \in \pm S;\\
    &0, & &\text{otherwise}.
    \end{aligned}
    \right.
\end{equation*}
Thus,
\begin{equation*}
    \left\lvert \langle S| \zeta^{\dagger} \overline{w}_{i}^{\dagger} \sigma_{y}^{\otimes n} \beta w_{k}|S \rangle \right\rvert = \left\{
    \begin{aligned}
        &1, & &\text{if } \beta \in \pm  w_{i} w_{k} \sigma_{y}^{\otimes n} \zeta S;\\
        &0, & &\text{otherwise}.
    \end{aligned}
    \right.
\end{equation*}
Then, we can rewrite $D_{j}$ as
\begin{align*}
    D_{j} 
    =& \frac{f(j)}{M}.
\end{align*}
If $\left\lvert\langle S| \sigma_{y}^{\otimes n}\overline{\ket{S}} \right\rvert^{2} = 0$, $D_{j} = \sum_{\beta \in \mathcal{E}_{j}^{n}}{\rm Tr}(\sigma_{y}^{\otimes n} \beta \PQ \beta \sigma_{y}^{\otimes n}  \overline{P}_{\mathcal{Q}})$. Thus, we have
\begin{equation*}
    D_{j} = f(j).
\end{equation*}
\end{proof}

Therefore, the shadow enumerator $\{D_j\}$ is the distance enumerator between $WS$ and $\big(\zeta \sigma_y^{\otimes n}\big) WS$.
 
 \begin{corollary}
    Let $\mathcal{Q}$ be an $((n, M, d; 0))$ CWS code with   CWS group $S$ and a set of word operators $W$. Let $|S \rangle$ be the stabilizer state of $S$ and $\overline{| S \rangle}$ be its complex conjugate. 
    If $\overline{|S \rangle} = {\bm{i}}^k|S \rangle$ for $k \in \{0,1,2,3\}$ and $\sigma_{y}^{\otimes n} \in WS$, then
    \begin{equation}
        D_{j} = \left\{
        \begin{aligned}
            &B_{j}, & & \text{if } \sigma_{y}^{n} \in \pm S;\\
            &\frac{d'(\alpha, \beta)}{\lvert S \rvert}, & &\text{otherwise,}
        \end{aligned} \right.
    \end{equation}
    where
    \begin{equation*}
        d'(\alpha, \beta) = \lvert\{(\alpha, \beta) \in W S \times W\sigma_{y}^{\otimes n}S : \wt{\alpha\beta}= j \}\rvert
    \end{equation*}
    and $\{B_{j}\}$ is the $B$-type SL weight enumerator for $\cQ$.
\end{corollary}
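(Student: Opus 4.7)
The plan is to derive the corollary directly from Theorem \ref{lem:shadow_main} by using the extra hypothesis $\overline{|S\rangle} = \bm{i}^k|S\rangle$ to pin down the operator $\zeta$ appearing there. Combining $\overline{|S\rangle} = \zeta|S\rangle$ (the canonical form) with $\overline{|S\rangle} = \bm{i}^k|S\rangle$ yields $\zeta|S\rangle = \bm{i}^k|S\rangle$. Since $\zeta$ is a tensor product of $I$'s and $\sigma_z$'s, it is a real Hermitian Pauli operator with $\pm 1$ eigenvalues, so necessarily $\bm{i}^k \in \{+1,-1\}$ and $\pm\zeta \in S$. Equivalently, $\zeta \in \pm S$, which gives the Pauli-set identity $\zeta S = \pm S$.

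With $\zeta S = \pm S$, the membership test ``$\sigma_y^{\otimes n}\zeta \in \pm S$'' used to branch Theorem \ref{lem:shadow_main} collapses to ``$\sigma_y^{\otimes n} \in \pm S$'', matching the two cases of the corollary. In Case 1, where $\sigma_y^{\otimes n} \in \pm S$, the first branch of Theorem \ref{lem:shadow_main} gives $D_j = f(j)/M$. Since both $\sigma_y^{\otimes n}$ and $\zeta$ lie in $\pm S$, the set $W\sigma_y^{\otimes n}\zeta S$ equals $WS$, so
\[
D_j = \frac{1}{M|S|}\Big|\{(\alpha,\beta) \in WS \times WS : \wt{\alpha\beta} = j\}\Big|.
\]
By Lemma \ref{lem:CWS_exp} the enumerator $\{B_j\}$ is precisely the distance enumerator of $WS$; moreover, the orthonormality of the basis $\{\omega_k|S\rangle\}_{k=1}^M$ of $\cQ$ forces the cosets $\omega_k S$ to be pairwise disjoint (up to sign), so $|WS| = M|S|$. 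The displayed quantity therefore equals $B_j$.

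In Case 2, where $\sigma_y^{\otimes n} \in WS \setminus \pm S$, the product $\sigma_y^{\otimes n}\zeta$ still fails to lie in $\pm S$, and the second branch of Theorem \ref{lem:shadow_main} applies, giving $D_j = f(j)$. Substituting $\zeta S = \pm S$ into the definition of $f(j)$ yields $W\sigma_y^{\otimes n}\zeta S = W\sigma_y^{\otimes n}S$, and hence $D_j = d'(\alpha,\beta)/|S|$ as claimed.

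The main obstacle is the careful handling of signs and phases: $\zeta$ is only specified up to stabilizer elements, so the crucial step is verifying that the hypothesis $\overline{|S\rangle} = \bm{i}^k|S\rangle$ genuinely forces $\zeta \in \pm S$ (in particular that $k$ must be even). Once this is established, the Pauli-set equalities $\zeta S = \pm S$ and $W\sigma_y^{\otimes n}\zeta S = W\sigma_y^{\otimes n}S$ are immediate, and the corollary becomes a direct specialization of Theorem \ref{lem:shadow_main} combined with the distance-enumerator interpretation of $\{B_j\}$ from Lemma \ref{lem:CWS_exp}.
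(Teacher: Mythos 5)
Your proposal is correct and follows the route the paper intends: the corollary is stated without its own proof as a direct specialization of Theorem~\ref{lem:shadow_main}, and you correctly observe that $\overline{\ket{S}}=\bm{i}^k\ket{S}$ forces $\zeta\in\pm S$ (so $\bm{i}^k=\pm1$), which collapses the branching condition to $\sigma_y^{\otimes n}\in\pm S$ and reduces $W\sigma_y^{\otimes n}\zeta S$ to $W\sigma_y^{\otimes n}S$, with the first case identified as $B_j$ via Lemma~\ref{lem:CWS_exp} and $\lvert WS\rvert=M\lvert S\rvert$.
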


\begin{example}
Consider the $((5, 2, 3; 0))$ stabilizer code \cite{NC00} with stabilizer group generated by $\sigma_{x} \otimes \sigma_{z} \otimes \sigma_{z} \otimes \sigma_{x} \otimes I$ and its cyclic shifts. The logical operators are $\sigma_{x}^{\otimes 5}$ and $\sigma_{z}^{\otimes 5}$. 
    
    As a CWS code, its CWS group $S$ includes the  stabilizers and  $\sigma_{z}^{\otimes 5}$ and the word operators are 
    $W = \{ I^{\otimes 5}, \sigma_{x}^{\otimes 5}\}$. 
    The state $\ket{S}$ is  as follows:
    \begin{equation*}
    \begin{aligned}
        |S \rangle = \frac{1}{4}\{&|00000\rangle + |10010\rangle +|01001\rangle + |10100\rangle\\
        &+ |01010\rangle -|11011\rangle -|00110\rangle -|11000\rangle\\
        &- |11101\rangle - |00011\rangle - |11110\rangle - |01111\rangle \\
        &- |10001\rangle - |01100\rangle -|10111\rangle +|00101\rangle\}.
    \end{aligned}
    \end{equation*}
    In this case, $\overline{|S\rangle} = |S \rangle$ and $\sigma_{y}^{\otimes n} \in \{\pm 1, \pm \bm{i}\} \times S$. 
    The Shor-Laflamme enumerator ${A_{j}}$ for the $((5, 2, 3; 0))$ code corresponds to the weight enumerator for the stabilizer group. We can directly count the weight enumeration to obtain the values of ${A_{j}}$.
Alternatively, we can use Theorem \ref{thm:A} to calculate ${A_{j}}$. Both methods provide us with the same results:
    \begin{equation*}
        A_{0} = 1, A_{1} = 0, A_{2} = 0, A_{3} = 0, A_{4} = 15, A_{5} = 0.
    \end{equation*}

    For the shadow enumerator $\{D_{j}\}$, we can use Lemma \ref{lem:shadow} or Theorem \ref{lem:shadow_main}
    and both methods give us
    \begin{equation*}
        D_{0} = 1, D_{1} = 0, D_{2} = 0, D_{3} = 30, D_{4} = 15, D_{5} = 18.
    \end{equation*}

\end{example}

    \begin{example}
Consider   the $((5, 6, 2; 0))$ CWS code \cite{RHSS97}, which can be formulated as a CWS code \cite{CSSZ09}. Its CWS group $S$ is generated by $\sigma_{z} \otimes \sigma_{x} \otimes \sigma_{z} \otimes I \otimes I$ and its cyclic shifts. The word operators are $I_{2^{5}}$, $\sigma_{z} \otimes \sigma_{z} \otimes I \otimes \sigma_{z} \otimes I$ and its cyclic shifts. The stabilizer state $\ket{S}$ is as follows:
    \begin{equation*}
    \begin{aligned}
        |S \rangle = \frac{1}{4\sqrt{2}}\{&|00000\rangle + |00111\rangle + |10011\rangle + |11001\rangle\\
        &+ |11100\rangle + |01110\rangle + |00101\rangle + |10010\rangle \\
        &+ |01001\rangle + |10100\rangle + |01010\rangle + |10000\rangle\\
        &+ |01000\rangle + |00100\rangle + |00010\rangle + |00001\rangle\\
        &- |00011\rangle - |10001\rangle - |11000\rangle - |01100\rangle\\
        &- |00110\rangle - |01111\rangle - |10111\rangle - |11011\rangle\\
        &- |11101\rangle - |11110\rangle - |10101\rangle - |11010\rangle\\
        &- |01101\rangle - |10110\rangle - |01011\rangle - |11111\rangle\}.
    \end{aligned}
    \end{equation*}
    In this case, $\overline{|S\rangle} = |S \rangle$ and $\sigma_{y}^{\otimes n} \notin \pm S$. Using Theorem \ref{thm:A} and Theorem \ref{lem:shadow_main}, we have
    \begin{equation*}
        A_{0} = 1, A_{1} = 0, A_{2} = 0, A_{3} = 0, A_{4} = \frac{5}{3}, A_{5} = \frac{8}{3},
    \end{equation*}
    and
    \begin{equation*}
        D_{0} = 0, D_{1} = 30, D_{2} = 60, D_{3} = 360, D_{4} = 420, D_{5} = 282.
    \end{equation*}
    Moreover, using Lemma \ref{lem:CWS_exp}, we  have
    \begin{equation*}
        B_{0} = 1, B_{1} = 0, B_{2} = 20, B_{3} = 50, B_{4} = 75, B_{5} = 46.
    \end{equation*}
    Alternatively, we can apply the MacWilliams identities and Lemma \ref{lem:shadow} to calculate the values of $\{A_{j}\}$ and $\{D_{j}\}$ and obtain the same answers.

\end{example}

\section{conclusions}
\label{sec:con}

In this paper, we have developed semidefinite programs for EA-CWS quantum codes to establish upper bounds on code size based on code length and minimum distance. These semidefinite constraints are rooted in triple distance relations within a set of undetectable errors. Significantly, our semidefinite constraints improve upon several upper bounds derived solely from linear constraints.

We have also shown that the semidefinite constraints can be strictly more powerful than the linear constraints derived from the MacWilliams identities. However, at present, we have not observed any improvements in the SDP bounds over the LP bounds for standard CWS codes or stabilizer codes, primarily because of the compensatory effect of the linear shadow identities.

Additionally, we introduced the concept of GS sets, expanding upon the notion of stabilizers within stabilizer codes. For EA-CWS codes, we have shown that the spectrum must be confined to a subset of ${0, \pm 1}$, enhancing our comprehension of quantum code structures.

The minimum distance of a quantum code can be upper bounded by the minimum weight of a set of undetectable errors. In CWS-type scenarios, we selected the word operators and the isotropic subgroup to define two sets, $WS_{I}$ and $S_{I}$, utilizing their specific distance enumerations to establish upper bounds on the  code size. However, it is worth exploring if there are more suitable candidates for this characterization.

Furthermore, the approach we employed to construct the semidefinite constraints hints at the potential for a generalized version of the SDP method in classical cases. This includes SDPs based on distances between multiple points \cite{GMS12} and semidefinite constraints derived from a split Terwilliger algebra \cite{TLY23}.

Both the linear constraints and semidefinite constraints are closely related to our understanding of weight enumerators. Thus we also delved into combinatorial explanations of the weight enumerator ${A_{j}}$ and the quantum shadow enumerators for CWS codes. Regarding the quantum shadow enumerators, they can be viewed as  the distance enumerator between  the set $WS$ and one of its coset. This structural insight may open up opportunities to establish additional constraints.

The MATLAB programs used to obtain the numerical results presented in this paper can be accessed at:
\begin{center}
\url{https://github.com/PinChiehTseng/SDP_quantum_code}
\end{center}

\begin{IEEEbiographynophoto}{Ching-Yi Lai}
	(M'14-SM'22)  received his MS in 2006 and BS in 2004 from National Tsing Hua University, Taiwan and completed his Ph.D. at the University of Southern California in 2013. Following   postdoctoral   positions at University of Technology Sydney and Academia Sinica, he joined the Institute of Communications Engineering at National Yang Ming Chiao Tung University, Hsinchu, Taiwan. In 2022, Dr. Lai was promoted to Associate Professor. 
Dr. Lai was honored with the Young Scholar Fellowship from the National Science and Technology Council, Taiwan. Additionally, he received the 2023 IEEE Information Society Taipei Chapter and IEEE Communications Society Taipei/Tainan Chapter Best Paper Award for Young Scholars. His research explores diverse areas, including quantum coding theory, quantum information theory, fault-tolerant quantum computation, and quantum cryptography.
\end{IEEEbiographynophoto}

\begin{IEEEbiographynophoto}{Pin-Chieh Tseng}
	received his MS in 2020 from National Tsing Hua University, Taiwan and BS in 2017 from National Cheng Kung University, Taiwan and is current a Ph.d student at National Yang Ming Chiao Tung University, Taiwan.
\end{IEEEbiographynophoto}

\begin{IEEEbiographynophoto}{Wei-Hsuan Yu}
 received his MS from National Taiwan Normal University and BS from National Tsing Hua University, Taiwan and completed his Ph.D. at the University of Maryland, College Park in 2014. Following postdoctoral positions at Michigan State University and ICERM at Brown University, he joined the Department of Mathematics at National Central University, Taoyuan, Taiwan in 2018. In 2021, Dr. Yu was promoted to Associate Professor. 
Dr. Yu was honored with the 4years Young Scholar Grant (2020-24) from the National Science and Technology Council, Taiwan. Additionally, he received the 2020 National Center for Theoretical Sciences (NCTS) Young Theorist Award. His research explores in discrete geometry and combinatorics. 
\end{IEEEbiographynophoto}
\end{document}